\def\BibTeX{{\rm B\kern-.05em{\sc i\kern-.025em b}\kern-.08emT\kern-.1667em\lower.7ex\hbox{E}\kern-.125emX}}
\let\OLDthebibliography\thebibliography
\renewcommand\thebibliography[1]{
	\OLDthebibliography{#1}
	\setlength{\parskip}{3pt}
	\setlength{\itemsep}{0pt plus 0.3ex}
}
\newtheorem{theorem}{Theorem}[section]
\newtheorem{definition}{Definition}[section]
\newcommand{\cf}{\hbox{\emph{cf.}}\xspace}
\newcommand{\etal}{\hbox{\emph{et al.}}\xspace}
\newcommand{\eg}{\hbox{\emph{e.g.}}\xspace}
\newcommand{\ie}{\hbox{\emph{i.e.}}\xspace}
\newcommand{\st}{\hbox{\emph{s.t.}}\xspace}
\newcommand{\wrt}{\hbox{\emph{w.r.t.}}\xspace}
\newcommand{\etc}{\hbox{\emph{etc}}\xspace}
\newcommand{\vs}{\hbox{\emph{vs.}}\xspace}
\definecolor{light-gray}{gray}{0.8}
\newcommand{\mynil}{{\mathit{nil}}}
\renewcommand\footnotetextcopyrightpermission[1]{} 
\begin{document}

%
\title{Towards Efficient Data-flow Test Data Generation}

%

\author{Ting Su}
\affiliation{%
	\institution{Nanyang Technological University}
	\city{Singapore}
	\country{Singapore}}
\email{suting@ntu.edu.sg}

\author{Chengyu Zhang}
\author{Yichen Yan}
\affiliation{%
  \institution{East China Normal University}
  \city{Shanghai}
  \country{China}
}

\author{Lingling Fan}
\affiliation{%
  \institution{Nanyang Technological University}
  \city{Singapore}
  \country{Singapore}}

\author{Geguang Pu}
\affiliation{%
	\institution{East China Normal University}
	\city{Shanghai}
	\country{China}
}

\author{Yang Liu}
\affiliation{%
	\institution{Nanyang Technological University}
	\city{Singapore}
	\country{Singapore}}

\author{Zhendong Su}
\affiliation{%
	\institution{ETH Zurich}
	\city{Zurich}
	\country{Switzerland}
}

%
\renewcommand{\shortauthors}{Ting Su, et al.}

\begin{abstract}
Data-flow testing (DFT) aims to detect potential data interaction anomalies by focusing on the points at which variables receive values and the points at which these values are used. Such test objectives are referred as \emph{def-use pairs}.
However, the complexity of DFT still overwhelms the testers in practice.
To tackle this problem, we introduce a hybrid testing framework for data-flow based test generation: (1) The core of our framework is symbolic execution (SE), enhanced by a novel guided path exploration strategy to improve testing performance; and (2) we systematically cast DFT as reachability checking in software model checking (SMC) to complement SE, yielding practical DFT that combines the two techniques' strengths.
We implemented our framework for C programs on top of the state-of-the-art symbolic execution engine KLEE and instantiated with three different software model checkers.
Our evaluation on the 28,354 def-use pairs collected from 33 open-source and industrial program subjects shows (1) our SE-based approach can improve DFT performance by 15$\sim$48\% in terms of testing time, compared with existing search strategies; and (2) our combined approach can further reduce testing time by 20.1$\sim$93.6\%, and improve data-flow coverage by 27.8$\sim$45.2\% by eliminating infeasible test objectives. 
Compared with the SMC-based approach alone, our combined approach can also reduce testing time by 19.9\%$\sim$23.8\%, and improve data-flow coverage by 7$\sim$10\%.
This combined approach also enables the cross-checking of each component for reliable and robust testing results.
We have made our testing framework and benchmarks publicly available to facilitate future research.
\end{abstract}

\maketitle

\thispagestyle{empty}

\section{Introduction}

It is widely recognized that white-box testing, usually applied at unit testing level, is one of the most important
activities to ensure software quality~\cite{0017273}.
In this process, the testers design inputs to exercise program paths in the code, and validate the outputs with specifications.
Code coverage criteria are popular metrics to guide such test selection.
For example, control-flow based criteria (\eg, statement, branch coverage) require to cover the specified program elements, \eg, statements, branches and conditions, at least once~\cite{Zhu97}.
In contrast, data-flow based criteria~\cite{RappsW85,ClarkePRZ89,HarroldR94} focus on the flow of data, and aim to detect potential data interaction anomalies. It verifies the correctness of variable definitions by observing the values at the corresponding uses.

However, several challenges exist in generating data-flow based test cases: 
(1) \emph{Few data-flow coverage tools exist}. 
To our knowledge, \texttt{ATAC}~\cite{Horgan91,Horgan92} is the only publicly available tool, developed two decades ago, to measure data-flow coverage for \texttt{C} programs.
However, there are plenty of tools for control-flow criteria~\cite{yang2009survey}.
(2) \emph{The complexity of identifying data flow-based test data overwhelms testers}.
Test objectives \wrt data-flow testing are much more than those of control-flow criteria; more efforts are required to satisfy a def-use pair than just covering a statement or branch, since the test case needs to reach a variable definition first and then the corresponding use.
(3) \emph{Infeasible test objectives} (\ie, the paths from the variable definition to the use are infeasible) and \emph{variable aliases} make data-flow testing more difficult. 

To aid data-flow testing, many testing techniques have been proposed in recent years. For example, search-based approach~\cite{Girgis05,GhidukHG07,VivantiMGF13,DenaroMPV15} uses genetic algorithms to guide test generation to cover the target def-use pairs. It generates an initial population of test cases, and iteratively applies mutation and crossover operations on them to optimize the designated fitness function. Random testing~\cite{Girgis05,GirgisS14} generates random test inputs or random paths to cover def-use pairs. Some work uses the idea of collateral coverage~\cite{Malevris05,Santelices07}, \ie, the relation between data-flow criteria and the other criteria (\eg, branch coverage), to infer data-flow based test cases. However, these existing approaches are either \emph{inefficient} (\eg, random testing may generate a large number of redundant test cases) or \emph{imprecise} (\eg, genetic algorithms and collateral coverage-based approach may not be able to identify infeasible test objectives).
 
The aforementioned situations underline the importance of an automated, effective data-flow testing technique, which can efficiently generate test cases for target def-use pairs as well as detect infeasible ones therein.
To this end, we introduce \emph{a combined approach} to automatically generate data-flow based test data. It synergistically combines two techniques, \ie, dynamic symbolic execution and counterexample-guided abstraction refinement-based model checking.
At the high level, given the program under test, our approach (1) \emph{outputs test cases for feasible test objectives},
and (2) \emph{eliminates infeasible test objectives --- without any false positives}.

\underline{D}ynamic \underline{s}ymbolic \underline{e}xecution (DSE)\footnote{Throughout the paper, we will use the term \emph{symbolic execution} to represent the modern symbolic execution technique we adopted in this work without any ambiguity. In fact, the modern symbolic execution techniques include two variants, \ie, \emph{concolic testing} and \emph{execution-generated testing}, which are collectively referred as \emph{dynamic symbolic execution}~\cite{CadarS13}.}~\cite{CadarS13} is a widely-accepted and effective approach for automatic test case generation. It intertwines classic symbolic execution~\cite{symbolicexecution,Clarke76} and concrete execution, and explores as many program paths as possible to generate test cases by solving path constraints.
As for \underline{c}ounter\underline{e}xample-\underline{g}uided \underline{a}bstraction \underline{r}efinement (CEGAR)-based model checking~\cite{BallR02,HenzingerJMS02,ChakiCGJV04,jhala2009software}, given the program source code and a temporal safety specification, it either statically proves that the program satisfies the specification, or returns a counterexample path to demonstrate its violation.
This technique has been used to automatically verify safety properties of OS device drives~\cite{BallR02,Beyer07,BeyerK11}, as well as test generation \wrt statement or branch coverage~\cite{Beyer04,FraserWA09}.

Although symbolic execution has been applied to enforce various coverage criteria (\eg, statement, branch, logical, boundary value and mutation testing)~\cite{Lakhotia09,xietao2010,ZhangXZTHM10,JamrozikFTH12,SuPFHYJZ14},
little effort exists to adapt symbolic execution for data-flow testing.
To counter the path explosion problem, we designed a \emph{cut-point guided path exploration strategy} to cover target def-use pairs as quickly as possible.
The key intuition is to find a set of critical program locations that must
be traversed through in order to cover the pair.
By following these points during the exploration, we can narrow the path search space.
In addition, with the help of path-based exploration, we can also more easily and precisely detect definitions due to variable aliasing.
Moreover, we introduce a simple, powerful encoding of data flow testing using CEGAR-based model checking to complement our SE-based approach:
(1) We show how to encode any data-flow test objective in the program under test and systematically evaluate the technique's practicality; and (2) we describe a combined approach that combines the relative strengths of the SE and CEGAR-based approaches.
An interesting by-product of this combination is to let the two independent approaches cross-check each other's results for correctness and consistency.

In all, this paper makes the following contributions:
\begin{itemize}
	\item It designs a symbolic execution-based testing framework and enhances it with an efficient guided path search strategy to quickly achieve data-flow testing. To our knowledge, our work is the first to adapt symbolic execution for data-flow testing.

	\item It describes a simple, effective reduction of data-flow testing into reachability checking in software model checking to complement our SE-based approach. Again to our knowledge, we are the first to systematically adapt CEGAR-based approach to aid data-flow testing.
	
	\item It implements the SE-based data-flow testing approach, and conducts empirical evaluation on both open-source and industrial C programs. The results show that the SE-based approach is both efficient and effective.
	
	\item It also demonstrates that the CEGAR-based approach can effectively complement the SE-based approach by further reducing testing time and identifying infeasible test objectives. In addition, these two approaches
	can cross-check each other to validate the correctness and effectiveness of both techniques.

\end{itemize}


The original idea of this combined data-flow testing approach was presented in~\cite{SuFPHS15}.
In this article, in addition to providing more technical details and examples, we have made several significant extensions: 
(1) We took substantial efforts to re-implement our SE-based approach on the state-of-the-art symbolic execution engine KLEE~\cite{KLEE}, an execution-generated testing tool (Section~\ref{sec:tool}). 
Previously, the SE-based approach was implemented on our own concolic testing tool CAUT~\cite{WangYSPDH09,SuPFHYJZ14,SuFPHS15}, which was capable of evaluating only 6 program subjects.
Due to the design and architecture differences between KLEE and CAUT, the implementation is not straightforward. 
But our efforts bring several benefits: 
First, it provides a uniform platform to investigate the effectiveness of our exploration strategy with existing ones.
Second, it provides a robust platform to enable extensive evaluation of real-world subjects and better integration with the CEGAR-based approach.
Third, this extension of KLEE could benefit industrial practitioners and also academic researchers to investigate data-flow testing.
Moreover, we optimized the cut-point guided search strategy with several exploration heuristics, and presented the algorithms in this new scenario (Section~\ref{sec:overview} and~\ref{sec:se_approach}).
(2) In addition to the realization of our reduction approach on the CEGAR-based model checking technique, we further investigated the feasibility of this approach on another popular software model checking technique, \ie, bounded model checking (BMC)~\cite{ClarkeKL04}. We extensively evaluated the practicality of both the CEGAR-based approach and BMC-based approach for data-flow testing. Although the BMC-based technique in general cannot eliminate infeasible test objectives as certain, we find it still can serve as (i) a heuristic-criterion to identify hard-to-cover (probably infeasible) test objectives for better prioritizing testing efforts and (ii) a lightweight SMC-based approach for specific types of programs (Section~\ref{sec:study2} and~\ref{sec:study3}).
(3) We further gave the proofs of the correctness of cut-point guided search strategy in symbolic execution and the soundness of our combined approach (Section~\ref{sec:proof}). 
(4) We dedicatedly and rigorously setup a benchmark repository for data-flow testing, and extensively evaluated our approach on 28,354 def-use pairs from 33 program subjects with various data-flow usage scenarios. The benchmarks include 7 non-trivial subjects from previous DFT research work~\cite{Frankl93,MathurW94,HutchinsFGO94,Marre03,GhidukHG07,eler2014covering}, 7 subjects from SIR~\cite{SIR}, 16 subjects from SV-COMP~\cite{SV-COMP}, and 3 industrial projects from our industrial research partners~\cite{PengHSG13,SuPFHYJZ14,MiaoPYSB0CX16,Zhang2018} (Section~\ref{sec:setup}). Based on these subjects, we gave much more detailed evaluation results and analysis (Section~\ref{sec:study1},~\ref{sec:study2} and~\ref{sec:study3}).
In constrast, our original work~\cite{SuFPHS15} only evaluates 6 subjects. 
(5) We included a detailed analysis and discussion on the limitations of our technique, our experience of applying data-flow testing, and the applications for other testing scenarios (Section~\ref{sec:discussion}).
(6) To enable the replication of our results and benefit future research on data-flow testing, we have made all the artifacts (benchmarks, tools, scripts) publicly available at~\cite{dft-artifact}.


The paper is organized as follows. 
Section~\ref{sec:preliminary} defines the problem of data-flow testing and gives necessary background.
Section~\ref{sec:overview} gives an example to motivate and illustrate our testing approach.
Section~\ref{sec:approach} details our approach and algorithms.
Section~\ref{sec:proof} proves the correctness and soundness of our approach.
Section~\ref{sec:tool} gives the details of our tool design and implementation.
Section~\ref{sec:evaluation} presents the evaluation results and analysis.
Section~\ref{sec:discussion} gives a detailed discussion of the limitations of our technique, our experience of applying data-flow testing and the applications for other testing scenarios.
Section~\ref{sec:related} surveys the related work.
Section~\ref{sec:conclusion} concludes the paper.
\section{Problem Definition, Preliminaries and Challenges}
\label{sec:preliminary}

\subsection{Problem Definition}

\begin{definition}[Program Paths]
\label{def:program_path} 
Two kinds of program paths, i.e., control flow paths and execution paths
are distinguished during data-flow testing. 
Control flow paths are the paths from the control flow graph of the program under test,
which abstract the flow of control.
Execution paths are driven by concrete program inputs, which 
represent dynamic program executions.
Both of them can be represented as a sequence of control points (denoted by line numbers),
e.g., $l_1, \ldots, l_i, \ldots, l_n$.
\end{definition}

\begin{definition}[Def-use Pair]
\label{def:dua} 
The test objective of data-flow testing is referred as a def-use pair, denoted by $du(l_d, l_u, v)$. 
Such a pair appears when there exists a control flow path that starts from 
the variable definition statement $l_d$ (or the def statement in short), 
and then reaches the variable use statement $l_u$ (or the use statement in short),
but no statements on the subpaths from $l_d$ to $l_u$ redefine the variable $v$.
\end{definition}

In particular, two kinds of def-use pairs are distinguished. For a def-use pair $(l_d, l_u, v)$, if the variable $v$ is used in a computation statement at $l_u$, the pair is a \emph{computation-use} (\emph{c-use} for short), denoted by $dcu(l_d, l_u, v)$. If $v$ is used in a conditional statement (\eg, an \texttt{if} or \texttt{while} statement) at $l_u$, the pair is a \emph{predicate use} (\emph{p-use} for short). At this time, two def-use pairs appear and can be denoted by $dpu(l_d, (l_u, l_t), v)$ and $dpu(l_d, (l_u, l_f), v)$, where $(l_u, l_t)$ and $(l_u, l_f)$ represents the \emph{true} and the \emph{false} edge of the conditional statement, respectively.

\begin{definition}[Data-flow Testing]
Given a def-use pair $du(l_d, l_u, v)$ in program $P$ under test, the goal of data-flow testing\footnote{In this paper, we focus on the problem of \emph{classic data-flow testing}~\cite{GhidukHG07,VivantiMGF13}, \ie, finding an input for a given def-use pair at one time. We do not consider the case where some pairs can be accidentally covered when targeting one pair, since this has already been investigated in collateral coverage-based approach~\cite{Marre96,Marre03}. We will discuss more in Section~\ref{sec:discussion}.} is to find an input $t$ that induces an execution path $p$ that covers the variable definition statement at $l_d$,
and then covers variable use statement at $l_u$, but without covering any redefinition statements w.r.t 
$v$, i.e., the subpath from $l_d$ to $l_u$ is a def-clear path.
The requirement to cover all def-use pairs at least once is called all def-use coverage criterion\footnote{We follow the all def-use coverage defined by Rapps and Weyuker~\cite{Rapps82,RappsW85}, since almost all of the literature that followed uses or extends this definition, as revealed by a recent survey~\cite{SuWMPHCS17}.} in data-flow testing.
\label{defn:dft} 
\end{definition}

In particular, for a c-use pair, $t$ should cover $l_d$ and $l_u$; for a p-use pair, $t$ should cover $l_d$ and its true or false branch, \ie, $(l_u, l_t)$ and $(l_u, l_f)$, respectively.

\subsection{Symbolic Execution}
\label{sec:se_process}
Our data-flow testing approach is mainly built on the symbolic execution technique.
The idea of symbolic execution (SE) was initially described in~\cite{symbolicexecution,Clarke76}. Recent significant advances in the constraint solving techniques have made SE possible for testing real-world program by systematically exploring program paths~\cite{CadarS13} .
Specifically, two variants of modern SE techniques exist, \ie, \emph{concolic testing} (implemented by DART~\cite{DART}, CUTE~\cite{CUTEC}, CREST~\cite{CREST}, CAUT~\cite{SuPFHYJZ14}, \etc) and \emph{execution-generated testing} (implemented by EXE~\cite{CadarGPDE06} and KLEE~\cite{KLEE}), which mix concrete and symbolic execution together to improve scalability.
In essence, SE uses \emph{symbolic values} in place of \emph{concrete values} to represent input variables, and represent other program variables by the \emph{symbolic expressions} in terms of symbolic inputs.
Typically, SE maintains a \emph{symbolic state} $\sigma$, which maps variables to (1) the symbolic expressions over program variables, and (2) a symbolic path constraint $pc$ (a quantifier-free first order formula in terms of input variables), which characterizes the set of input values that can execute a specific program execution path $p$.
Additionally, $\sigma$ maintains a program counter that refers to the current instruction for execution.
At the beginning, $\sigma$ is initialized as an empty map and $pc$ as \emph{true}.
During execution, SE updates $\sigma$ when an assignment statement is executed; and forks $\sigma$ when a conditional statement (\eg, \textbf{if}($e$) $s_1$ \textbf{else} $s_2$) is executed. Specifically, SE will create a new state $\sigma'$ from the original state $\sigma$, and updates the path constrain of $\sigma'$ as $pc \wedge \neg(e)$, while updates that of $\sigma$ as $pc \wedge (e)$. $\sigma$ and $\sigma'$, respectively, represent the two program states that fork at the \emph{true} and \emph{false} branch of the conditional statement. By querying the satisfiability of updated path constraints, SE decides which one to continue the exploration. When an \emph{exit} or certain runtime error is encountered, SE will terminate on that statement and the concrete input values will be generated by solving the corresponding path constraint. 

\vspace*{3pt}
\noindent\emph{\textbf{Challenges}}. \quad
Although SE is an effective test case generation technique for traditional coverage criteria, it faces two challenges when applied in data-flow testing:
\begin{enumerate}
	\item The SE-based approach by nature faces the notorious path-explosion problem. Despite the existence of many generic search strategies, it is challenging, in reasonable time, to find an execution path from the whole path space to cover a given pair.
	\item The test objectives from data-flow testing include \emph{feasible} and \emph{infeasible} pairs. 
	A pair is \emph{feasible} if there exists an execution path which can pass through it. 
	Otherwise it is \emph{infeasible}.
	Without prior knowledge about whether a target pair is feasible or not, the SE-based approach may spend a large amount of time, in vain, to cover an infeasible def-use pair. 
\end{enumerate}

Section~\ref{sec:overview} will give an overview of our approach, and illustrate how our combined approach tackles these two challenges via an example in Fig.~\ref{fig:example}.

\begin{figure}
\begin{center}
	\includegraphics[width=0.6\textwidth]{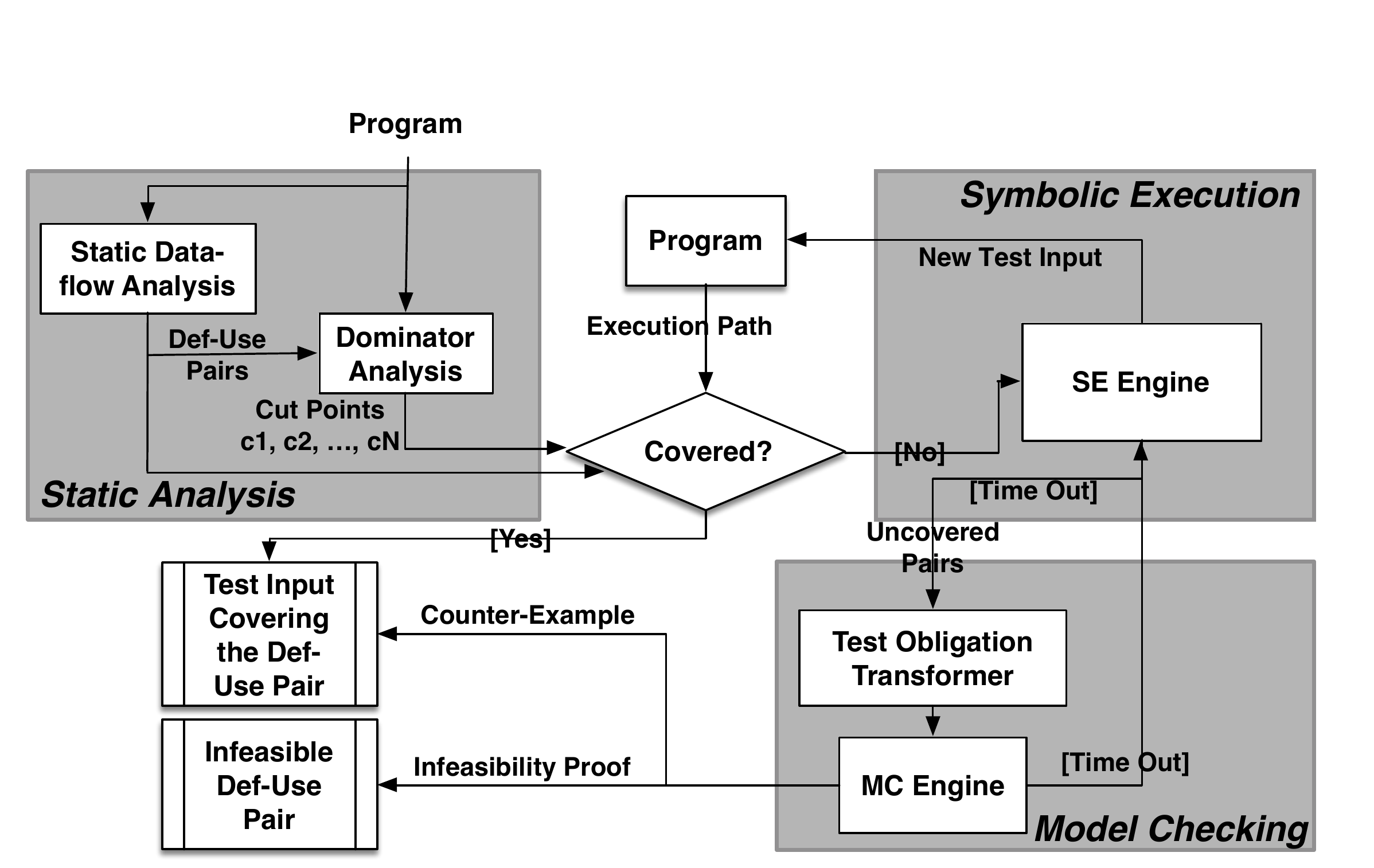}
\end{center}
\caption{Workflow of the combined approach for data-flow testing, which combines symbolic execution and software model checking (the CEGAR-based model checking in particular).}
\label{fig:dft_workflow}
\end{figure}

\section{Approach Overview}
\label{sec:overview}

Fig.~\ref{fig:dft_workflow} shows the workflow of our combined approach for data-flow testing.
It takes as input the program source code, and follows the three steps below to achieve automated, efficient DFT.
(1) The \emph{static analysis} module uses data-flow analysis to identify def-use pairs, and adopts dominator analysis to analyze the sequence of cut points for each pair (see Section~\ref{sec:static_analysis}).
(2) For each pair, the \emph{symbolic execution} module adopts the cut point-guided search strategy to efficiently find an execution path that could cover it within a specified time bound (see Section~\ref{sec:se_approach}).
(3) For the remaining uncovered (possibly infeasible) pairs, the \emph{software model checking} module encodes the test obligation of each def-use pair into the program under test, and enforces reachability checking (also within a time bound) on each of them. The model checker can eliminate infeasible ones with proofs and may also identify feasible ones (see Section~\ref{sec:mc_approach}). 
If the testing resource permits, the framework can iterate between (2) and (3) by lifting the time bound to continue test those remaining uncovered pairs. By this way, our framework outputs test cases for feasible test objectives, and weeds out infeasible ones by proofs --- without any false positives.

\vspace{3pt}
\subsection{Illustrative Example} 
Fig.~\ref{fig:example} shows an example program \emph{power}, 
which accepts two integers $x$ and $y$, and outputs the result of $x^y$.
The right sub-figure shows the control flow graph of \emph{power}.

\vspace{3pt}
\noindent{\emph{\textbf{Step 1: Static Analysis}}}. 
For the variable $res$ (it stores the computation result of $x^y$), the static analysis procedure can find two typical def-use pairs with their cut points:
\begin{align}
du_1&=(l_8, l_{17}, res)\\
du_2&=(l_8, l_{18}, res)
\end{align}

Below, we illustrate how our combined approach can efficiently achieve DFT on
these two def-use pairs ---
SE can efficiently cover the feasible pair $du_1$, and 
CEGAR can effectively conclude the infeasibility of $du_2$.

\lstset{ %
    language=C++,                
    basicstyle=\footnotesize,       
    numbers=left,                   
    numberstyle=\ttfamily\scriptsize,      
    stepnumber=1,                   
    numbersep=5pt,                  
    backgroundcolor=\color{white},  
    showspaces=false,               
    showstringspaces=false,         
    showtabs=false,                 
    frame=false,           
    tabsize=2,          
    captionpos=b,           
    xleftmargin=2em,xrightmargin=2em, aboveskip=1em,
    breaklines=true,        
    breakatwhitespace=false,    
   escapeinside={(*@}{@*)},
}

\lstset{language=C, numbers=left, xleftmargin=2em, xrightmargin=-2em,  numberstyle=\scriptsize, 
 escapeinside={@}{@}, basicstyle=\footnotesize\ttfamily, numbersep=8pt, commentstyle=\itshape\scriptsize}

\begin{figure}
  \begin{minipage}[b]{.4\linewidth}
\begin{lstlisting}
double power(int x,int y){
    int exp; @\label{l:entry}@
    double res;
    if (y>0) @\label{l:use1_y}@
      exp = y;@\label{l:def1_exp_use2_y}@
    else
      exp = -y;@\label{l:def2_exp_use3_y}@
    res=1; @\label{l:def1_res}@
    while (exp!=0){ @\label{l:use1_exp}@
      res *= x; @\label{l:def2_res_use1_res_use1_x}@
      exp -= 1; @\label{l:def3_exp_use2_exp}@
    }
    if (y<=0) @\label{l:use3_y}@
      if(x==0)
        @{\bf{abort}}@;
      else 
        return 1.0/res; @\label{l:def3_res_use2_res}@
    return res; @\label{l:use3_res}@
}\end{lstlisting}%
  \end{minipage}
  \begin{minipage}[b]{.3\linewidth}
\includegraphics[width=\textwidth]{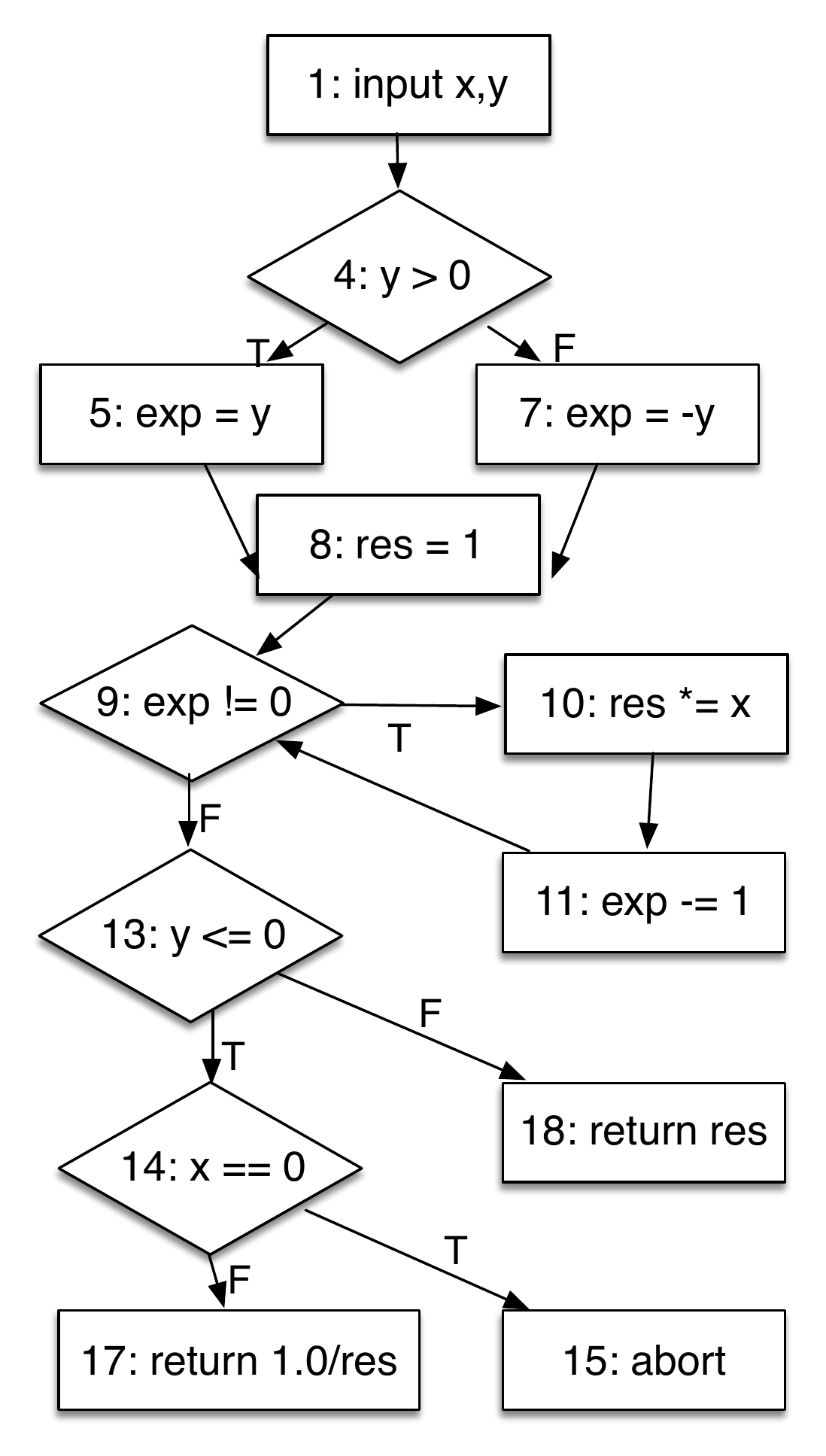}
\label{fig:example_right}
  \end{minipage}
  \caption{An example: \emph{power}.}
\label{fig:example}
\end{figure}

\begin{table*}[!t]
\footnotesize
\newcommand{\tabincell}[2]{\begin{tabular}{@{}#1@{}}#2\end{tabular}}
\renewcommand{\arraystretch}{1}
\caption{Running steps of the enhanced symbolic execution approach for data-flow testing.}
\label{table:table_enhanced_se}
\centering
\resizebox{\linewidth}{!}{
\begin{tabular}{|c||c|c|c|c|}
\hline
Steps &Pending Path &Priority Queue &Selected Path &Path Constraint (\emph{pc})\\
\hline

1
& \textbf{1}: $l_{4T}$, \textbf{2}: $l_{4F}$
& $(l_4,2)^{\textbf{1}}$, $(l_4,2)^{\textbf{2}}$
& \textbf{1}
& $y>0$\\
\hline

2
& \tabincell{c}{\textbf{2}: $l_{4F}$, \\\textbf{3}: $l_{4T}, l_{9T}$, \textbf{4}: $l_{4T}, l_{9F}$}
& \tabincell{c}{ $(l_9,1)^\textbf{4}$, \\$(l_4,2)^\textbf{2}$, $(l_9,4)^\textbf{3}$}
& \textbf{4}
& $y>0 \wedge y==0$ \\
\hline

3
&  \tabincell{c}{\textbf{2}: $l_{4F}$, \textbf{3}: $l_{4T}, l_{9T}$}
&  \tabincell{c}{$(l_4,2)^\textbf{2}$, $(l_9,4)^\textbf{3}$}
& \textbf{3}
& $y>0 \wedge y!=0$ \\
\hline

4
& \tabincell{c}{\textbf{2}: $l_{4F}$, \\\textbf{5}: $l_{4T}, l_{9T}, l_{9T}$, \textbf{6}: $l_{4T}, l_{9T}, l_{9F}$ }
& \tabincell{c}{$(l_4,2)^\textbf{2}$, \\$(l_9,1)^\textbf{5}$, $(l_9,1)^\textbf{6}$}
& \tabincell{c}{prune \textbf{5,6}, \\select \textbf{2}}
& $y\leq0$\\
\hline

5
& \textbf{7}: $l_{4F}, l_{9T}$, \textbf{8}: $l_{4F}, l_{9F}$
& $(l_9,4)^\textbf{7}$, $(l_9,1)^\textbf{8}$
& \textbf{8}
& $y<0 \wedge y!=0$\\
\hline

6
& \tabincell{c}{\textbf{7}: $l_{4F}, l_{9T}$, \\\textbf{9}: $l_{4F}, l_{9F}, l_{13T}$, \textbf{10}: $l_{4F}, l_{9F}, l_{13F}$}
& \tabincell{c}{$(l_{13},1)^\textbf{9}$, \\$(l_9,4)^\textbf{7}$, $(l_{13},\infty)^\textbf{10}$}
& \textbf{9}
& $y\leq0 \wedge y!=0$\\
\hline

7
& \tabincell{c}{\textbf{7}: $l_{4F}, l_{9T}$, \textbf{10}: $l_{4F}, l_{9F}, l_{13F}$ \\\textbf{11}: $l_{4F}, l_{9F}, l_{13T}, l_{14T}$, \textbf{12}: $l_{4F}, l_{9F}, l_{13T}, l_{14F}$}
&\tabincell{c}{$(l_{13},1)^\textbf{12}$, \\$(l_9,4)^\textbf{7}$, $(l_{13},\infty)^\textbf{10}$, $(l_{14},\infty)^\textbf{11}$}
& \textbf{12}
& $y==0 \wedge x!=0$\\
\hline

\end{tabular}}
\end{table*}

\vspace*{3pt}
\noindent{\emph{\textbf{Step 2: SE-based Data-flow Testing}}}. \quad
When SE is used to cover $du_1$, assume under the classic depth-first search (DFS) strategy~\cite{CUTEC,DART,KLEE,PEX,CREST,SuPFHYJZ14} the \emph{true} branches of the new execution states are always first selected, we can get an execution path $p$ after unfolding the \texttt{while} loops $n$ times.
\begin{equation}\label{eq:dfs_path}
 p=l_4, l_5, l_8, \underbrace{l_9, l_{10}, l_{11},\mbox{  } l_9,l_{10},l_{11}, \ldots }_{\text{repeated}~ n ~ \text{times}}, l_9, l_{13}, l_{14}, l_{15}
\end{equation}
Here $p$ already covers the definition statement (at $l_8$) \wrt the variable $res$.
In order to cover the use statement (at $l_{17}$), SE will exhaustively execute program
paths by exploring the remaining unexecuted branch directions.
However, the \emph{path (state) explosion problem} --- hundreds of branch directions exist (including
those branches from the new explored paths) --- will drastically slow down
data-flow testing.

To mitigate this problem, the key idea of our approach is to reduce unnecessary path exploration and provide more guidance during execution.
To achieve this, we designed a novel \emph{cut-point guided search algorithm} (CPGS) to enhance SE, which leverages
several key elements to prioritize the selection of ESs:
(1) \emph{a guided search algorithm}, which leverages two metrics: (i) \emph{cut points}, a sequence of control points that must be traversed through for any paths that could cover the target pair.
For example, the cut points of $du_1$ are \{$l_4$, $l_8$, $l_9$, $l_{13}$, $l_{14}$, $l_{17}$\}. These critical points are used as intermediate goals during the search to narrow down the exploration space of SE. 
(ii) \emph{instruction distance}, the distance between an ES and a target search goal in terms of number of program instructions on the control flow graph. Intuitively, an ES with closer (instruction) distance toward the goal can reach it more quickly. For example, when SE reaches $l_9$, it can fork two execution states, \ie, following the \emph{true} and the \emph{false} branches. If our target goal is to reach $l_{13}$, the \emph{false} branch will be prioritized since it has 1-instruction distance toward $l_{13}$, while the opposite branch has 3-instruction distance.
More specially, CPGS is enhanced with (2) \emph{a backtrack strategy} based on the number of executed instructions, which reduces the likelihood of trapping in tight loops; and (3) \emph{a redefinition path pruning technique}, which detects and removes redundant ESs.

Table~\ref{table:table_enhanced_se} shows the steps taken by our cut-point guided search algorithm to cover $du_1$.
At the beginning, SE forks two ESs for the \texttt{if} statement at $l_4$, which produces two \emph{pending paths}\footnote{An pending path indicates a not fully-explored path (corresponding to an unterminated state).}, \ie, $l_{4T}$ and $l_{4F}$\footnote{We use the line number followed by $T$ or $F$ to denote the \emph{true} or \emph{false} branch of the \texttt{if} statement at the corresponding line.}. In detail, we maintain a tuple $(c,d)^\textbf{i}$ that records the two aforementioned metrics for each pending path $\textbf{i}$ in a priority queue, where $c$ is the deepest covered cut point, and $d$ is the shortest distance between the corresponding ES and the next target cut point. In each step, we choose the pending path $\textbf{i}$ with the optimal value $(c,d)$. For example, in Step \textbf{1}, Path \textbf{1} and Path \textbf{2} have the same values $(l_4, 2)$, and thus we randomly select one path, \eg, Path \textbf{1}. 

Later, in Step \textbf{2}, Path \textbf{1} produces two new pending paths, Path \textbf{3} and Path \textbf{4}. We choose Path \textbf{4} since it has the best value: it has sequentially covered the cut points \{$l_4$, $l_8$, $l_9$\}, and it is closer to the next cut point $l_{13}$ than Path \textbf{3} on the control flow graph, so it is more likely to reach $l_{13}$ more quickly. However, its $pc$ is unsatisfiable. As a result, we give up exploring this pending path, and choose Path \textbf{3} (because it covers more cut points than Path \textbf{1}) in the next Step \textbf{3}, which induces Path \textbf{5} and Path \textbf{6}. At this time, our algorithm detects the variable $res$ is redefined at $l_{10}$ on Path \textbf{5} and Path \textbf{6}, according to the definition of DFT, it is useless to explore these two paths. So, Path \textbf{5} and Path \textbf{6} are pruned. This \emph{redefinition path pruning} technique can rule out these invalid paths to speed up DFT. Note despite only two pending paths are removed in this case, a number of potential paths have actually been prevented from execution (see the example path in (\ref{eq:dfs_path})), which can largely improve the performance of our search algorithm. 

We choose the only remaining Path \textbf{2} to continue the exploration, which produces Path \textbf{7} and Path \textbf{8} in Step \textbf{5}. Again, we choose Path \textbf{8} to explore, which induces Path \textbf{9} and \textbf{10} in Step \textbf{6}. Here, for Path \textbf{10}, due to it cannot reach the next target point $l_{14}$, its distance is set as $\infty$. As last, Path \textbf{9} is selected, and our algorithm finds Path \textbf{12} which covers $du_1$, and by solving its path constraint $y==0\wedge x!=0$, we can get one test input, \eg, $t=(x\mapsto 1, y\mapsto 0)$, to satisfy the pair. The above process is enforced by the cut-point guided search, which only takes 7 steps to cover $du_1$.
As we will demonstrate in Section~\ref{sec:evaluation}, the cut point-guided search strategy is more effective for data-flow testing than the existing state-of-the-art search algorithms.

\vspace*{3pt}
\noindent{\emph{\textbf{Step 3: CEGAR-based Data-flow Testing}}}. \quad
In data-flow testing, classic data-flow analysis techniques~\cite{Harrold94,Pande94,Chatterjee99} statically identify def-use pairs by analyzing data-flow relations.
However, due to its conservativeness and limitations, infeasible pairs may be included,
which greatly affects the effectiveness of SE for DFT.
For example, the pair $du_2$ is identified as a def-use pair since there exists a def-clear control-flow path (\ie, $l_8, l_9, l_{13}, l_{18}$) that can start from the variable definition (\ie, $l_8$) and reach the use (\ie, $l_{18}$). However,
$du_2$ is infeasible (\ie, no test inputs can satisfy it):
If we want to cover its use statement at $l_{18}$, we cannot take the true branch
of $l_{13}$, so $y>0$ should hold.
However, if $y>0$, the variable $exp$ will be assigned a positive value at $l_5$ by
taking the true branch of $l_4$, and 
the redefinition statement at $l_{10}$ \wrt the variable $res$ will be executed.
As a result, such a path that covers the pair and avoids the redefinition at the same time
does not exist, and $du_2$ is an infeasible pair.
It is rather difficult for SE to conclude the feasibility unless it checks all program paths, which however is almost impossible due to infinite paths in real-world programs.

\begin{figure}
\begin{tabular}{c}
\begin{lstlisting}
double power(int x, int y){
  @{\highlight{bool cover\_flag = false;}}@
  int exp; @\label{l:entry}@
  double res;
  ...
  res=1; @\label{l:def1_res}@
  @{\highlight{cover\_flag = true;}}@
  while (exp!=0){ @\label{l:use1_exp}@
     res *= x; @\label{l:def2_res_use1_res_use1_x}@
     @{\highlight{cover\_flag = false;}}@
     exp -= 1; @\label{l:def3_exp_use2_exp}@
   }
   ...
  @{\highlight{if(cover\_flag){ check\_point(); }}}@
   return res; @\label{l:use3_res}@
}
\end{lstlisting}%
\end{tabular} 
  \caption{The transformed function \emph{power} with the test requirement encoded in highlighted statements.}
\label{fig:power2}
\end{figure}

To counter the problem, our key idea is to reduce the data-flow testing problem into the
path reachability checking problem in software model checking.
We encode the test obligation of a target def-use pair into the program under test, and
leverage the power of model checkers to check its feasibility.
For example, in order to check the feasibility of $du_2$, we instrument the test requirement
into the program as shown in Fig.~\ref{fig:power2}.
We first introduce a boolean variable \emph{cover\_flag} at $l_2$, and initialize it as \texttt{false},
which represents the coverage status of this pair.
After the definition statement, the variable \emph{cover\_flag} is set as \texttt{true} (at $l_7$);
\emph{cover\_flag} is set as \texttt{false} immediately after all the redefinition statements (at $l_{10}$).
We check whether the property \emph{cover\_flag==true} holds (at $l_{14}$) 
just before the use statement.
If the check point is reachable, the pair is feasible and a test case will be generated.
Otherwise, the pair is infeasible, and will be excluded in the coverage computation.
As we can see, this model checking based approach is flexible and can be fully automated.

\vspace*{3pt}
\noindent{\emph{\textbf{Combined SE-CEGAR based Data-flow Testing}}}. \quad
In data-flow testing, the set of test objectives include feasible and infeasible pairs.
As we can see from the above two examples, SE, as a dynamic path-based exploration approach, can efficiently cover feasible pairs; 
while CEGAR, as a static software model checking approach, can effectively detect infeasible pairs (may also cover some feasible pairs).

\hspace*{4.5cm}\includegraphics[scale=0.5]{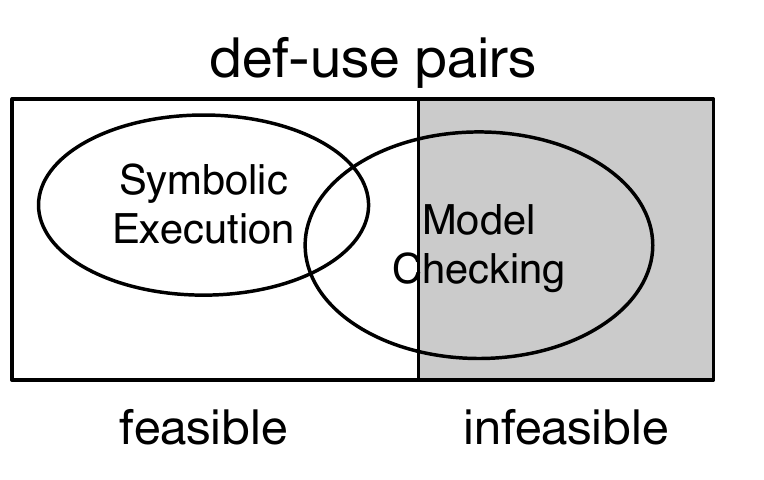}

The figure above  shows the relation of these two approaches for data-flow testing.
The white part represents the set of feasible pairs, and the gray part the set of infeasible ones.
The SE-based approach is able to cover feasible pairs efficiently, but in general, due to the path explosion problem, it cannot detect infeasible pairs (this may waste a lot of testing time). 
The CEGAR-based approach is able to identify infeasible pairs efficiently (but may take more time to cover feasible ones).
As a result, it is beneficial to combine these two techniques to complement each other with their strengths. 
Section~\ref{sec:evaluation} will demonstrate our observations, and validate the combined approach can indeed achieve more efficient data-flow testing by reducing testing time as well as improving coverage, compared to either the SE-based approach or the CEGAR-based approach alone.

\section{Our Approach}
\label{sec:approach}
This section explains the details of our combined approach for data-flow testing.
It consists of three phases: (1) use static analysis to identify
data flow-based test objectives and collect necessary program information; (2)
use symbolic execution and (3) software model checking to generate test cases for feasible test objectives
and eliminate infeasible ones at the same time.

\subsection{Static Analysis}
\label{sec:static_analysis}
We use standard iterative data-flow analysis~\cite{Harrold94,Pande94}
to identify def-use pairs from the program under test (see Section~\ref{sec:tool} for implementation details). 
To improve the performance of SE-based data-flow testing, we use dominator analysis to analyze a set of \emph{cut points} to
effectively guide path exploration. In the following, we give some definitions.

\begin{definition}[Dominator]
\label{def:dominator} 
In a control-flow graph, a node $m$ dominates a node $n$ if all paths from the program entry to
$n$ must go through $m$, which is denoted as $m \gg n$. 
When $m \neq n$, we say $m$ strictly dominates $n$.
If $m$ is the unique node that strictly dominates $n$ and does not strictly dominate other nodes
that strictly dominate $n$, $m$ is an immediate dominator of $n$, 
denoted as $m \gg^{I} n$.
\end{definition}

\begin{definition}[Cut Point]
Given a def-use pair $du(l_d, l_u, v)$, its cut points are a sequence of critical control points 
$c_1, \ldots, c_i, \ldots, c_n$ that must be passed through in succession by any control flow paths 
that cover this pair.
The latter control point is the immediate dominator of the former one, \ie,
$c_1 \gg^{I}  \ldots c_i \gg^{I} l_d \gg^{I} \ldots c_n \gg^{I}  l_u$.
Each control point in this sequence is called a cut point.
\label{def:cutpoint} 
\end{definition}

Note the \emph{def} and the \emph{use} statement (\ie, $l_d$ and $l_u$) 
of the pair itself also serve as the cut points.
These cut points are used as the intermediate goals during path 
search to narrow down the search space.
For illustration, consider the figure below: Let $du(l_d, l_u, v)$ be
the target def-use pair, its cut points are \{$l_1, l_3, l_d, l_6, l_u$\}.
Here the control point $l_2$ is not a cut point, since the path
$l_1, l_3, l_d, l_4, l_6, l_u$ can be constructed to cover the pair.
For the similar reason, the control points $l_4$ and $l_5$ are not
its cut points. 

\hspace*{3.5cm}\includegraphics[scale=0.6]{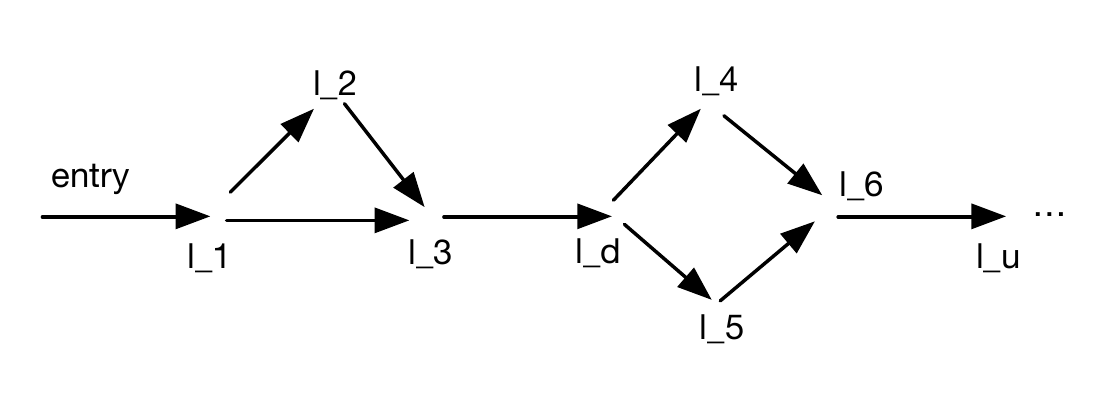}

\SetKwInput{KwInput}{Input}
\SetKwInput{KwInput}{Input}

\SetKw{Let}{let}

\begin{algorithm2e}[t]
\scriptsize
  \setcounter{AlgoLine}{0}
\caption{Cut Point Analysis for Def-use Pairs}\label{algo:cutpoints}
\DontPrintSemicolon
\newcommand\mycommfont[1]{\small\sffamily\textcolor{gray}{#1}}
\SetCommentSty{mycommfont}

\KwIn{$du(l_d,l_u, x)$: a def-use pair}
\KwOut{$C = \{c_1, c_2, \dots, c_n\}$: cut points of $du$}

\BlankLine

$I_d$ $\gets$ getInstruction($l_d$), $I_u$ $\gets$ getInstruction($l_u$), $F$ $\gets$ getFunction($I_d$)\;
\tcp*[h]{Assume $l_d$ and $l_u$ are in the same function}\; 
$C$ $\gets$ getInterpCutPoint($F$,$I_d$) $\cup$ getIntrapCutPoint($I_d$,$I_u$)\;
\tcp*[h]{\emph{getInterpCutPoint} and \emph{getIntrapCutPoint} are the core functions to compute cut points}\; 
\textbf{Procedure} getInterpCutPoint(Function $F$, Instruction $I$)

 \If{$F$ is the main function}{
	getIntrapCP($F.entryInstruction$, $I$)
 }
 
 \Else{
   let $F'$ be the caller function of $F$, \st, $\min \limits_{F' \in callers(F) }$CallgraphDistance($main$, $F'$)\; 
   let ${I}'$ be the callsite instruction of $F$ in $F'$, \st,  $\min \limits_{I' \in callsite(F)}$ InstructionDistance($F'.entryInstruction$,${I}'$)\; 
   getInterCP($F'$, ${I}'$) $\cup$ getIntrapCP($F.entryInstruction$, $I$)\;
 }
 
\textbf{Procedure} getIntrapCutPoint(Instruction $I_t$, Instruction $I_s$)\;
 \lIf{$I_t$ == $I_s$}{\Return $C$} 
 \ElseIf{countPreds($I_s$) == 1}{ \tcp*[h]{$I_s$ has only one pred}\; 
   \If{countSuccs(getPred($I_s$)) == 2}{
      $C \gets C \cup \{I_s\}$\;
   }
   getIntrapCP($I_t$, getPred($I_s$))
 }
 \ElseIf{countPreds($I_s$) $>$ 1}{\tcp*[h]{$I_s$ has more than one pred}\; 
  $I_{idom}$ $\gets$ getIDom($I_s$)\;
	getIntrapCutPoints($I_t$, $I_{idom}$)\;
 }

\tcp*[h]{compute the call graph distance between Function $F$ and $G$}\; 
\textbf{Procedure} CallgraphDistance(Function $F$, Function $G$)\;
$d_{min}$ $\gets$ $\infty$\;
\ForEach{acyclic path $\pi$ on the call graph from $F$ to $G$}{
	$d$ $\gets$ \#function calls of $\pi$ \;
	\Return $d_{min}$ $\gets$ min($d$, $d_{min}$)\;
}

\tcp*[h]{compute the instruction distance between $I$ and $K$ ($I$ and $K$ are in the same function)}\; 
\textbf{Procedure} InstructionDistance(Instruction $I$, Instruction $K$)\;
$d_{min}$ $\gets$ $\infty$\;
\ForEach{acyclic path $\pi$ on the control flow graph from $I$ to $K$}{
	$d$ $\gets$ \#instructions of $\pi$ \;
	\ForEach{function call $f$ on the path $\pi$}{
		\tcp*[h]{distance2Return computes the minimal distance between the entry instruction and one return instruction of function $f$}\; 
		$d$ $\gets$ $d$ + distance2Return($f.entryInstruction$)
	}
	 \Return $d_{min}$ $\gets$ min($d$, $d_{min}$)\;
}
\end{algorithm2e}

Algorithm~\ref{algo:cutpoints} shows the analysis of cut points $C$ \wrt a target pair $du(l_d,l_u, x)$.
To simplify the presentation, Algorithm~\ref{algo:cutpoints} gives the computation algorithm for intra-procedural def-use pairs, \ie, $l_d$ and $l_u$ are in the same function.
For inter-procedural def-use pairs, the computation procedural is similar.
Here, function \emph{getInterpCutPoint} (at Lines 3-9) and \emph{getIntrapCutPoint} (at Lines 10-18) are the two core functions to compute the cut points at the inter- and intra-procedural level, respectively. 
The union of their results is the final set of cut points $C$ (at Line 2).

In particular, the function call \emph{getInterpCutPoint}($F$, $I_d$) (at Line 2) computes the cut points between the \emph{main} function and $l_d$.
If $F$ is the \emph{main} function (\ie, the entry function in C language), $l_d$'s cut points are computed between the entry instruction of $F$ and $I_d$ (at Lines 4-5). Otherwise, it recursively calls \emph{getInterpCutPoint}($F'$, $I'$) (at Lines 7-9), where $F'$ is the caller function of $F$ that satisfies the distance between \emph{main} and $F'$ is minimal on the program call graph, and $I'$ is the callsite instruction of $F$ located in $F'$ that satisfies the distance between the entry intruction of $F'$ and $I'$ is minimal.
In particular, function \emph{CallgraphDistance} (at Lines 19-23) computes the minimal call graph distance between two functions, while function \emph{InstructionDistance} (at Lines 24-30) computes the minimal instruction distance between two instructions in one same function.
We select such $F'$ and $I'$ to facilitate our search strategy (detailed in Section~\ref{sec:se_approach}) can find a valid path to cover the given pair as quickly as possible.

The function call \emph{getIntrapCutPoint}($I_d$, $I_u$) (at Line 2) computes the cut points between $l_d$ and $l_u$.
If the instruction $I_s$ has only one predecessor and this predecessor has two successors, then $I_s$ is a cut point for $I_u$ (at Lines 12-15). If the current instruction $I_s$ has more than one predecessors, we apply dominator analysis in a context-sensitive manner to get its dominator $I_{idom}$, and continue to compute cut points from $I_{idom}$ (at Lines 16-18).

\subsection{SE-based Approach for Data-flow Testing}
\label{sec:se_approach}

This section explains the symbolic execution-based approach for data-flow testing.
Algorithm~\ref{algo:dse_dft} gives the details. This algorithm takes as input a target def-use pair $du$ and its cut points $C$, and either outputs the test case $t$ that satisfies $du$, or $nil$ if it fails to find a path that can cover $du$. 

It first selects one execution state $ES$ from the worklist $W$ which stores all the execution states during symbolic execution. 
It then executes the current program instruction referenced by $ES$, and update $ES$ according to the instruction type (Lines 6-14, \cf  Section~\ref{sec:se_process}). Basically, one instruction can be one of three types: \emph{sequential instruction} (\eg, assignment statements), \emph{forking instruction} (\eg, \texttt{if} statements, denoted as \emph{FORK}), and \emph{exit instruction} (\eg, program exits or runtime errors, denoted as \emph{EXIT}). When it encounters sequential instructions, $ES$ is updated accordingly by function \emph{executeInstruction} (Lines 6-7). Specifically, function \emph{executeInstruction} will internally (1) execute the current instruction, and (2) update $ES$ (including the symbolic state, the reference to next instruction and the corresponding instruction type).
When it encounters \emph{FORK} instructions, one new execution state $ES'$ will be created.
The two states $ES$ and $ES'$ will explore both sides of the fork, respectively, and the corresponding subpaths of $ES$ and $ES'$ will be updated to \emph{$ES$.path+$Fr(T)$} and \emph{$ES$.path+$Fr(F)$}, respectively (Lines 9-14).
Here, $Fr$ denotes the forking point, and $T$ and $F$ represent the \emph{true} and \emph{false} directions, respectively. 
If the target pair $du$ is covered by the pending path $p$ of $ES$, a test input $t$ will be generated (Line 16).
If the variable $x$ of $du$ is redefined on $p$ between the \emph{def} and \emph{use} statement, a redefinition path pruning heuristic will remove those invalid states (Lines 17-18, more details will be explained later).
The algorithm will continue until either the worklist $W$ is empty or the given testing time is exhausted (at Line 19).

\SetKwInput{KwInput}{Input}
\SetKwInput{KwInput}{Input}

\SetKw{Let}{let}

 \begin{algorithm2e}[t]
 \scriptsize
   \setcounter{AlgoLine}{0}
 \caption{SE-based Data-flow Testing}\label{algo:dse_dft}

 \DontPrintSemicolon
 \newcommand\mycommfont[1]{\small\sffamily\textcolor{gray}{#1}}
 \SetCommentSty{mycommfont}

 \KwIn{$du(l_d,l_u, x)$: a given def-use pair}
 \KwIn{$C = \{c_1, c_2, \dots, c_n\}$: the cut points of $du$}
 \KwOut{input $t$ that satisfies $du$ or $\mynil$ if none is found within the given time bound}

 \BlankLine
 \Let $W$ be a worklist of execution states\; 
 \Let $ES_0$ be the initial execution state\;  
 $W$ $\gets$ $W$ $\cup$ \{$ES_0$\}\;
\tcp*[h]{the core process of symbolic execution}\;
 \Repeat{$W$.size()=0 or timeout()}
 {
	 \textbf{ExecutionState} $ES$ $\gets$ selectState($W$)\;
	 \While{$ES$.instructionType!=FORK or EXIT}{
	 	$ES$.executeInstruction()
	 }
	 \lIf{$ES$.instructionType=EXIT}{$W$ $\gets$ $W$ $\setminus$ \{$ES$\}}
	 \If{$ES$.instructionType = FORK}{
	 	\textbf{Instruction} Fr = $ES$.currentInstruction;\;
	 	\textbf{ExecutionState} $ES'$ $\gets$ \textbf{new} executionState($ES$)\;
	 	$ES$.newNode $\gets$ Fr($T$)\;
	 	$ES'$.newNode $\gets$ Fr($F$)\;
	 	$W$ $\gets$ $W$ $\cup$ \{$ES'$\}\;
	 }
	 
	 \textbf{PendingPath} $p$ $\gets$ $ES$.path
	
	 \lIf{$p$ covers $du$}{\Return $t$ $\gets$ getTestCase($ES$)}
	
	 \tcp*[h]{the redefinition path pruning heuristic}\; 
	 \If{variable $x$ (in $du$) is redefined by $p$}{
	 	$W$ $\gets$ $W$ $\setminus$ \{$ES$, $ES'$\}
	 }
  }

 \tcp*[h]{the core algorithm of execution state selection}\;

  \textbf{Procedure} selectState(\textbf{reference} worklist $W$)

  \Let $ES'$ be the next selected execution state\;
	  \tcp*[h]{$j$ is the index of a cut point, $w$ is the state weight}\; 
	   $j$ $\gets$ 0, $w$ $\gets$ $\infty$
	
	  \ForEach{\textbf{ExecutionState} $ES$ $\in$ $W$}{
	
	 \textbf{PendingPath} $pp$ $\gets$ $ES$.path
	 
	 \tcp*[h]{$c_1$, $\dots$, $c_{i}$ are sequentially-covered, while $c_{i+1}$ not yet}
		
	   $i$ $\gets$ index of the cut point $c_i$ on $pp$
	   
	   \textbf{StateWeight} $sw$ $\gets$ distance($es$, $c_{i+1}$)$^{-2}$ + instructionsSinceCovNew($es$)$^{-2}$
	   
	  \If {$i$ $>$ $j$ $\vee$ ($i$ == $j$ $\wedge sw$  $>$ $w$)} 
	 {      $ES'$ $\gets$ $ES$,  $j$ $\gets$ $i$, $w$ $\gets$ $sw$}
	 }
 $W$ $\gets$ $W$ $\setminus$ \{$ES'$\}\; 
 \Return $ES'$
 \end{algorithm2e}

The algorithm core is the state selection procedure, \ie, \emph{selectState} (detailed at Lines 20-30), which integrates several heuristics to improve the overall effectiveness. Fig.~\ref{fig:hybrid_search} conceptually shows the benefits of their combination (the red path is a valid path that covers the pair), which can efficiently steer exploration towards the target pair, and reduce as many unnecessary path explorations as possible: (1) the cut point guided search guides the state exploration towards the target pair more quickly;
(2) the backtrack strategy counts the number of executed instructions to prevent the search from being trapped in tight loops, and switches to alternative search directions; and (3) the redefinition path pruning technique effectively prunes redundant search space. 
In detail, we use Formula~\ref{eq:hybrid_search_formula} to assign the weights to all states, and achieve the heuristics (1) and (2).

\begin{align}\label{eq:hybrid_search_formula}
state\_weight(es)=(c_{max}, \frac{1}{d^2}+\frac{1}{i^2})
\end{align}
where, $ES$ is an execution state, $c_{max}$ is the deepest covered cut point, $d$ is the instruction distance toward the next uncovered cut point, and $i$ is the number of executed instructions since the last new instruction have been covered.
Below, we explain the details of each heuristic.

\vspace*{2pt}
\noindent{\textbf{\emph{Cut point guided search}}}. \quad
The cut-point guided search strategy (at Lines 23-28) aims to search for the ES whose pending path has covered the deepest cut point, and tries to reach the next goal, \ie, the next uncovered cut point, as quickly as possible.
For an ES, its pending path is a subpath that starts from the program entry and reaches up to the program location of it.
If this path has sequentially covered the cut point $c_1, c_2, \ldots, c_i$ but not $c_{i+1}$, $c_{i}$ is the deepest covered cut point, and $c_{i+1}$ is the next goal to reach.
The strategy always prefers to select the ES that has covered the deepest cut point (at Lines 26-28, indicated by the condition $i > j$).
The intuition is that the deeper cut point an ES can reach, the closer the ES toward the pair is.

\begin{figure}[t]
	\begin{center}
		\includegraphics[width=0.6\textwidth]{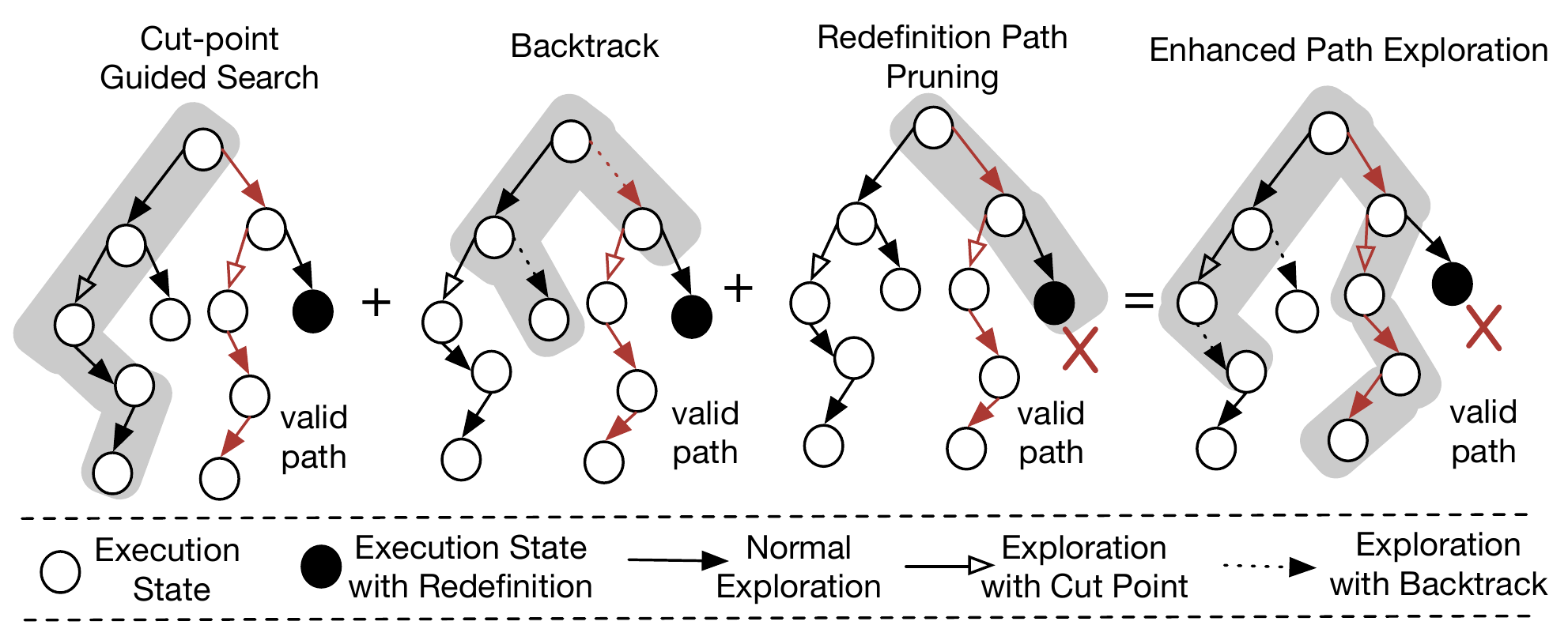}
	\end{center}
	\caption{Enhanced path exploration in symbolic execution: combine cut-point guided search, backtrack strategy and redefinition path pruning. Each subfigure denotes the execution tree generated by symbolic execution.}
	\label{fig:hybrid_search}
\end{figure}

When more than one ES covers the deepest cut point 
(indicated by the condition $i$==$j$ at Line 27), 
the ES that has the shortest distance toward next goal will be preferred (at Lines 26-28).
The intuition is that the closer the distance is, the more quickly the ES can reach the goal.
We use \emph{dist($es$, $c_{i+1}$)} to present the distance 
between the location of $es$ and the next uncovered cut point $c_{i+1}$.
The distance is approximated as the number of instructions along the shortest control-flow path between the program locations of $es$ and $c_{i+1}$.

\vspace*{2pt}
\noindent{\textbf{\emph{Backtrack strategy}}}. \quad
To avoid the execution falling into the tight loops, we assign an ES with lower priority if the ES is not likely to cover new instructions.
In particular, for each ES, the function \emph{instrsSinceCovNew}, corresponding to $i$ in Formula (\ref{eq:hybrid_search_formula}), counts the number of executed instructions since the last new instruction is covered (at Line 26).
The ES, which has a larger value of \emph{instrsSinceCovNew}, is assumed that it has lower possibility to cover new instructions.
Intuitively, this heuristic prefers the ES which is able to cover more new instructions, if a ES does not cover new instructions for a long time, the strategy will backtrack to another ES via lowering the weight of the current ES.

\vspace*{2pt}
\noindent{\textbf{\emph{Redefinition Path Pruning}}}. \quad
A redefinition path pruning technique checks whether the selected ES has redefined the variable $x$ in $du$.
If the ES is invalid (\ie, its pending path has redefined $x$), it will be discarded and \emph{selectState} will choose another one (at Lines 17-18).
The reason is that, according to the definition of DFT (\cf Definition~\ref{defn:dft}), 
it is impossible to find def-clear paths by executing those invalid ESs. 

Further, by utilizing the path-sensitive information from SE, we can detect variable redefinitions, especially caused by variable aliases, more precisely.
Variable aliases appear when two or more variable names refer to the same memory location.
So we designed a lightweight variable redefinition detection algorithm in our framework.
Our approach operates upon a simplified three-address form of the original code\footnote{
We use CIL as the C parser to transform the source code into an equivalent simplified form 
using the --dosimplify option, where one statement contains at most one operator.
}, 
so we mainly focus on the following statement forms where variable aliases and variable redefinitions may appear:

\begin{itemize}
  \item Alias inducing statements: (1) $p$:=$q$ ($*p$ is an alias to $*q$), (2) $p$:=\&$x$ ($*p$ is an alias to $x$)
  \item Variable definition statements: (3) $*p$:=$y$ ($*p$ is defined by $y$),  (4) $v$:=$y$ ($v$ is defined by $y$)
\end{itemize}
Here, $p$ and $q$ are pointer variables, $x$ and $y$ non-pointer variables, and ``:=" the assignment operator.

Initially, a set $A$ is maintained, which denotes the variable alias set \wrt the variable $x$ of $du$. At the beginning, it only contains $x$ itself.
During path exploration, if the executed statement is (1) or (2), and $*q$ or $x$
$\in A$, $*p$ will be added into $A$ since $*p$ becomes an alias of $x$.
If the executed statement is (1), and $*q \not\in A$ but $x \in A$, $*p$ will be excluded
from $A$ since it becomes an alias of another variable instead of $x$.
If the executed statement is (3) or (4), and $*p \in A$ or $x \in A$, 
the variable is redefined by another variable $y$.

\subsection{CEGAR-based Approach for Data-flow Testing}
\label{sec:mc_approach}

\begin{figure}[t]
\begin{center}
\includegraphics[width=0.5\textwidth]{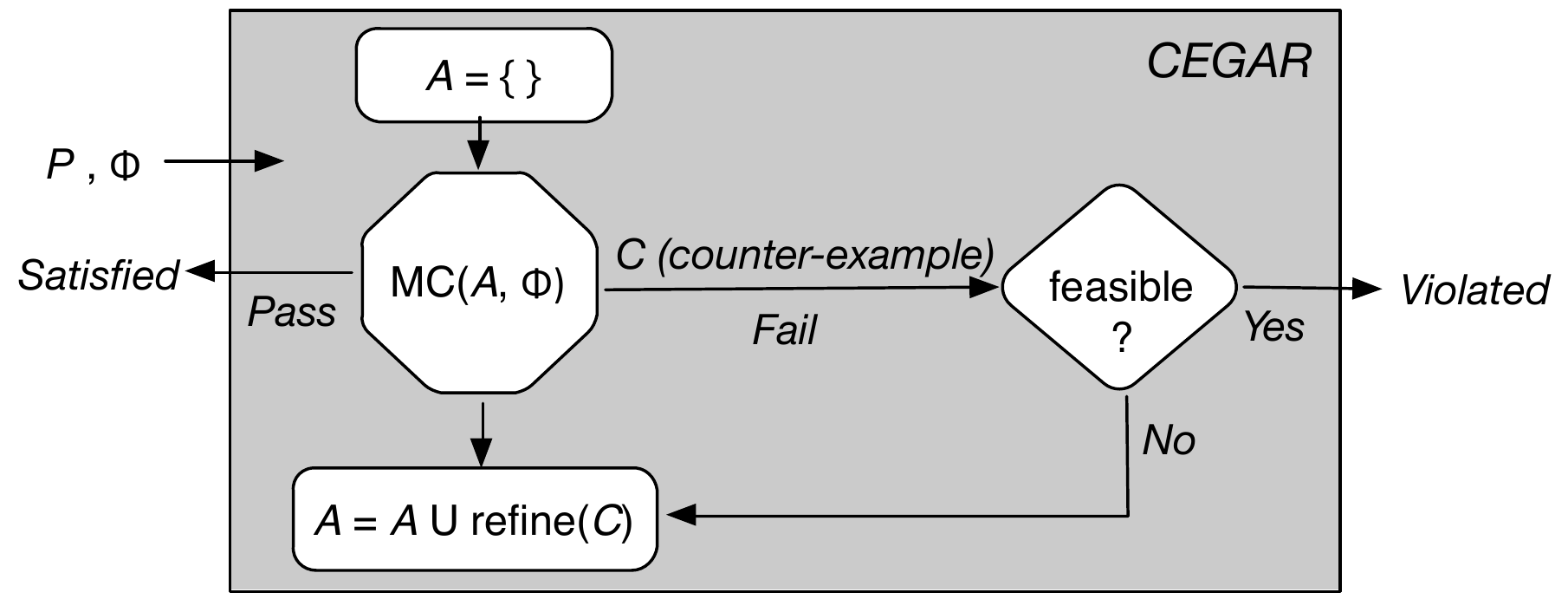}
\end{center}
    \caption{Paradigm of CEGAR-based Model Checking}
\label{fig:cegar_workflow}
\end{figure}

Counter-example guided abstract refinement (CEGAR) is a well-known software model checking technique that statically proves program correctness \wrt properties (or specifications) of interest~\cite{jhala2009software}.
Fig.~\ref{fig:cegar_workflow} shows the basic paradigm of CEGAR, which typically follows an \emph{abstract-check-refine} paradigm.
Given the program $P$ (\ie, the actual implementation) and a safety property $\phi$ of interest, 
CEGAR first \emph{abstracts} $P$ into a model $A$ (typically represented
as a finite automaton), and then \emph{checks} the property $\phi$ against $A$.
If the abstract model $A$ is error-free, then so is the original program $P$.
If it finds a path on the model $A$ that violates the property $\phi$, it will check the feasibility of this path: 
is it a genuine path that can correspond to a concrete path
in the original program $P$, or due to the result of the current coarse abstraction?
If the path is feasible, CEGAR returns a counter-example path $C$ to demonstrate the violation of $\phi$.
Otherwise, CEGAR will utilize this path $C$ to \emph{refine} $A$ by adding new predicates, 
and continue the checking until it either finds a genuine path that violates $\phi$ or proves that $\phi$ 
is always satisfied in $P$. 
Or since this model checking problem itself is undecidable, CEGAR does not terminate and cannot conclude the correctness of $\phi$.

To exploit the power of CEGAR, our approach reduces the problem of data flow testing to the problem of model checking.
The CEGAR-based approach can operate in two phases~\cite{Beyer04} to generate tests, 
\ie, \emph{model checking} and \emph{tests from counter-examples}.
(1) It first uses model checking to check whether the specified program location $l$ is reachable such that 
the predicate of interest $q$ (\ie, the safety property) can be satisfied at that point.
(2) If $l$ is reachable, CEGAR will return a counter-example path $p$ that establishes $q$ at $l$,
and generate a test case from the corresponding path constraint of $p$.
Otherwise, if $l$ is not reachable, CEGAR will conclude no test inputs can reach $l$.

The key idea is to encode the test obligation of a target def-use pair into the program under test.
We instrument the original program $P$ to $P'$, and reduce the test generation for $P$ to path reachability checking on $P'$.
In particular, we follow three steps:
(1) We introduce a variable \emph{cover\_flag} into $P$, which denotes the cover status of the target pair, and initialize it as \emph{false}.
(2) The variable \emph{cover\_flag} is set as \emph{true} immediately after the \emph{def} statement, 
and set as \emph{false} immediately after all redefinition statements.
(3) Before the \emph{use} statement, we set the target predicate as \emph{cover\_flag==true}.
As a result, if the use statement is reachable when the target predicate holds, we 
can obtain a counter-example (\ie, a test case) and conclude the pair is feasible.
Otherwise, if unreachable, we can safely conclude that the pair is infeasible (or since
the problem itself is undecidable, the algorithm does not terminate, and gives \emph{unknown}).

\vspace*{3pt}
\noindent\emph{\textbf{Generability of the CEGAR-based approach}}. \quad
This reduction approach is flexible to implement on any CEGAR-based model checkers.
It is also applicable for other software model checking techniques, \eg, Bounded Model Checking (BMC).
Given a program, BMC unrolls the control flow graph for a fixed number of $k$ steps, and checks whether the property $p$ at a specified program location $l$ is violated or not. Different from the modern (dynamic) symbolic execution techniques, BMC executes on pure symbolic inputs without using any concrete input values, and usually aims to systematically checking reachability within given bounds. Different from CEGAR, BMC searches on all program computations without abstraction and typically backtracks the search when a given (loop or search depth) bound is reached. le
Although BMC in general cannot prove infeasibility as certain (unless any inductive reasoning mechanism~\cite{MorseRCN014,GadelhaIC17} is used), in Section~\ref{sec:evaluation} we will show the BMC-based approach can still serve as a heuristic-criterion to identify hard-to-cover (probably infeasible) pairs and particularly effective for specific types of programs. In fact, the infeasible pairs concluded by BMC can be regarded as valid modulo the given checking bounds.

\section{Proof}
\label{sec:proof}

Symbolic execution is an explicit path-based analysis technique, where a path exploration algorithm
should be specified to determine the search priority of different paths.
In this paper, we designed a cut-point guided search algorithm to enhance SE for
DFT.
Thus, it is desirable to prove the correctness of this algorithm, and show its effectiveness.

\begin{theorem}[Correctness of cut point guided search]
Any path that covers the target def-use pair must pass through its cut points in succession; any path
that passes through the cut points of a def-use pair in succession can cover the pair.
\end{theorem}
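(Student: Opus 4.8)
\quad The plan is to prove the two implications separately, both resting only on elementary order-theoretic properties of the dominator relation of Definition~\ref{def:dominator}. First I would record the facts needed: immediate dominance refines dominance ($m \gg^{I} n$ implies $m \gg n$), $\gg$ is transitive, and $\gg$ is antisymmetric on the nodes reachable from the entry. By Definition~\ref{def:cutpoint} the cut points of $du(l_d,l_u,v)$ form a chain $c_1 \gg^{I} \cdots \gg^{I} l_d \gg^{I} \cdots \gg^{I} l_u$; write $c_1,\dots,c_n$ for the whole sequence, so that $l_d$ and $l_u$ are among the $c_k$ with $l_d$ strictly before $l_u$. Transitivity then gives $c_i \gg c_j$ for every $i<j$, and antisymmetry together with $c_k \neq c_{k+1}$ (immediate dominators are strict) forces the $c_k$ to be pairwise distinct; in particular every cut point dominates $l_u$. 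For the inter-procedural pairs produced by Algorithm~\ref{algo:cutpoints} I would run the same argument on the inter-procedural control-flow graph obtained by splicing in call and return edges, so it suffices to treat the intra-procedural case.

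For the forward direction (necessity), take any control-flow or execution path $p$ (Definition~\ref{def:program_path}) that covers $du$; by Definition~\ref{defn:dft} it is a path from the entry reaching $l_d$ and then $l_u$. Truncating $p$ at the first occurrence of $l_u$ leaves a path from the entry to $l_u$, which by $c_i \gg l_u$ contains an occurrence of every $c_i$. Let $k_i$ be the position of the \emph{first} occurrence of $c_i$ in $p$. For $i<j$, the prefix of $p$ ending at $k_j$ is a path from the entry to $c_j$, hence contains $c_i$ because $c_i \gg c_j$, so $k_i \le k_j$; and $k_i \neq k_j$ since $c_i \neq c_j$. Thus $k_1 < \cdots < k_n$, which is exactly what it means for $p$ to pass through $c_1,\dots,c_n$ in succession.

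For the converse (sufficiency), suppose $p$ passes through the cut points in succession, i.e.\ there are positions $k_1<\cdots<k_n$ at which $p$ visits $c_1,\dots,c_n$ respectively. Since $l_d$ and $l_u$ are themselves cut points appearing in that order, $p$ visits the definition statement $l_d$ and, strictly later, the use statement $l_u$, so $p$ meets the control-flow reachability requirement of the pair and is a legitimate covering candidate. I would be explicit that at the level of this theorem ``covers the pair'' means precisely ``reaches $l_d$ and then $l_u$'': the additional def-clear requirement of Definition~\ref{defn:dft} is orthogonal to the cut-point structure and is discharged by the separate redefinition-path-pruning step in Algorithm~\ref{algo:dse_dft}. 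Read this way, the theorem states exactly that guiding the search through the cut-point sequence excludes no covering path and admits no path that skips $l_d$ or $l_u$, which is what the subsequent completeness/soundness claims need.

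The main obstacle I anticipate is not the dominance algebra, which is routine, but pinning down the meaning of ``in succession'' so that the statement is simultaneously true and useful in the presence of loops and repeated vertices: the correct formalisation is the existence of a strictly increasing index embedding of $c_1,\dots,c_n$ into $p$, not a claim that each $c_i$ occurs exactly once or that nothing intervenes between consecutive cut points. A secondary delicate point is aligning the converse with how the paper uses the result --- justifying that ``can cover'' should be read as control-flow reachability of $l_d$ then $l_u$, with def-clearness factored out --- and checking that the distance-minimising inter-procedural construction in Algorithm~\ref{algo:cutpoints} still yields a genuine dominator chain once call and return edges are inlined.
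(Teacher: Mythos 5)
Your proposal is correct and rests on the same key fact as the paper's proof: every cut point dominates $l_u$ (indeed dominates all later cut points in the chain), so any entry-to-$l_u$ path must contain each of them; and for sufficiency, $l_d$ and $l_u$ are themselves cut points, so a path traversing the sequence reaches the def and then the use, with def-clearness factored out exactly as the paper does (the paper's proof likewise inserts the caveat ``assume no redefinitions appear between $l_d$ and $l_u$''). The one substantive difference is that the paper argues necessity by contradiction and only establishes that each cut point \emph{occurs} on the covering path, then asserts the ``in succession'' ordering without argument; your first-occurrence embedding ($k_i < k_j$ for $i<j$, derived from $c_i \gg c_j$ applied to the prefix ending at $c_j$) actually proves the ordering claim, and your insistence on formalising ``in succession'' as a strictly increasing index embedding rather than uniqueness of occurrences is the right reading in the presence of loops. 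So your route is essentially the paper's, but strictly more complete on the point the theorem's name advertises; the inter-procedural remark and the explicit transitivity/antisymmetry bookkeeping are additions the paper omits but that cost nothing and make the statement robust.
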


\begin{proof}
Let $du(l_d, l_u, x)$ be the target def-use pair, and its cut points are \{$l_1, \ldots, l_i, \ldots, l_j, \ldots, l_n$\}  
($n$ is the number of cut points of $du$). 
According to the definition of DFT (\cf Definition~\ref{defn:dft}), if there exists a path $p$ that covers $du$, 
the cut point $l_d$ and $l_u$ is passed through by $p$.

Assume there exists a cut point $l_i$ that is not passed through by $p$. 
According to the definition of cut point (\cf Definition~\ref{def:cutpoint}), the cut point $l_i$ dominates $l_u$
($l_u$ is the last cut point of $du$), as a result, any path that reaches $l_u$ must pass through $l_i$.
As a result, $p$ passes through $l_i$, however, it is contradictory with the assumption.
So there does not exist such a cut point $l_i$, \ie, all cut points are passed through in succession by $p$.
On the other hand, if there exists a path $p$ that passes through the cut points of $du$ in succession, 
$l_d$ and $l_u$ are covered (assume no redefinitions appear between $l_d$ and $l_u$), and thus, $du$ is
covered by $p$.
\end{proof}

\begin{theorem}[Soundness of the combined approach]
Our approach combines SE and SMC (CEGAR in particular). For a feasible def-use pair, it generates a test case to
satisfy it; for an infeasible pair, it gives an infeasibility proof. In theory, our approach is sound that it will not generate
any false positives.
\end{theorem}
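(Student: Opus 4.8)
The plan is to establish soundness by separately arguing about the two outcomes the combined approach can produce — a generated test case, or an infeasibility verdict — and then observing that these are the only two "positive" claims the framework emits, so "no false positives" follows from the soundness of each component in isolation. First I would fix notation: let $du(l_d, l_u, v)$ be a target def-use pair in program $P$, and recall that the combined approach first runs the SE-based algorithm (Algorithm~\ref{algo:dse_dft}) within a time bound, and for pairs left uncovered hands them to the CEGAR-based encoding of Section~\ref{sec:mc_approach}.

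The first half of the argument concerns test cases. When the SE-based approach returns an input $t$ for $du$, it does so at Line~16 of Algorithm~\ref{algo:dse_dft} only because the pending path $p$ of the current execution state covers $du$ — i.e., $p$ traverses $l_d$, then $l_u$, with no redefinition of $v$ in between (the redefinition pruning at Lines~17–18 guarantees surviving states are def-clear). Since $t$ is obtained by solving the path constraint $pc$ of $p$, which by the standard correctness of symbolic execution (Section~\ref{sec:se_process}) exactly characterizes the inputs driving execution along $p$, the concrete run on $t$ follows $p$ and hence covers $du$. The same reasoning applies when CEGAR returns a counter-example: by construction of $P'$, reaching the check point with \texttt{cover\_flag == true} means the concrete path established the def, avoided all redefinitions, and reached the use, so the extracted test input genuinely covers $du$. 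Thus every reported test case is a true positive.

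The second half concerns infeasibility verdicts. Here I would invoke the encoding correctness: by Theorem~2.? (the construction in Section~\ref{sec:mc_approach}), the check point in $P'$ is reachable with the target predicate satisfied \emph{if and only if} $du$ is feasible in $P$. Therefore, when a sound model checker (CEGAR) proves the check point unreachable, $du$ is genuinely infeasible — the infeasibility proof is valid. I would then note the one honest caveat already flagged in the paper: if the model checker does not terminate it returns \emph{unknown}, which is not a positive claim, so it cannot be a false positive; and for the BMC-based variant the infeasibility conclusion is only valid modulo the unrolling bound $k$, which I would state explicitly rather than claim unconditional soundness. Finally, since SE never declares a pair infeasible and CEGAR only does so with a reachability proof, the two components' outputs never conflict on a "positive" claim, so the combined verdict set contains no false positives.

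The main obstacle I expect is not the logical skeleton — which is short — but pinning down the precise correctness statement for the instrumented program $P'$: one must argue that inserting the \texttt{cover\_flag} assignments after the def and after \emph{every} redefinition site (including those arising through aliasing, per the alias-tracking discussion in Section~\ref{sec:se_approach}) preserves the semantics of $P$ on all other variables and makes check-point reachability coincide exactly with def-clear reachability of $(l_d, l_u, v)$. Getting the aliasing cases right — ensuring no redefinition is missed, so that an infeasible pair is never spuriously reported feasible — is the delicate step; the rest is bookkeeping over the two return sites of the algorithm.
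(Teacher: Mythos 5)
Your proposal is correct and follows essentially the same route as the paper's own proof: both argue that a reported test case is justified by a concrete covering path (from SE's path constraint or CEGAR's counter-example), that an infeasibility verdict is justified by an unreachability proof over the instrumented program, and both flag the \emph{unknown} outcome and the bounded validity of BMC as the caveats. Your version is somewhat more careful than the paper's --- in particular you explicitly isolate the correctness of the \texttt{cover\_flag} encoding (including redefinitions via aliases) as the load-bearing lemma, which the paper's proof takes for granted --- but this is a refinement of the same argument, not a different one.
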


\begin{proof}
According to Algorithm~\ref{algo:dse_dft}, the SE-based approach explicitly explores program 
paths by enumerating execution states in $W$.
When SE finds a test case that can satisfy a target def-use pair, 
this pair is concluded feasible.
The CEGAR-based approach directly instruments the test requirement of a pair into the program under test, and 
reduces data flow testing into path reachability checking.
When CEGAR finds a concrete counter-example path in the original program
that can reach the \emph{use} statement and establish the validity of the target predicate at
the same time,
the pair is concluded as feasible with a corresponding test case;
when CEGAR proves the \emph{use} statement is unreachable, a proof is produced.
During this process, our approach will not bring any false positives --- the feasibility of 
a def-use pair is justified by a concrete test case or a proof of infeasibility.
But note our approach may draw \emph{unknown} conclusions due to the undecidability 
of the problem itself, and under this circumstance, both SE and CEGAR may not terminate.
\end{proof}

\section{Framework Design and Implementation}
\label{sec:tool}
We realized our hybrid data-flow testing framework for \texttt{C} programs.
In our original work~\cite{SuFPHS15}, we implemented the SE-based approach on our own concolic testing based tool CAUT~\cite{SuPFHYJZ14,CAUT}, while in this article we built the enhanced SE-based approach on KLEE~\cite{KLEE}, a robust execution-generated testing based symbolic execution engine, to fully exhibit its feasibility. 
As for the SMC-based approach, we instantiated it with
two different types of software model checking techniques, \ie, CEGAR and BMC. 
In all, our framework combines the SE-based and SMC-based approaches together to achieve efficient DFT.

In the static analysis phase, we identify def-use pairs, cut points, and related static program information
(\eg, variable definitions and aliases)
by using CIL~\cite{CIL} (C Intermediate Language), which is an infrastructure for C 
program analysis and transformation.
We first build the control-flow graph (CFG) for each function in the program under test,
and then construct the inter-procedural CFG (ICFG) for the whole program.
We perform standard iterative data-flow analysis techniques~\cite{Harrold94,Pande94}, \ie,
reaching definition analysis, to compute def-use pairs.
For each variable use, we compute which definitions on the same variable can reach the use
through a def-clear path on the control-flow graph.
A def-use pair is created as a test objective for each use with its corresponding definition.
We treat each formal parameter variable as defined at the beginning of its function and each argument parameter variable as used at its function call site (\eg, library calls).
For global variables, we treat them as initially defined at the beginning the entry function (\eg, the \emph{main} function), and defined/used at any function where they are defined/used.

In the current implementation, we focus on the def-use pairs with local variables (intra-procedural pairs) and global variables (inter-procedural pairs).
Following prior work on data-flow testing~\cite{VivantiMGF13}, we currently do not consider the def-use pairs induced by pointer aliases. Thus, we may miss some def-use pairs, but we believe that this is an independent issue (not the focus of this work) and does not affect the effectiveness of our testing approach.
More sophisticated data-flow analysis techniques (\eg, dynamic data-flow analysis~\cite{DenaroPV14}) or 
tools (\eg, Frama-C~\cite{KirchnerKPSY15}, SVF~\cite{SuiX16}) can be used to mitigate this problem.
 
Specifically, to improve the efficiency of state selection algorithm (\cf Algorithm~\ref{algo:dse_dft}) in KLEE, we use the priority queue to sort execution states according to their weights. The algorithmic complexity is $\mathcal{O}(n\log{}n)$ ($n$ is the number of execution states), which is much faster than using a list or array.
The software model checkers are used as black-box to enforce data-flow testing. The benefit of this design choice is that we can flexibly integrate any model checker without any modification or adaption.
CIL transforms the program under test into a simplified code version, and encodes the test requirements of def-use pairs into the program under test.
Both SE-based and SMC-based tools takes as input the same CIL-simplified code.
Function stubs are used to simulate C library functions such as string, memory and file operations to improve the ability of symbolic reasoning.
To compute the data-flow coverage during testing, we implement the classic \emph{last definition} technique~\cite{Horgan92} in KLEE. We maintain a table of def-use pairs, and insert probes at each basic block to monitor the program execution. The runtime routine records each variable that has been defined and the block where it was defined. 
When a block that uses this defined variable is executed, the last definition of this variable is located, we check whether the pair is covered. Our implementations are publicly available at~\cite{dft-artifact}.
\section{Evaluation}
\label{sec:evaluation}
This section aims to evaluate whether our combined testing approach can achieve efficient data-flow testing. In particular, we intend to investigate (1) whether the core SE-based approach can quickly cover def-use pairs; (2) whether the SMC-based reduction approach is feasible and practical; and (3) whether the combined approach can be more effective for data-flow testing.

\subsection{Research Questions}
\label{sec:rqs_and_studies}

\begin{itemize}[leftmargin=*]
	\item \emph{\textbf{RQ1}}: In the data-flow testing \wrt all def-use coverage, what is the performance difference between different existing search strategies (\eg, DFS, RSS, RSS-MD2U, SDGS) and CPGS (our cut point guided path search strategy) in terms of testing time and number of covered pairs for the SE-based approach? 
	
	\item \emph{\textbf{RQ2}}: How is the practicability of the CEGAR-based reduction approach as well as the BMC-based approach in terms of testing time and number of identified feasible and infeasible pairs?
	\item \emph{\textbf{RQ3}}: How efficient is the combined approach, which complements the SE-based approach with the SMC-based approach, in terms of testing time and coverage level, compared with the SE-based approach or the SMC-based approach alone? 
	
\end{itemize}

\subsection{Evaluation Setup}
\label{sec:setup}

\vspace*{3pt}
\noindent{\emph{\textbf{Testing Environment}}}. \quad
All evaluations were run on a 64bit Ubuntu 14.04 physical machine with 24 processors (2.60GHz Intel Xeon(R) E5-2670 CPU) and 94GB RAM.

\vspace*{3pt}
\noindent{\emph{\textbf{Framework Implementations}}}. \quad
The SE-based approach of our hybrid testing framework was implemented on KLEE (v1.1.0), and the SMC-based approach was implemented on two different software model checking techniques, CEGAR and BMC.
In particular, we chose three different software model checkers\footnote{We use the latest versions of these model checkers at the time of writing.}, \ie, BLAST~\cite{Beyer07} (CEGAR-based, v2.7.3), CPAchecker~\cite{BeyerK11} (CEGAR-based, v.1.7) and CBMC~\cite{ckl2004} (BMC-based, v5.7). 
We chose different model checkers, since we intend to gain more overall understandings of the practicality of this reduction approach.
Note that the CEGAR-based approach can give definite answers of the feasibility, while the BMC-based approach is used as a heuristic-criterion to identify hard-to-cover (probably infeasible) pairs. 

\vspace*{2pt}
\noindent{\emph{\textbf{Program Subjects}}}. \quad 
Despite data-flow testing has been continuously investigated in the past four decades, the standard benchmarks for evaluating data-flow testing techniques are still missing. 
To this end, we took substantial efforts and dedicatedly constructed a repository of benchmark subjects by following these steps.
\emph{First}, we collected the subjects from prior work on data-flow testing. We conducted a thorough investigation on all prior work (99 papers~\cite{dft_repo} in total) related to data-flow testing, and searched for the adopted subjects. After excluding the subjects whose source codes are not available or not written in C language, we got 26 unique subjects from 19  papers~\cite{eler2014covering,GirgisS14,Marre03,HutchinsFGO94,Khamis11,Singla11,Singla11a,NayakM10,GhidukHG07,
Girgis05,MerloA99,WongM95,MathurW94,Frankl93,RappsW85,Weyuker84,Rapps82,SuFPHS15}. We then manually inspected these programs and excluded 19 subjects which are too simple, we finally got 7 subjects. These 7 subjects include mathematical computations and classic algorithms.
\emph{Second}, we included 7 Siemens subjects from SIR~\cite{SIR}, which are widely used in the experiments of program analysis and software testing~\cite{HutchinsFGO94,HassanA13,WangLGSZY17}. These subjects involve numeral computations, string manipulations and complex data structures (\eg, \texttt{pointer}s, \texttt{struct}s, and \texttt{list}s).
\emph{Third}, we further enriched the repository with the subjects from the SV-COMP benchmarks~\cite{SV-COMP}, which are originally used for the competition on software verification.
The SV-COMP benchmarks are categorized in different groups by their features (\eg, concurrency, bit vectors, floats) for evaluating software model checkers.
In order to reduce potential evaluation biases in our scenario, we carefully inspected all the benchmarks and finally decided to select subjects from the ``Integers and Control Flow'' category based on these considerations: (1) the subjects in this category are real-world (medium-sized or large-sized) OS device drivers (\cf~\cite{TACAS12-SVCOMP}, Section 4), while many subjects in other categories are hand-crafted, small-sized programs; (2) the subjects in this category have complicated function call chains or control-flow structures, which are more appropriate for our evaluation; (3) the subjects do not contain specific features that may not be supported by KLEE (\eg, concurrency, floating point numbers). We finally selected 16 subjects in total from the \emph{ntdrivers} and \emph{ssh} groups therein (we excluded other subjects with similar control-flow structures by diffing the code).
The selected subjects have rather complex control-flows. For example, the average cyclomatic complexity of functions in the \texttt{ssh} group exceeds 88.5\footnote{Cyclomatic complexity is a software metric that indicates the complexity of a program. The standard software development guidelines recommend the cyclomatic complexity of a module should not exceeded 10.} (computed by \emph{Cyclo}~\cite{cyclo}).
\emph{Fourth}, we also included three core program modules from the industrial projects from our research partners.
The first one is an engine management system (\emph{osek\_control}) running on an automobile operating system conforming to the OSEK/VDX standard.
The second one is a satellite gesture control program (\emph{space\_control}).
The third one is a control program (\emph{subway\_control}) from a subway signal.
All these three industrial programs were used in our previous research work~\cite{PengHSG13,SuPFHYJZ14,MiaoPYSB0CX16,Zhang2018}, and have complicated data-flow interactions.
\emph{Finally}, to ensure each tool (where our framework is built upon) can correctly reason these subjects, we carefully read the documentation of each tool to understand their limitations, manually checked each program and added necessary function stubs (\eg, to simulate such C library functions as string, memory, and file operations) but without affecting their original program logic and structures. This is important to reduce validation threats, and also provides a more fair comparison basis.
Totally, we got 33 subjects with different characteristics, including mathematical computation, classic algorithms, utility programs, device drivers and industrial control programs.
These subjects allow us to evaluate diverse data-flow scenarios. 
Table~\ref{table:table_subjects} shows the detailed statistics of these subjects, which includes the executable lines of code (computed by \emph{cloc}~\cite{cloc}), the number of def-use pairs (including intra- and inter-procedural pairs), and the brief functional description. 

\begin{table}[t]
\scriptsize
\newcommand{\tabincell}[2]{\begin{tabular}{@{}#1@{}}#2\end{tabular}}
\renewcommand{\arraystretch}{1}
\caption{Subjects of the constructed data-flow testing benchmark repository}
\label{table:table_subjects}
\centering
\begin{tabular}{|c||c|c|c|}
\hline
Subject &\#ELOC &\tabincell{c}{\#DU\\} &Description\\
\hline \hline
\emph{factorization} &43 &47 &compute factorization \\
\emph{power} &11  &11  & compute the power $x^y$ \\
\emph{find} &66 &99 &permute an array's elements \\
\emph{triangle} &32 &46 &classify an triangle type\\
\emph{strmat} &67 &32 &string pattern matching\\
\emph{strmat2} &88 &38 &string pattern matching \\
\emph{textfmt} &142 &73 &text string formatting \\ \hline

\emph{tcas} &195 &86  &collision avoidance system\\
\emph{replace} &567 &387 &pattern matching and substitution\\
\emph{totinfo} &374 &279 &compute statistics given input data \\
\emph{printtokens} &498 &240 &lexical analyzer\\
\emph{printtokens2} &417 &192 & lexical analyzer\\
\emph{schedule} &322 &118 & process priority scheduler \\
\emph{schedule2} &314 &107 & process priority scheduler \\ \hline

\emph{kbfiltr} &557 &176 &\texttt{ntdrivers} group \\
\emph{kbfiltr2} &954 &362 &\texttt{ntdrivers} group \\
\emph{diskperf} &1,052 &443 &\texttt{ntdrivers} group \\
\emph{floppy} &1,091 &331 &\texttt{ntdrivers} group \\
\emph{floppy2} &1,511 &606 &\texttt{ntdrivers} group \\
\emph{cdaudio} &2,101 &773 &\texttt{ntdrivers} group \\

\emph{s3\_clnt} &540 &1,677 &\texttt{ssh} group \\
\emph{s3\_clnt\_termination} &555 &1,595 &\texttt{ssh} group \\
\emph{s3\_srvr\_1a} &198 &574 &\texttt{ssh} group \\
\emph{s3\_srvr\_1b} &127 &139 &\texttt{ssh} group \\
\emph{s3\_srvr\_2} &608 &2,130 &\texttt{ssh} group \\
\emph{s3\_srvr\_7} &624 &2,260 &\texttt{ssh} group \\
\emph{s3\_srvr\_8} &631 &2,322 &\texttt{ssh} group \\
\emph{s3\_srvr\_10} &628 &2,200 &\texttt{ssh} group \\
\emph{s3\_srvr\_12} &696 &3,125 &\texttt{ssh} group \\
\emph{s3\_srvr\_13} &642 &2,325 &\texttt{ssh} group \\ \hline

\emph{osek\_control} &4,589 &927 & one module of engine management system \\
\emph{space\_control} &5,782 &1,739 & one module of satellite gesture control software\\
\emph{subway\_control} &5,612 &2,895 & one module of subway signal control software\\
\hline
\end{tabular}
\end{table}

\vspace*{3pt}
\noindent{\emph{\textbf{Search Strategies for Comparison}}}. \quad To our knowledge, there exists no specific guided search strategies on KLEE to compare with our strategy. Thus, we compare our cut-point guided search strategy with several existing search strategies.
In particular, we chose two generic search strategies (\ie, depth-first and random search), one popular (statement) coverage-optimized search strategy. In addition, we implemented one search strategy for directed testing on KLEE, which is proposed by prior work~\cite{ZamfirC10,MaKFH11,MarinescuC13}. We detail them as follows. 
\begin{itemize}[leftmargin=*]
  
   \item \emph{Depth First-Search (DFS)}: always select the latest execution state from all states to explore, and has little overhead in state selection.
   \item \emph{Random State Search (RSS)}: randomly select an execution state from all states to explore, and able to explore the program space more uniformly and less likely to be trapped in tight loops than other strategies like DFS.
   \item \emph{Coverage-Optimized Search (COS)}: compute the weights of the states by some heuristics, \eg, the minimal distance to uncovered instructions (\emph{md2u}) and whether the state recently covered new code (\emph{covnew}), and randomly select states \wrt these weights. These heuristics are usually interleaved with other search strategies in a round-robin fashion to improve their overall effectiveness.
   For example, \emph{RSS-COS:md2u} (\emph{RSS-MD2U} for short) is a popular strategy used by KLEE,
   which interleaves \emph{RSS} with \emph{md2u}.
   \item \emph{Shortest Distance Guided Search (SDGS)}: always select the execution state that has the shortest (instruction) distance toward a target instruction in order to cover the target as quickly as possible. This strategy has been widely applied in single target testing~\cite{ZamfirC10,MaKFH11,MarinescuC13}. In the context of data-flow testing, we implemented this strategy in KLEE by setting the \emph{def} as the first goal and then the \emph{use} as the second goal after the \emph{def} is covered.
\end{itemize}

\subsection{Case Studies}
\label{sec:rqs_and_studies}

We conducted three case studies to answer the research questions.
Note that in this paper we focus on the classic data-flow testing~\cite{GhidukHG07,VivantiMGF13}, \ie, targeting one def-use pair at one run.
In \emph{\textbf{Study 1}}, we answer \emph{\textbf{RQ1}} by comparing the performance of different search strategies that were implemented on KLEE.
In detail, we use two metrics: (1) \emph{number of covered pairs}, \ie, how many def-use pairs can be covered; and (2) \emph{testing time}, \ie, how long does it take to cover the pair(s) of interest. The \emph{testing time} is measured by the median value and the semi-interquartile range (SIQR)\footnote{SIQR = (Q3-Q1)/2, which measures the variability of testing time, where Q1 is the lower quartile, and Q3 is the upper quartile.} of the times consumed on those covered (\emph{feasible}) pairs\footnote{In theory, the symbolic execution-based approach cannot identify infeasible pairs unless it enumerates all possible paths, which however is impossible in practice. Therefore, we only consider the testing times of covered (feasible) pairs for performance evaluation.}.

In the evaluation, the maximum allowed search time on each pair is set as 5 minutes. Under this setting, we observed all search strategies can thoroughly test each subject (\ie, reach their highest coverage rates). To mitigate the algorithm randomness, we repeat the testing process 30 times for each program/strategy and aggregate their average values as the final results for all measurements.

In \emph{\textbf{Study 2}}, we answer \emph{\textbf{RQ2}} by evaluating the practicability of the SMC-based reduction approach on two different model checking techniques, CEGAR and BMC. 
Specifically, we implemented the reduction approach on three different model checkers, BLAST, CPAchecker and CBMC.
We use the following default command options and configurations according to their user manuals
and the suggestions from the tool developers, respectively:

\begin{footnotesize}
\begin{description}
\item[\textbf{BLAST}:]
\begin{verbatim}
     ocamltune blast -enable-recursion -cref -lattice -noprofile -nosserr -quiet
\end{verbatim}
\item[\textbf{CPAchecker}:]
\begin{verbatim}
     cpachecker -config config/predicateAnalysis.properties -skipRecursion
\end{verbatim}
\item[\textbf{CBMC}:]
\begin{verbatim}
     cbmc --slice-formula --unwind nr1 --depth nr2
\end{verbatim}
\end{description}
\end{footnotesize}
We have not tried to particularly tune the optimal configurations of these tools for different subjects under test, since we aim to investigate the practicability of our reduction approach in general.  
Specifically, BLAST and CPAchecker are configured based on predicate abstraction.
For BLAST, we use an internal script \texttt{ocamltune} to improve memory utilization for large programs; for CPAchecker, we use its default predicate abstraction configuration \textsf{\small{predicateAnalysis.properties}}. 
We use the option \textsf{\small{-enable-recursion}} of BLAST and \textsf{\small{-skipRecursion}} of CPAchecker to set recursion functions as \emph{skip}.
Due to CBMC is a bounded model checker, it may answer \emph{infeasible} for actual \emph{feasible} pairs if the given checking bound is too small.
Thus, we set the appropriate values for the \textsf{\small{--unwind}} and \textsf{\small{--depth}} options, respectively, for the number of times loops to be unwound and the number of program steps to be processed. Specially, we determine the parameter values of \textsf{\small{--unwind}} and \textsf{\small{--depth}} options by a binary search algorithm to ensure that CBMC can identify as many pairs as possible within the given time bound. This avoids wasting testing budget on unnecessary path explorations, and also achieves a more fair evaluation basis. Therefore, each subject may be given different parameter values (the concrete parameter values of all subjects are available at~\cite{dft-artifact}).

Specifically, we use two metrics: (1) \emph{number of feasible, infeasible, and unknown pairs}; and (2) \emph{testing (checking) time of feasible and infeasible pairs} (denoted in medians).
The maximum testing time on each def-use pair is constrained as 5 minutes (\ie, 300 seconds, the same setting in RQ1). For each def-use pair, we also run 30 times to mitigate algorithm randomness.

In \emph{\textbf{Study 3}}, we answer \emph{\textbf{RQ3}} by combining the SE-based and SMC-based approaches. We interleave these two approaches as follows: 
the SE-based approach (configured with the cut point-guided path search strategy and the same settings in \emph{\textbf{RQ1}}) 
is first used to cover as many pairs as possible; then, for the remaining uncovered pairs, the SMC-based approach (configured with the same settings in \emph{\textbf{RQ2}}) is used to identify infeasible pairs (may also cover some feasible pairs). 
We continue the above iteration of the combined approach until the maximum allowed time bound (5 minutes for each pair) is used up. 
Specifically, we increase the time bound by 3 times at each iteration, \ie, 10s, 30s, 90s and 300s.

Specifically, we use two metrics: (1) \emph{coverage rate}; and (2) \emph{total testing time}, \ie, the total time required to enforce data-flow testing on all def-use pairs of one subject.
The coverage rate $C$ is computed by Formula~\ref{eq:coverage_rate}, where \emph{nTestObj} is the total number of pairs, and \emph{nFeasible} and \emph{nInfeasible} are the number of identified feasible and infeasible ones, respectively. 

\begin{footnotesize}
	\begin{align}
	C &=\frac{nFeasible}{nTestObj-nInfeasible}\times100\% \label{eq:coverage_rate}
	\end{align}
\end{footnotesize}

In all case studies, the testing time was measured in CPU time via the \emph{time} command in Linux. In particular, the testing time did not include IO operations for logging the testing results. We tested 31,634 ELOC with 28,354 pairs in total.
It took us nearly one and half months to run the experiments and analyze the results.

\subsection{Study 1}
\label{sec:study1}

\begin{table*}[t]
\newcommand{\tabincell}[2]{\begin{tabular}{@{}#1@{}}#2\end{tabular}}
\centering
\caption{Performance statistics of different search strategies for data-flow testing (the testing time is measured in seconds).}
\label{table:RQ1_search_strategy_testing}
\resizebox{\linewidth}{!}{
\def\arraystretch{1.2} 
\begin{tabular}{| c | c c | cc | cc | cc| cc|}
\toprule
 \multicolumn{1}{|c|}{\textbf{Subject}} 
 &\multicolumn{2}{c|}{\textbf{DFS}} 
 &\multicolumn{2}{c|}{\textbf{RSS}} 
 &\multicolumn{2}{c|}{\textbf{RSS-MD2U}}
 &\multicolumn{2}{c|}{\textbf{SDGS}}
 &\multicolumn{2}{c|}{\textbf{CPGS}}
 \\
 
&N & M (SIQR) 
&N & M (SIQR) 
&N & M (SIQR) 
&N & M (SIQR) 
&N & M (SIQR)  

\\ \hline
\tabincell{c}{\emph{factorization}}
& \tabincell{c}{22}
& \tabincell{c}{0.07 (0.01)}
& \tabincell{c}{22}
& \tabincell{c}{0.07 (0.01)}
& \tabincell{c}{22}
& \tabincell{c}{0.08 (0.02)}
& \tabincell{c}{22}
& \tabincell{c}{\underline{0.05} (0.01)}
& \tabincell{c}{22}
& \tabincell{c}{0.06 (0.01)}

\\ \hline
\tabincell{c}{\emph{power}}
& \tabincell{c}{6}
& \tabincell{c}{0.14 (0.00)}
& \tabincell{c}{9}
& \tabincell{c}{0.12 (0.01)}
& \tabincell{c}{9}
& \tabincell{c}{0.05 (0.01)}
& \tabincell{c}{5}
& \tabincell{c}{\underline{0.04} (0.00)}
& \tabincell{c}{9}
& \tabincell{c}{\underline{0.04} (0.00)}

\\ \hline
\tabincell{c}{\emph{find}}
& \tabincell{c}{77}
& \tabincell{c}{0.89(0.64)}
& \tabincell{c}{49}
& \tabincell{c}{\underline{0.19} (0.54)}
& \tabincell{c}{52}
& \tabincell{c}{0.26 (0.31)}
& \tabincell{c}{51}
& \tabincell{c}{0.63 (3.35)}
& \tabincell{c}{56}
& \tabincell{c}{0.22 (0.12)}

\\ \hline
\tabincell{c}{\emph{triangle}}
& \tabincell{c}{22}
& \tabincell{c}{0.24 (0.06)}
& \tabincell{c}{22}
& \tabincell{c}{0.24 (0.03)}
& \tabincell{c}{22}
& \tabincell{c}{0.26 (0.05)}
& \tabincell{c}{22}
& \tabincell{c}{0.25 (0.09)}
& \tabincell{c}{22}
& \tabincell{c}{\underline{0.13} (0.01)}

\\ \hline
\tabincell{c}{\emph{strmat}}
& \tabincell{c}{26}
& \tabincell{c}{2.84 (1.41)}
& \tabincell{c}{30}
& \tabincell{c}{\underline{0.10} (0.02)}
& \tabincell{c}{30}
& \tabincell{c}{0.13 (0.03)}
& \tabincell{c}{30}
& \tabincell{c}{0.12 (0.16)}
& \tabincell{c}{30}
& \tabincell{c}{\underline{0.10} (0.02)}

\\ \hline
\tabincell{c}{\emph{strmat2}}
& \tabincell{c}{28}
& \tabincell{c}{2.85 (1.40)}
& \tabincell{c}{32}
& \tabincell{c}{\underline{0.09} (0.01)}
& \tabincell{c}{32}
& \tabincell{c}{0.11 (0.02)}
& \tabincell{c}{32}
& \tabincell{c}{0.11 (0.03)}
& \tabincell{c}{32}
& \tabincell{c}{\underline{0.09} (0.02)}

\\ \hline
\tabincell{c}{\emph{textfmt}}
& \tabincell{c}{37}
& \tabincell{c}{0.16 (0.08)}
& \tabincell{c}{33}
& \tabincell{c}{\underline{0.05} (0.01)}
& \tabincell{c}{33}
& \tabincell{c}{0.11 (0.04)}
& \tabincell{c}{34}
& \tabincell{c}{0.06 (0.01)}
& \tabincell{c}{34}
& \tabincell{c}{0.06 (0.01)}

\\ \hline
\tabincell{c}{\emph{tcas}}
& \tabincell{c}{55}
& \tabincell{c}{\underline{0.13} (0.03)}
& \tabincell{c}{55}
& \tabincell{c}{0.21 (0.07)}
& \tabincell{c}{55}
& \tabincell{c}{0.67 (0.43)}
& \tabincell{c}{55}
& \tabincell{c}{0.16 (0.06)}
& \tabincell{c}{55}
& \tabincell{c}{0.14 (0.06)}

\\ \hline
\tabincell{c}{\emph{replace}}
& \tabincell{c}{69}
& \tabincell{c}{\underline{0.77} (0.14)}
& \tabincell{c}{308}
& \tabincell{c}{1.96 (15.23)}
& \tabincell{c}{312}
& \tabincell{c}{30.31 (21.97)}
& \tabincell{c}{295}
& \tabincell{c}{4.67 (5.46)}
& \tabincell{c}{309}
& \tabincell{c}{1.15 (3.58)}

\\ \hline
\tabincell{c}{\emph{totinfo}}
& \tabincell{c}{13}
& \tabincell{c}{0.52 (0.07)}
& \tabincell{c}{24}
& \tabincell{c}{\underline{0.42} (0.13)}
& \tabincell{c}{24}
& \tabincell{c}{0.64 (0.08)}
& \tabincell{c}{24}
& \tabincell{c}{\underline{0.42} (0.06)}
& \tabincell{c}{26}
& \tabincell{c}{0.52 (0.05)}

\\ \hline
\tabincell{c}{\emph{printtokens}}
& \tabincell{c}{48}
& \tabincell{c}{\underline{0.96} (0.62)}
& \tabincell{c}{115}
& \tabincell{c}{34.69 (24.16)}
& \tabincell{c}{106}
& \tabincell{c}{33.68 (22.59)}
& \tabincell{c}{107}
& \tabincell{c}{16.40 (25.53)}
& \tabincell{c}{115}
& \tabincell{c}{12.23 (20.21)}

\\ \hline
\tabincell{c}{\emph{printtokens2}}
& \tabincell{c}{124}
& \tabincell{c}{\underline{0.47} (0.32)}
& \tabincell{c}{148}
& \tabincell{c}{0.80 (3.72)}
& \tabincell{c}{149}
& \tabincell{c}{20.67 (18.42)}
& \tabincell{c}{149}
& \tabincell{c}{0.83 (3.48)}
& \tabincell{c}{154}
& \tabincell{c}{0.51 (1.43)}

\\ \hline
\tabincell{c}{\emph{schedule}}
& \tabincell{c}{15}
& \tabincell{c}{\underline{0.16} (0.03)}
& \tabincell{c}{83}
& \tabincell{c}{0.23 (3.98)}
& \tabincell{c}{86}
& \tabincell{c}{0.76 (5.05)}
& \tabincell{c}{77}
& \tabincell{c}{0.22 (1.67)}
& \tabincell{c}{86}
& \tabincell{c}{0.22 (1.84)}

\\ \hline
\tabincell{c}{\emph{schedule2}}
& \tabincell{c}{14}
& \tabincell{c}{\underline{0.15} (0.02)}
& \tabincell{c}{78}
& \tabincell{c}{0.20 (0.12)}
& \tabincell{c}{78}
& \tabincell{c}{0.48 (1.11)}
& \tabincell{c}{77}
& \tabincell{c}{0.21 (0.10)}
& \tabincell{c}{77}
& \tabincell{c}{0.21 (0.08)}

\\ \hline
\tabincell{c}{\emph{cdaudio}}
& \tabincell{c}{562}
& \tabincell{c}{3.13 (0.41)}
& \tabincell{c}{562}
& \tabincell{c}{3.27 (0.48)}
& \tabincell{c}{562}
& \tabincell{c}{15.54 (7.11)}
& \tabincell{c}{562}
& \tabincell{c}{3.77 (2.52)}
& \tabincell{c}{562}
& \tabincell{c}{\underline{3.08} (0.51)}

\\ \hline
\tabincell{c}{\emph{diskperf}}
& \tabincell{c}{285}
& \tabincell{c}{\underline{0.89} (0.19)}
& \tabincell{c}{302}
& \tabincell{c}{0.97 (0.21)}
& \tabincell{c}{302}
& \tabincell{c}{1.97 (4.80)}
& \tabincell{c}{299}
& \tabincell{c}{0.95 (0.23)}
& \tabincell{c}{302}
& \tabincell{c}{0.92 (0.18)}

\\ \hline
\tabincell{c}{\emph{floppy}}
& \tabincell{c}{249}
& \tabincell{c}{\underline{0.62} (0.11)}
& \tabincell{c}{249}
& \tabincell{c}{0.67 (0.11)}
& \tabincell{c}{249}
& \tabincell{c}{2.11 (1.76)}
& \tabincell{c}{249}
& \tabincell{c}{0.72 (0.14)}
& \tabincell{c}{249}
& \tabincell{c}{0.66 (0.13)}

\\ \hline
\tabincell{c}{\emph{floppy2}}
& \tabincell{c}{510}
& \tabincell{c}{2.22 (0.37)}
& \tabincell{c}{510}
& \tabincell{c}{2.14 (0.42)}
& \tabincell{c}{510}
& \tabincell{c}{6.44 (3.44)}
& \tabincell{c}{510}
& \tabincell{c}{3.62 (1.66)}
& \tabincell{c}{510}
& \tabincell{c}{\underline{2.03} (0.39)}

\\ \hline

\tabincell{c}{\emph{kbfiltr}}
& \tabincell{c}{116}
& \tabincell{c}{\underline{0.26} (0.05)}
& \tabincell{c}{116}
& \tabincell{c}{0.28 (0.05)}
& \tabincell{c}{116}
& \tabincell{c}{0.49 (0.41)}
& \tabincell{c}{116}
& \tabincell{c}{0.31 (0.04)}
& \tabincell{c}{116}
& \tabincell{c}{0.27 (0.05)}

\\ \hline
\tabincell{c}{\emph{kbfiltr2}}
& \tabincell{c}{266}
& \tabincell{c}{0.97 (0.15)}
& \tabincell{c}{266}
& \tabincell{c}{0.94 (0.18)}
& \tabincell{c}{266}
& \tabincell{c}{4.18 (3.51)}
& \tabincell{c}{266}
& \tabincell{c}{2.11 (1.08)}
& \tabincell{c}{266}
& \tabincell{c}{\underline{0.90} (0.20)}

\\ \hline
\tabincell{c}{\emph{s3\_srvr\_1a}}
& \tabincell{c}{113}
& \tabincell{c}{0.72 (0.17)}
& \tabincell{c}{171}
& \tabincell{c}{0.75 (0.19)}
& \tabincell{c}{171}
& \tabincell{c}{0.76 (0.19)}
& \tabincell{c}{165}
& \tabincell{c}{0.65 (0.17)}
& \tabincell{c}{171}
& \tabincell{c}{\underline{0.58} (0.18)}

\\ \hline
\tabincell{c}{\emph{s3\_srvr\_1b}}
& \tabincell{c}{30}
& \tabincell{c}{0.08 (0.02)}
& \tabincell{c}{43}
& \tabincell{c}{0.08 (0.02)}
& \tabincell{c}{43}
& \tabincell{c}{\underline{0.07} (0.02)}
& \tabincell{c}{43}
& \tabincell{c}{\underline{0.07} (0.02)}
& \tabincell{c}{45}
& \tabincell{c}{0.08 (0.02)}

\\ \hline
\tabincell{c}{\emph{s3\_clnt}}
& \tabincell{c}{647}
& \tabincell{c}{\underline{9.64} (2.01)}
& \tabincell{c}{648}
& \tabincell{c}{11.93 (2.64)}
& \tabincell{c}{647}
& \tabincell{c}{22.91 (8.19)}
& \tabincell{c}{633}
& \tabincell{c}{12.45 (2.45)}
& \tabincell{c}{648}
& \tabincell{c}{10.32 (1.88)}

\\ \hline
\tabincell{c}{\emph{s3\_clnt\_termination}}
& \tabincell{c}{333}
& \tabincell{c}{9.20 (1.64)}
& \tabincell{c}{332}
& \tabincell{c}{8.81 (1.87)}
& \tabincell{c}{332}
& \tabincell{c}{12.14 (1.29)}
& \tabincell{c}{332}
& \tabincell{c}{9.56 (1.74)}
& \tabincell{c}{414}
& \tabincell{c}{\underline{6.54} (1.03)}

\\ \hline
\tabincell{c}{\emph{s3\_srvr\_2}}
& \tabincell{c}{414}
& \tabincell{c}{\underline{14.35} (2.67)}
& \tabincell{c}{695}
& \tabincell{c}{24.23 (17.42)}
& \tabincell{c}{695}
& \tabincell{c}{31.86 (15.44)}
& \tabincell{c}{681}
& \tabincell{c}{19.93 (8.00)}
& \tabincell{c}{695}
& \tabincell{c}{16.45 (3.76)}

\\ \hline
\tabincell{c}{\emph{s3\_srvr\_7}}
& \tabincell{c}{420}
& \tabincell{c}{\underline{16.29} (3.44)}
& \tabincell{c}{710}
& \tabincell{c}{27.82 (20.06)}
& \tabincell{c}{710}
& \tabincell{c}{34.99 (17.11)}
& \tabincell{c}{686}
& \tabincell{c}{26.93 (12.46)}
& \tabincell{c}{815}
& \tabincell{c}{19.47 (5.41)}

\\ \hline
\tabincell{c}{\emph{s3\_srvr\_8}}
& \tabincell{c}{416}
& \tabincell{c}{\underline{16.77} (3.10)}
& \tabincell{c}{704}
& \tabincell{c}{23.61 (14.03)}
& \tabincell{c}{698}
& \tabincell{c}{36.15 (16.39)}
& \tabincell{c}{690}
& \tabincell{c}{23.45 (7.21)}
& \tabincell{c}{798}
& \tabincell{c}{17.04 (4.23)}

\\ \hline
\tabincell{c}{\emph{s3\_srvr\_10}}
& \tabincell{c}{431}
& \tabincell{c}{\underline{15.26} (2.40)}
& \tabincell{c}{683}
& \tabincell{c}{21.34 (5.19)}
& \tabincell{c}{683}
& \tabincell{c}{30.21 (7.03)}
& \tabincell{c}{664}
& \tabincell{c}{20.73 (5.85)}
& \tabincell{c}{683}
& \tabincell{c}{18.37 (3.90)}

\\ \hline
\tabincell{c}{\emph{s3\_srvr\_12}}
& \tabincell{c}{433}
& \tabincell{c}{\underline{25.76} (3.84)}
& \tabincell{c}{395}
& \tabincell{c}{39.51 (21.68)}
& \tabincell{c}{539}
& \tabincell{c}{64.25 (38.99)}
& \tabincell{c}{486}
& \tabincell{c}{39.50 (18.04)}
& \tabincell{c}{724}
& \tabincell{c}{25.88 (10.08)}

\\ \hline
\tabincell{c}{\emph{s3\_srvr\_13}}
& \tabincell{c}{437}
& \tabincell{c}{\underline{15.69} (2.07)}
& \tabincell{c}{489}
& \tabincell{c}{25.25 (18.07)}
& \tabincell{c}{558}
& \tabincell{c}{33.78 (21.20)}
& \tabincell{c}{572}
& \tabincell{c}{23.98 (11.49)}
& \tabincell{c}{744}
& \tabincell{c}{15.77 (6.12)}

\\ \hline
\tabincell{c}{\emph{osek\_control}}
& \tabincell{c}{398}
& \tabincell{c}{{7.69} (2.47)}
& \tabincell{c}{426}
& \tabincell{c}{15.77 (14.68)}
& \tabincell{c}{549}
& \tabincell{c}{23.32 (17.39)}
& \tabincell{c}{538}
& \tabincell{c}{14.17 (6.17)}
& \tabincell{c}{639}
& \tabincell{c}{\underline{6.15} (3.23)}

\\ \hline
\tabincell{c}{\emph{space\_control}}
& \tabincell{c}{582}
& \tabincell{c}{{15.90} (7.76)}
& \tabincell{c}{812}
& \tabincell{c}{33.49 (20.61)}
& \tabincell{c}{990}
& \tabincell{c}{48.77 (23.08)}
& \tabincell{c}{961}
& \tabincell{c}{28.86 (15.78)}
& \tabincell{c}{1,178}
& \tabincell{c}{\underline{6.32} (7.09)}

\\ \hline
\tabincell{c}{\emph{subway\_control}}
& \tabincell{c}{827}
& \tabincell{c}{{13.44} (7.69)}
& \tabincell{c}{967}
& \tabincell{c}{42.76 (28.65)}
& \tabincell{c}{1,290}
& \tabincell{c}{68.61 (31.73)}
& \tabincell{c}{1,244}
& \tabincell{c}{38.11 (21.46)}
& \tabincell{c}{1,654}
& \tabincell{c}{\underline{10.72} (6.72)}

\\ \hline \hline
\tabincell{c}{\textbf{Total}}
& \tabincell{c}{8,025}
& -
& \tabincell{c}{10,088}
& -

& \tabincell{c}{10,920}
& -
& \tabincell{c}{10,677}
& -
& \tabincell{c}{12,240}
& -

\\

\bottomrule
\end{tabular}}
\end{table*}

Table~\ref{table:RQ1_search_strategy_testing} shows the detailed performance statistics of different search strategies. The column \emph{Subject} represents the subject under test, DFS, RSS, RSS-MD2U, SDGS, CPGS, respectively, represent the search strategies. For each subject/strategy, it shows the number of covered def-use pairs (denoted by \emph{N}), the median value of testing times (denoted by \emph{M}) and the semi-interquartile range of testing times (denoted by \emph{SIQR}) on all covered pairs. 
In particular, for each subject, we underscore the strategy with lowest median value.
The last row gives the total number of covered pairs.
From Table~\ref{table:RQ1_search_strategy_testing}, we can observe
(1) Given enough testing time for all strategies (\ie, 5 minutes for each pair), CPGS covers 4215, 2152, 1320 and 1563 more pairs, respectively, than DFS, RSS, RSS-MD2U and SDGS.
It means CPGS achieves 40\%, 21.3\%, 12.1\%, 14.6\% higher data-flow coverage than these strategies, respectively.
(2) By comparing the median values of CPGS with those of other strategies, CPGS achieves more efficient data-flow testing in 14/33, 23/33, 32/33, 26/33 subjects than DFS, RSS, RSS-MD2U and SDGS, respectively. Note that the median value of DFS is low because it only covers many easily reachable pairs, which also explains why it achieves lowest coverage.

\begin{figure}[t]
	\begin{center}
		\includegraphics[width=0.48\textwidth]{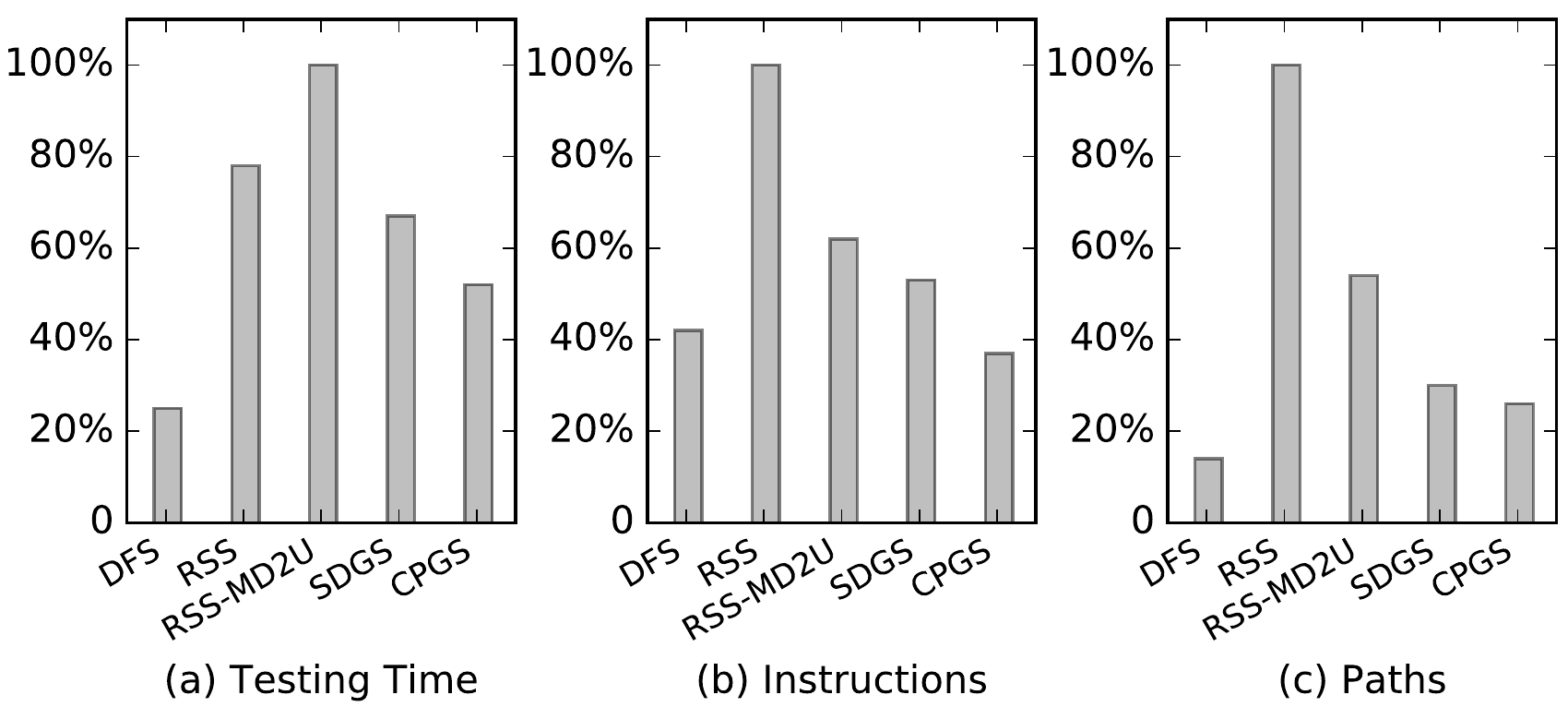}
	\end{center}
	\caption{Performance of each search strategy in terms of total testing time, number of executed program instructions, and number of explored program paths (normalized in percentage) on all 33 subjects.}
	\label{fig:se_strategies_comparision}
\end{figure}

Fig.~\ref{fig:se_strategies_comparision} shows the performance of these search strategies on all 33 subjects in terms of total testing time, the number of executed program instructions, and the number of explored program paths (due to the data difference, we normalized them in percentage). 
Note these three metrics are all computed on the covered pairs.
Apart from DFS (since it achieves rather low data-flow coverage), we can see CPGS outperforms all the other testing strategies.
In detail, CPGS reduces testing time by 15$\sim$48\%, the number of executed instructions by 16$\sim$63\%, and the number of explored paths by 28$\sim$74\%. The reason is that CPGS narrows down the search space by following the cut points and prunes unnecessary paths.

\begin{table*}[t]
	\newcommand{\tabincell}[2]{\begin{tabular}{@{}#1@{}}#2\end{tabular}}
	\centering
	\caption{Comparison of CPGS \wrt SDGS and CPGS for the testing time on each pair, where * denotes \emph{p < 0.05} and ** denotes \emph{p < 0.01}. $\hat{A}_{12}$ denotes Vargha-Delaney effect size.}
	\label{table:RQ1_stats_testing}
	\resizebox{0.6\linewidth}{!}{
		\begin{tabular}{| c | cc  | cc | }
			\toprule
			\multicolumn{1}{|c|}{\textbf{Subject}} 
			&\multicolumn{2}{c|}{\textbf{CPGS \vs SDGS}} 
			&\multicolumn{2}{c|}{\textbf{CPGS \vs RSS-MD2U}} 
			
			\\
			&\#Common Pairs & $\hat{A}_{12}$
			&\#Common Pairs & $\hat{A}_{12}$
			
			\\ \hline
			\tabincell{c}{\emph{replace}}
			& 292
			& \tabincell{c}{0.64*}
			& 309
			& \tabincell{c}{0.77**}
			
			\\ \hline
			\tabincell{c}{\emph{printtokens}}
			& 106
			& \tabincell{c}{0.61*}
			& 106
			& \tabincell{c}{0.65*}
			
			\\ \hline
			\tabincell{c}{\emph{printtokens2}}
			& 149
			& \tabincell{c}{0.65**}
			& 149
			& \tabincell{c}{0.86**}
			
			\\ \hline
			\tabincell{c}{\emph{cdaudio}}
			& 562
			& \tabincell{c}{0.57*}
			& 562
			& \tabincell{c}{0.98**}
			
			\\ \hline
			\tabincell{c}{\emph{floppy2}}
			& 510
			& \tabincell{c}{0.62*}
			& 510
			& \tabincell{c}{0.95**}
			
			\\ \hline
			\tabincell{c}{\emph{kbfilter2}}
			& 264
			& \tabincell{c}{0.62*}
			& 264
			& \tabincell{c}{0.93**}
			
			\\ \hline
			\tabincell{c}{\emph{s3\_clnt}}
			& 633
			& \tabincell{c}{0.65**}
			& 647
			& \tabincell{c}{0.92**}
			
			\\ \hline
			\tabincell{c}{\emph{s3\_clnt\_termination}}
			& 332
			& \tabincell{c}{0.62*}
			& 332
			& \tabincell{c}{0.91**}
			
			\\ \hline
			\tabincell{c}{\emph{s3\_srvr\_2}}
			& 680
			& \tabincell{c}{0.71**}
			& 691
			& \tabincell{c}{0.84**}
			
			\\ \hline
			\tabincell{c}{\emph{s3\_srvr\_7}}
			& 683
			& \tabincell{c}{0.68**}
			& 706
			& \tabincell{c}{0.82**}
			
			\\ \hline
			\tabincell{c}{\emph{s3\_srvr\_8}}
			& 686
			& \tabincell{c}{0.65**}
			& 690
			& \tabincell{c}{0.84**}
			
			\\ \hline
			\tabincell{c}{\emph{s3\_srvr\_10}}
			& 662
			& \tabincell{c}{0.64**}
			& 676
			& \tabincell{c}{0.84**}
			
			\\ \hline
			\tabincell{c}{\emph{s3\_srvr\_12}}
			& 485
			& \tabincell{c}{0.75**}
			& 539
			& \tabincell{c}{0.85**}
			
			\\ \hline
			\tabincell{c}{\emph{s3\_srvr\_13}}
			& 568
			& \tabincell{c}{0.72**}
			& 553
			& \tabincell{c}{0.83**}
			
			\\ \hline
			\tabincell{c}{\emph{osek\_control}}
			& 533
			& \tabincell{c}{0.73**}
			& 541
			& \tabincell{c}{0.88**}
			
			\\ \hline
			\tabincell{c}{\emph{space\_control}}
			& 957
			& \tabincell{c}{0.75**}
			& 979
			& \tabincell{c}{0.92**}
			
			\\ \hline
			\tabincell{c}{\emph{subway\_control}}
			& 1,237
			& \tabincell{c}{0.81**}
			& 1,279
			& \tabincell{c}{0.93**}
			
			\\
			\bottomrule
\end{tabular}}
\end{table*}

Table~\ref{table:RQ1_stats_testing} further evaluates the statistical significance for the testing time on each pair among CPGS, SDGS and RSS-MD2U (the three best strategies indicated by Table~\ref{table:RQ1_search_strategy_testing} and Fig.~\ref{fig:se_strategies_comparision}) on the 17 complicated subjects. Specifically, we pair-wisely compare CPGS with SDGS, and CPGS with RSS-MD2U (denoted by the columns \emph{CPGS vs. SDGS}, and \emph{CPGS vs. RSS-MD2U}), respectively, on their commonly covered pairs. We can also note CPGS covers almost all the pairs covered by SDGS or RSS-MD2U.
Following the guidelines of statistical tests~\cite{ArcuriB14}, we use Wilcoxon T test (\ie, Wilcoxon signed-rank test, a non-parametric statistical hypothesis test for paired samples), and Vargha-Delaney $\hat{A}_{12}$ effect size, to compare these strategies.
The test result in Table~\ref{table:RQ1_stats_testing} suggestions that CPGS significantly reduces the testing time on each pair (\emph{p-value<0.01} or \emph{p-value<0.05}) \wrt SDGS and RSS-MD2U. 
Moreover, CPGS outperforms SDGS with large effect size on 6/17 subjects and medium effective size on 6/17 subjects; CPGS outperforms RSS-MD2U with large effective size on 16/17 subjects.
Based on these analysis, we can conclude that CPGS performs the best in the data-flow testing.

Additionally, we also find some interesting phenomenons worth elaborating: (1) RSS has lower testing time than RSS-MD2U due to its lower state scheduling overhead (see Fig.~\ref{fig:se_strategies_comparision}).
However, RSS-MD2U covers 832 more pairs than RSS (see Table~\ref{table:RQ1_search_strategy_testing}). 
Compared with RSS, RSS-MD2U reduces the number of executed instructions and explored paths by 38\% and 46\%, respectively (see Fig.~\ref{fig:se_strategies_comparision}).
It indicates, by combining coverage-optimized heuristics, RSS-MD2U can indeed improve the effectiveness of data-flow testing, but at the same time it incurs more state scheduling overhead.
(2) SDGS is comparable to CPGS in the number of explored paths, but much less effective than CPGS (1320 fewer pairs). The reason is that SDGS only uses the instruction distance as the guidance. When SDGS always chooses to explore the directions with shorter distance, it can easily be trapped in tight program loops. On the other hand, the backtrack strategy integrated in CPGS can help alleviate this issue.
(3) DFS costs least testing time but also covers least number of pairs among all the strategies. In rare cases, due to the characteristics of program, DFS may cover more pairs than the other strategies (see \emph{find} in Table~\ref{table:RQ1_search_strategy_testing}).

\vspace{2pt}
\noindent\fbox{
	\parbox{0.95\linewidth}{
		\textbf{Answer to RQ1:} \textit{In summary, our cut-point guided search (CPGS) strategy performs the best for data-flow testing. It improves 12$\sim$40\% data-flow coverage, and at the same time reduces the total testing time by 15$\sim$48\% and the number of explored paths by 28$\sim$74\%. Therefore, the SE-based approach, enhanced with the cut point guided search strategy, is efficient for data-flow testing.} 
	}
}

\subsection{Study 2}
\label{sec:study2}

\begin{table*}[!htbp]
\scriptsize
\newcommand{\tabincell}[2]{\begin{tabular}{@{}#1@{}}#2\end{tabular}}
\centering
\caption{Performance statistics of the SMC-based reduction approach $CEGAR_{BLAST}$, $CEGAR_{CPAchecker}$ and $BMC_{CBMC}$ for data-flow testing (the testing time is measured in seconds), where * denotes the numbers in the corresponding columns are only valid modulo the given checking bound for $BMC_{CBMC}$.}
\label{tab:model_checker_testing}
\resizebox{\linewidth}{!}{
\begin{tabular}{| c | c c c c c | c c c c c || c c c c c |}
\toprule
 \multicolumn{1}{|c|}{\textbf{Subject}} 
 &\multicolumn{5}{c|}{\textbf{$CEGAR_{BLAST}$}} 
 &\multicolumn{5}{c||}{\textbf{$CEGAR_{CPAchecker}$}} 
 &\multicolumn{5}{c|}{\textbf{$BMC_{CBMC}$}} 
 \\

&F &I &U & $M_F$ & $M_I$
&F &I &U & $M_F$ & $M_I$
&F &I* &U* & $M_F$ & $M_I$*

\\ \hline

\tabincell{c}{\emph{factorization}}
& \tabincell{c}{35}
& \tabincell{c}{4}
& \tabincell{c}{8}
& \tabincell{c}{0.04}
& \tabincell{c}{0.20}
& \tabincell{c}{26}
& \tabincell{c}{4}
& \tabincell{c}{17}
& \tabincell{c}{3.26}
& \tabincell{c}{3.04}
& \tabincell{c}{41}
& \tabincell{c}{{6}}
& \tabincell{c}{0}
& \tabincell{c}{0.34}
& \tabincell{c}{0.28}
\\ \hline

\tabincell{c}{\emph{power}}
& \tabincell{c}{9}
& \tabincell{c}{2}
& \tabincell{c}{0}
& \tabincell{c}{0.03}
& \tabincell{c}{0.49}
& \tabincell{c}{9}
& \tabincell{c}{2}
& \tabincell{c}{0}
& \tabincell{c}{3.10}
& \tabincell{c}{2.97}
& \tabincell{c}{9}
& \tabincell{c}{2}
& \tabincell{c}{0}
& \tabincell{c}{0.13}
& \tabincell{c}{0.12}
\\ \hline

\tabincell{c}{\emph{find}}
& \tabincell{c}{85}
& \tabincell{c}{12}
& \tabincell{c}{2}
& \tabincell{c}{6.44}
& \tabincell{c}{3.22}
& \tabincell{c}{74}
& \tabincell{c}{14}
& \tabincell{c}{11}
& \tabincell{c}{4.37}
& \tabincell{c}{3.60}
& \tabincell{c}{77}
& \tabincell{c}{{22}}
& \tabincell{c}{0}
& \tabincell{c}{0.29}
& \tabincell{c}{0.29}
\\ \hline

\tabincell{c}{\emph{triangle}}
& \tabincell{c}{22}
& \tabincell{c}{24}
& \tabincell{c}{0}
& \tabincell{c}{0.04}
& \tabincell{c}{0.69}
& \tabincell{c}{22}
& \tabincell{c}{24}
& \tabincell{c}{0}
& \tabincell{c}{3.09}
& \tabincell{c}{2.83}
& \tabincell{c}{22}
& \tabincell{c}{24}
& \tabincell{c}{0}
& \tabincell{c}{0.11}
& \tabincell{c}{0.11}
\\ \hline

\tabincell{c}{\emph{strmat}}
& \tabincell{c}{30}
& \tabincell{c}{2}
& \tabincell{c}{0}
& \tabincell{c}{1.81}
& \tabincell{c}{1.39}
& \tabincell{c}{30}
& \tabincell{c}{2}
& \tabincell{c}{0}
& \tabincell{c}{4.67}
& \tabincell{c}{2.98}
& \tabincell{c}{30}
& \tabincell{c}{2}
& \tabincell{c}{0}
& \tabincell{c}{0.15}
& \tabincell{c}{0.15}
\\ \hline

\tabincell{c}{\emph{strmat2}}
& \tabincell{c}{32}
& \tabincell{c}{6}
& \tabincell{c}{0}
& \tabincell{c}{5.08}
& \tabincell{c}{1.46}
& \tabincell{c}{32}
& \tabincell{c}{6}
& \tabincell{c}{0}
& \tabincell{c}{4.91}
& \tabincell{c}{3.79}
& \tabincell{c}{32}
& \tabincell{c}{6}
& \tabincell{c}{0}
& \tabincell{c}{0.15}
& \tabincell{c}{0.15}
\\ \hline

\tabincell{c}{\emph{textfmt}}
& \tabincell{c}{47}
& \tabincell{c}{18}
& \tabincell{c}{8}
& \tabincell{c}{10.08}
& \tabincell{c}{13.90}
& \tabincell{c}{53}
& \tabincell{c}{20}
& \tabincell{c}{0}
& \tabincell{c}{12.69}
& \tabincell{c}{5.50}
& \tabincell{c}{53}
& \tabincell{c}{20}
& \tabincell{c}{0}
& \tabincell{c}{3.84}
& \tabincell{c}{3.95}
\\ \hline

\tabincell{c}{\emph{tcas}}
& \tabincell{c}{55}
& \tabincell{c}{31}
& \tabincell{c}{0}
& \tabincell{c}{1.35}
& \tabincell{c}{1.31}
& \tabincell{c}{55}
& \tabincell{c}{31}
& \tabincell{c}{0}
& \tabincell{c}{4.08}
& \tabincell{c}{3.43}
& \tabincell{c}{55}
& \tabincell{c}{31}
& \tabincell{c}{0}
& \tabincell{c}{0.13}
& \tabincell{c}{0.13}
\\ \hline

\tabincell{c}{\emph{replace}}
& \tabincell{c}{275}
& \tabincell{c}{{73}}
& \tabincell{c}{39}
& \tabincell{c}{6.17}
& \tabincell{c}{13.60}
& \tabincell{c}{211}
& \tabincell{c}{48}
& \tabincell{c}{128}
& \tabincell{c}{11.21}
& \tabincell{c}{10.84}
& \tabincell{c}{339}
& \tabincell{c}{48}
& \tabincell{c}{0}
& \tabincell{c}{101.47}
& \tabincell{c}{93.20}
\\ \hline

\tabincell{c}{\emph{totinfo}}
& \tabincell{c}{-}
& \tabincell{c}{-}
& \tabincell{c}{279}
& \tabincell{c}{-}
& \tabincell{c}{-}
& \tabincell{c}{76}
& \tabincell{c}{24}
& \tabincell{c}{179}
& \tabincell{c}{14.80}
& \tabincell{c}{11.50}
& \tabincell{c}{69}
& \tabincell{c}{{209}}
& \tabincell{c}{1}
& \tabincell{c}{54.36}
& \tabincell{c}{7.68}
\\ \hline

\tabincell{c}{\emph{printtokens}}
& \tabincell{c}{165}
& \tabincell{c}{57}
& \tabincell{c}{18}
& \tabincell{c}{6.15}
& \tabincell{c}{13.67}
& \tabincell{c}{178}
& \tabincell{c}{58}
& \tabincell{c}{4}
& \tabincell{c}{8.94}
& \tabincell{c}{6.22}
& \tabincell{c}{169}
& \tabincell{c}{{71}}
& \tabincell{c}{0}
& \tabincell{c}{15.94}
& \tabincell{c}{9.26}
\\ \hline

\tabincell{c}{\emph{printtokens2}}
& \tabincell{c}{188}
& \tabincell{c}{4}
& \tabincell{c}{0}
& \tabincell{c}{13.35}
& \tabincell{c}{7.25}
& \tabincell{c}{188}
& \tabincell{c}{4}
& \tabincell{c}{0}
& \tabincell{c}{13.21}
& \tabincell{c}{6.48}
& \tabincell{c}{187}
& \tabincell{c}{{5}}
& \tabincell{c}{0}
& \tabincell{c}{28.29}
& \tabincell{c}{28.89}
\\ \hline

\tabincell{c}{\emph{schedule}}
& \tabincell{c}{37}
& \tabincell{c}{0}
& \tabincell{c}{81}
& \tabincell{c}{0.05}
& \tabincell{c}{-}
& \tabincell{c}{92}
& \tabincell{c}{22}
& \tabincell{c}{4}
& \tabincell{c}{7.82}
& \tabincell{c}{11.13}
& \tabincell{c}{85}
& \tabincell{c}{{33}}
& \tabincell{c}{0}
& \tabincell{c}{33.04}
& \tabincell{c}{31.15}
\\ \hline

\tabincell{c}{\emph{schedule2}}
& \tabincell{c}{33}
& \tabincell{c}{0}
& \tabincell{c}{74}
& \tabincell{c}{0.04}
& \tabincell{c}{-}
& \tabincell{c}{42}
& \tabincell{c}{0}
& \tabincell{c}{65}
& \tabincell{c}{7.32}
& \tabincell{c}{-}
& \tabincell{c}{35}
& \tabincell{c}{{55}}
& \tabincell{c}{17}
& \tabincell{c}{189.03}
& \tabincell{c}{205.14}
\\ \hline

\tabincell{c}{\emph{cdaudio}}
& \tabincell{c}{544}
& \tabincell{c}{179}
& \tabincell{c}{50}
& \tabincell{c}{0.41}
& \tabincell{c}{0.81}
& \tabincell{c}{-}
& \tabincell{c}{190}
& \tabincell{c}{583}
& \tabincell{c}{-}
& \tabincell{c}{6.36}
& \tabincell{c}{566}
& \tabincell{c}{{207}}
& \tabincell{c}{0}
& \tabincell{c}{1.50}
& \tabincell{c}{1.58}
\\ \hline

\tabincell{c}{\emph{diskperf}}
& \tabincell{c}{270}
& \tabincell{c}{117}
& \tabincell{c}{56}
& \tabincell{c}{0.16}
& \tabincell{c}{0.41}
& \tabincell{c}{265}
& \tabincell{c}{119}
& \tabincell{c}{59}
& \tabincell{c}{5.08}
& \tabincell{c}{5.18}
& \tabincell{c}{304}
& \tabincell{c}{{139}}
& \tabincell{c}{0}
& \tabincell{c}{0.89}
& \tabincell{c}{0.85}
\\ \hline

\tabincell{c}{\emph{floppy}}
& \tabincell{c}{240}
& \tabincell{c}{69}
& \tabincell{c}{22}
& \tabincell{c}{0.18}
& \tabincell{c}{0.43}
& \tabincell{c}{244}
& \tabincell{c}{65}
& \tabincell{c}{22}
& \tabincell{c}{4.75}
& \tabincell{c}{5.23}
& \tabincell{c}{250}
& \tabincell{c}{{81}}
& \tabincell{c}{0}
& \tabincell{c}{0.72}
& \tabincell{c}{0.71}
\\ \hline

\tabincell{c}{\emph{floppy2}}
& \tabincell{c}{497}
& \tabincell{c}{82}
& \tabincell{c}{27}
& \tabincell{c}{0.33}
& \tabincell{c}{0.59}
& \tabincell{c}{501}
& \tabincell{c}{79}
& \tabincell{c}{26}
& \tabincell{c}{5.28}
& \tabincell{c}{5.68}
& \tabincell{c}{511}
& \tabincell{c}{{95}}
& \tabincell{c}{0}
& \tabincell{c}{1.51}
& \tabincell{c}{1.35}
\\ \hline

\tabincell{c}{\emph{kbfiltr}}
& \tabincell{c}{107}
& \tabincell{c}{49}
& \tabincell{c}{20}
& \tabincell{c}{0.09}
& \tabincell{c}{0.10}
& \tabincell{c}{107}
& \tabincell{c}{51}
& \tabincell{c}{18}
& \tabincell{c}{3.85}
& \tabincell{c}{3.61}
& \tabincell{c}{116}
& \tabincell{c}{{60}}
& \tabincell{c}{0}
& \tabincell{c}{0.32}
& \tabincell{c}{0.32}
\\ \hline

\tabincell{c}{\emph{kbfiltr2}}
& \tabincell{c}{249}
& \tabincell{c}{74}
& \tabincell{c}{39}
& \tabincell{c}{0.15}
& \tabincell{c}{0.20}
& \tabincell{c}{249}
& \tabincell{c}{76}
& \tabincell{c}{37}
& \tabincell{c}{4.14}
& \tabincell{c}{4.28}
& \tabincell{c}{264}
& \tabincell{c}{{98}}
& \tabincell{c}{0}
& \tabincell{c}{0.56}
& \tabincell{c}{0.54}
\\ \hline

\tabincell{c}{\emph{s3\_srvr\_1a}}
& \tabincell{c}{123}
& \tabincell{c}{295}
& \tabincell{c}{156}
& \tabincell{c}{2.69}
& \tabincell{c}{1.37}
& \tabincell{c}{123}
& \tabincell{c}{295}
& \tabincell{c}{156}
& \tabincell{c}{4.94}
& \tabincell{c}{4.13}
& \tabincell{c}{170}
& \tabincell{c}{{404}}
& \tabincell{c}{0}
& \tabincell{c}{0.69}
& \tabincell{c}{0.69}
\\ \hline

\tabincell{c}{\emph{s3\_srvr\_1b}}
& \tabincell{c}{43}
& \tabincell{c}{96}
& \tabincell{c}{0}
& \tabincell{c}{0.36}
& \tabincell{c}{0.80}
& \tabincell{c}{43}
& \tabincell{c}{96}
& \tabincell{c}{0}
& \tabincell{c}{3.31}
& \tabincell{c}{3.26}
& \tabincell{c}{43}
& \tabincell{c}{96}
& \tabincell{c}{0}
& \tabincell{c}{0.16}
& \tabincell{c}{0.16}
\\ \hline

\tabincell{c}{\emph{s3\_clnt}}
& \tabincell{c}{625}
& \tabincell{c}{969}
& \tabincell{c}{83}
& \tabincell{c}{14.62}
& \tabincell{c}{4.86}
& \tabincell{c}{661}
& \tabincell{c}{1012}
& \tabincell{c}{4}
& \tabincell{c}{9.72}
& \tabincell{c}{5.12}
& \tabincell{c}{665}
& \tabincell{c}{1012}
& \tabincell{c}{0}
& \tabincell{c}{39.62}
& \tabincell{c}{41.52}
\\ \hline

\tabincell{c}{\emph{s3\_clnt\_termination}}
& \tabincell{c}{540}
& \tabincell{c}{964}
& \tabincell{c}{91}
& \tabincell{c}{15.16}
& \tabincell{c}{4.35}
& \tabincell{c}{582}
& \tabincell{c}{1012}
& \tabincell{c}{1}
& \tabincell{c}{10.11}
& \tabincell{c}{5.42}
& \tabincell{c}{583}
& \tabincell{c}{1012}
& \tabincell{c}{0}
& \tabincell{c}{22.57}
& \tabincell{c}{24.02}
\\ \hline

\tabincell{c}{\emph{s3\_srvr\_2}}
& \tabincell{c}{418}
& \tabincell{c}{1034}
& \tabincell{c}{678}
& \tabincell{c}{3.50}
& \tabincell{c}{5.21}
& \tabincell{c}{698}
& \tabincell{c}{1344}
& \tabincell{c}{88}
& \tabincell{c}{11.00}
& \tabincell{c}{5.25}
& \tabincell{c}{704}
& \tabincell{c}{{1420}}
& \tabincell{c}{6}
& \tabincell{c}{102.85}
& \tabincell{c}{128.09}
\\ \hline

\tabincell{c}{\emph{s3\_srvr\_7}}
& \tabincell{c}{393}
& \tabincell{c}{1073}
& \tabincell{c}{794}
& \tabincell{c}{3.34}
& \tabincell{c}{4.78}
& \tabincell{c}{712}
& \tabincell{c}{1458}
& \tabincell{c}{90}
& \tabincell{c}{11.09}
& \tabincell{c}{5.43}
& \tabincell{c}{721}
& \tabincell{c}{{1538}}
& \tabincell{c}{1}
& \tabincell{c}{100.42}
& \tabincell{c}{124.45}
\\ \hline

\tabincell{c}{\emph{s3\_srvr\_8}}
& \tabincell{c}{425}
& \tabincell{c}{1183}
& \tabincell{c}{714}
& \tabincell{c}{3.98}
& \tabincell{c}{5.07}
& \tabincell{c}{701}
& \tabincell{c}{1529}
& \tabincell{c}{92}
& \tabincell{c}{10.70}
& \tabincell{c}{5.58}
& \tabincell{c}{706}
& \tabincell{c}{{1604}}
& \tabincell{c}{12}
& \tabincell{c}{107.31}
& \tabincell{c}{137.57}
\\ \hline

\tabincell{c}{\emph{s3\_srvr\_10}}
& \tabincell{c}{414}
& \tabincell{c}{1060}
& \tabincell{c}{726}
& \tabincell{c}{5.00}
& \tabincell{c}{32.16}
& \tabincell{c}{678}
& \tabincell{c}{1432}
& \tabincell{c}{90}
& \tabincell{c}{8.40}
& \tabincell{c}{4.45}
& \tabincell{c}{683}
& \tabincell{c}{{1517}}
& \tabincell{c}{0}
& \tabincell{c}{125.92}
& \tabincell{c}{111.44}
\\ \hline

\tabincell{c}{\emph{s3\_srvr\_12}}
& \tabincell{c}{388}
& \tabincell{c}{1611}
& \tabincell{c}{1126}
& \tabincell{c}{4.13}
& \tabincell{c}{7.00}
& \tabincell{c}{759}
& \tabincell{c}{2231}
& \tabincell{c}{135}
& \tabincell{c}{9.86}
& \tabincell{c}{6.19}
& \tabincell{c}{758}
& \tabincell{c}{{2345}}
& \tabincell{c}{22}
& \tabincell{c}{125.43}
& \tabincell{c}{144.04}
\\ \hline

\tabincell{c}{\emph{s3\_srvr\_13}}
& \tabincell{c}{431}
& \tabincell{c}{1111}
& \tabincell{c}{783}
& \tabincell{c}{4.43}
& \tabincell{c}{5.61}
& \tabincell{c}{745}
& \tabincell{c}{1500}
& \tabincell{c}{80}
& \tabincell{c}{10.04}
& \tabincell{c}{4.55}
& \tabincell{c}{737}
& \tabincell{c}{{1569}}
& \tabincell{c}{19}
& \tabincell{c}{111.75}
& \tabincell{c}{137.98}
\\ \hline

\tabincell{c}{\emph{osek\_control}}
& \tabincell{c}{607}
& \tabincell{c}{150}
& \tabincell{c}{170}
& \tabincell{c}{9.43}
& \tabincell{c}{8.09}
& \tabincell{c}{645}
& \tabincell{c}{199}
& \tabincell{c}{87}
& \tabincell{c}{7.72}
& \tabincell{c}{6.54}
& \tabincell{c}{623}
& \tabincell{c}{{277}}
& \tabincell{c}{27}
& \tabincell{c}{52.76}
& \tabincell{c}{65.12}
\\ \hline

\tabincell{c}{\emph{space\_control}}
& \tabincell{c}{1012}
& \tabincell{c}{457}
& \tabincell{c}{270}
& \tabincell{c}{13.34}
& \tabincell{c}{14.72}
& \tabincell{c}{1156}
& \tabincell{c}{495}
& \tabincell{c}{88}
& \tabincell{c}{9.85}
& \tabincell{c}{10.57}
& \tabincell{c}{1137}
& \tabincell{c}{{579}}
& \tabincell{c}{23}
& \tabincell{c}{67.23}
& \tabincell{c}{75.94}
\\ \hline

\tabincell{c}{\emph{subway\_control}}
& \tabincell{c}{1543}
& \tabincell{c}{842}
& \tabincell{c}{510}
& \tabincell{c}{21.52}
& \tabincell{c}{25.73}
& \tabincell{c}{1793}
& \tabincell{c}{1013}
& \tabincell{c}{89}
& \tabincell{c}{21.18}
& \tabincell{c}{14.12}
& \tabincell{c}{1787}
& \tabincell{c}{{1069}}
& \tabincell{c}{27}
& \tabincell{c}{93.91}
& \tabincell{c}{121.67}
\\ \hline
\hline

\tabincell{c}{\textbf{Total}}
& \tabincell{c}{9882}
& \tabincell{c}{11648}
& \tabincell{c}{6824}
& -
& -
& \tabincell{c}{11750}
& \tabincell{c}{14455}
& \tabincell{c}{2153}
& -
& -
& \tabincell{c}{12531}
& \tabincell{c}{15656}
& \tabincell{c}{155}
& -
& -
\\ \hline


\end{tabular}}
\end{table*}

Table~\ref{tab:model_checker_testing} gives the detailed performance statistics of the SMC-based reduction approach for data-flow testing, where "-" means the corresponding data does not apply or not available\footnote{BLAST hangs on \emph{totinfo}, and CPAchecker crashes on parts of pairs from \emph{cdaudio}.}.
For each implementation instance, it shows the number of feasible (denoted by $F$), infeasible (denoted by $I$) and unknown (denoted by $U$) pairs, and the median of testing times on feasible and infeasible pairs (denoted by $M_F$ and $M_I$, respectively).
The last row gives the total number of feasible, infeasible, and unknown pairs.
Note that BLAST and CPAchecker implement CEGAR-based model checking approach, thereby they can give the \emph{feasible} or \emph{infeasible} conclusion (or \emph{unknown} due to undecidability of the problem) without any false positives.
As for CBMC, it implements the bounded model checking technique, and in general cannot eliminate infeasible pairs as certain. Thus, the numbers of infeasible pairs identified by CBMC are only valid modulo the given checking bound.
From the results, we can see CPAchecker and CBMC are more effective than BLAST in terms of feasible pairs as well as infeasible pairs.
In detail, BLAST, CPAchecker and CBMC, respectively, cover 9882, 11750, 12531 feasible pairs, and identify 11648, 14455 and 15656 infeasible ones.

\begin{figure}[t]
	\begin{center}
		\includegraphics[width=0.7\textwidth]{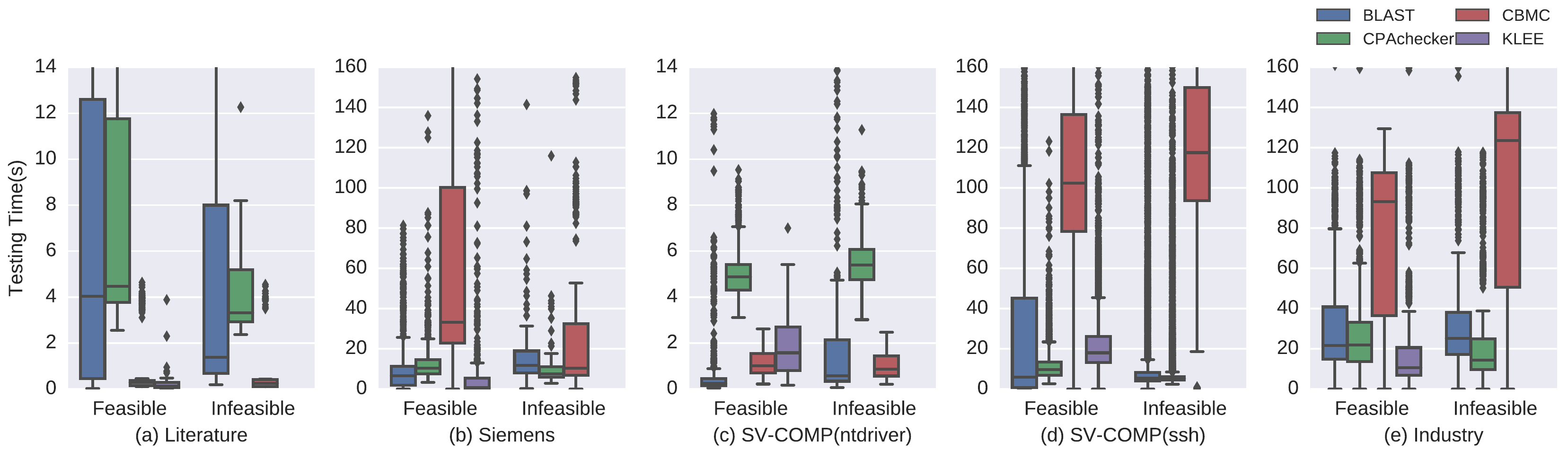}
	\end{center}
	\caption{Boxplot graphs of testing time for feasible and infeasible pairs concluded by BLAST, CPAchecker and CBMC (also including the testing time of KLEE configured with CPGS for feasible pairs) on the five subject groups, \ie, (a) Literature, (b) Siemens, (c) SV-COMP (ntd-driver), (d) SV-COMP (ssh), and (e) Industry.}
	\label{fig:mc_testing_time_feasible_infeasible}
\end{figure}

Fig.~\ref{fig:mc_testing_time_feasible_infeasible} shows the testing time of BLAST, CPAchecker and CBMC on each feasible and infeasible pair, respectively, categorized by the five benchmark groups, \ie, \emph{Literature}, \emph{Siemens}, \emph{SV-COMP (ntd-driver)}, \emph{SV-COMP (ssh)}, and \emph{Industry}.
On average, the cyclomatic complexity per function of them is 6.7, 4.7, 6.4, 88.5 and 10.2, respectively.
From the results of \emph{Literature} and \emph{SV-COMP (ntd-driver)} subjects in Fig.~\ref{fig:mc_testing_time_feasible_infeasible}, we can see CBMC is more efficient than BLAST and CPAchecker for both feasible and infeasible pairs.
The reason is that CBMC is an explicit path based model checking technique without abstraction, while CEGAR-based technique needs more additional time to setup the abstract program model before actual analysis, and takes more time during abstraction refinement.
Thus, CBMC is more suitable for small-scale subjects (\eg, the \emph{Literature} subjects) or medium-sized subjects without very complicated loop or recursion structures (\eg, the \emph{SV-COMP (ntd-driver)} subjects, but these subjects do have complicated call chains). 
For more complicated subjects, given a large checking bound (the loop bounds should be large enough to ensure as many feasible pairs as possible are covered), CBMC may degrade its performance in terms of testing time (see the column \emph{$M_I$*} of CBMC for the \emph{Siemens}, \emph{SV-COMP (ssh)} and \emph{Industry} subjects).
But we can also see CBMC actually covers 2649 and 781 more feasible pairs than BLAST and CPAchecker, respectively.
In contrast, the CEGAR-based approach is more efficient on \emph{Siemens}, \emph{SV-COMP (ssh)} and \emph{Industry} subjects, which are much more complex than \emph{Literature} and \emph{SV-COMP (ntd-driver)}.
The reason is that the CEGAR-based approach works on an initial program abstraction, and continuously refine the program states towards the target property (the test obligation in our context).  
Due to this model checking paradigm, the CEGAR-based approach could conclude infeasibility at a coarse program model level instead of the actual program level. In such cases, this CEGAR analysis greatly pays off, compared with the BMC-based approach.

Fig.~\ref{fig:venn_diagram_mc_tools} shows the venn diagrams of feasible, infeasible and unknown pairs concluded by the three model checkers BLAST, CPAchecker and CBMC.
We can get several important observations: (1) The number of feasible and infeasible pairs identified by all the three model checkers accounts for the majority, occupying 69.2\% and 71.9\% pairs, respectively. It indicates both the CEGAR-based and BMC-based approaches are practical and can give consistent answers in most cases. 
(2) Although the infeasible pairs identified by the BMC-based approach are only valid modulo the given checking bound, we can see CBMC in fact correctly concludes a large portion of infeasible pairs. Compared with the infeasiblity results of CPAchecker, 91.8\% (14,380/15,656) infeasible pairs identified by CBMC are indeed infeasible given appropriate checking bounds.
Thus, the BMC-based approach can still serve as a heuristic-criterion to identify hard-to-cover (probably infeasible) pairs, and better prioritize testing efforts.
(3) CPAchecker and CBMC have the largest number of overlapped pairs than the other combinations. They identify 94.7\% feasible and 90.3\% infeasible pairs, respectively. It indicates these two tools are more effective.

\begin{figure}
	\begin{center}
		\includegraphics[width=0.6\textwidth]{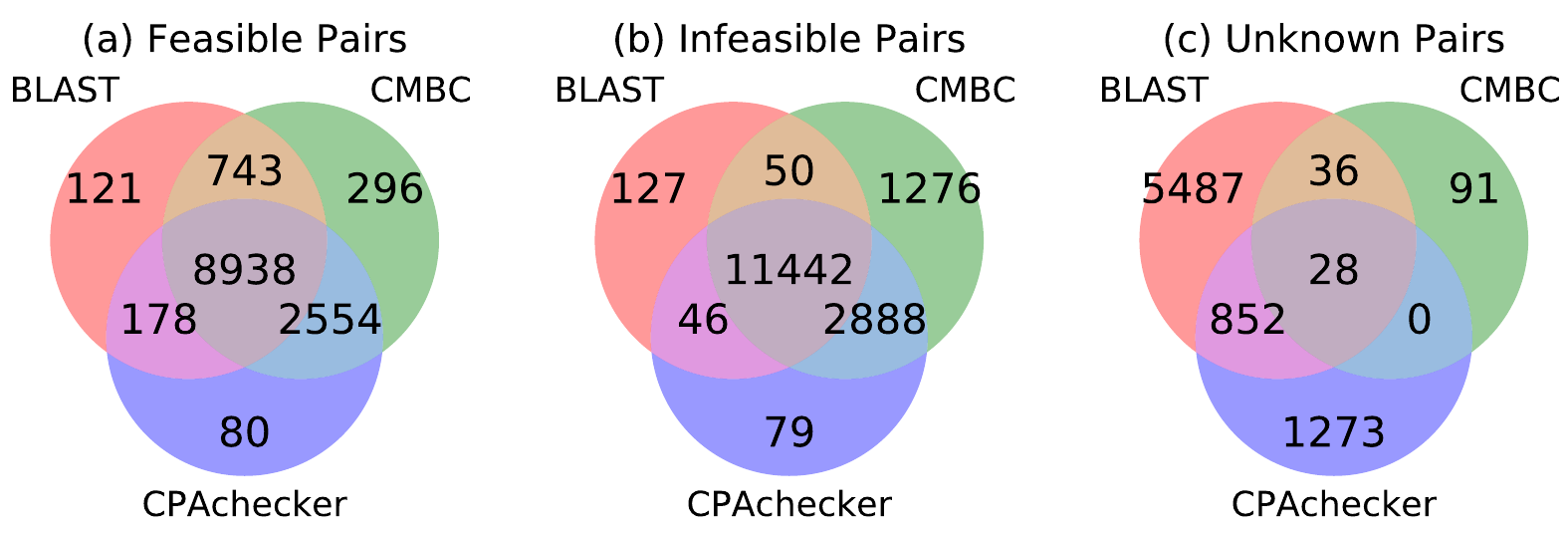}
	\end{center}
	\caption{Venn diagrams of (a) feasible, (b) infeasible and (c) unknown pairs concluded by the three model checkers BLAST, CPAchecker and CBMC for all subjects.}
	\label{fig:venn_diagram_mc_tools}
\end{figure} 

However, the performance of these model checkers still differs on different subjects in terms of testing time and the number of identified pairs.
An important reason is that their abstraction algorithms, implementation languages, underlying constraint solvers, search heuristics, built-in libraries have different impact on their performance. For example, BLAST, CPAchecker and CBMC are mainly implemented in OCaml, Java and C++, respectively; BLAST and CPAchecker respectively use CVC3 and Z3 as underlying constraint solvers.
Additionally, the time for setting up the program models before they can start the actual analysis process may also vary for different subjects or pairs, which should also be considered when interpreting these results.

\vspace{2pt}
\noindent\fbox{
	\parbox{0.95\linewidth}{
		\textbf{Answer to RQ2:} \textit{In summary, the SMC-based reduction approach is practical for data-flow testing. Both the CEGAR-based and BMC-based approaches can give consistent conclusions on the majority of def-use pairs. Specifically, the CEGAR-based approach can give answers for feasibility as certain, while the BMC-based approach can serve as a heuristic-criterion to identify hard-to-cover (probably infeasible) pairs when given appropriate checking bounds. In general, for data-flow testing, the CEGAR-based approach is more efficient on large and complicated programs, while the BMC-based approach is better for small/medium-sized programs. } 
	}
}

\begin{figure*}[t]
	\begin{center}
		\includegraphics[width=\textwidth]{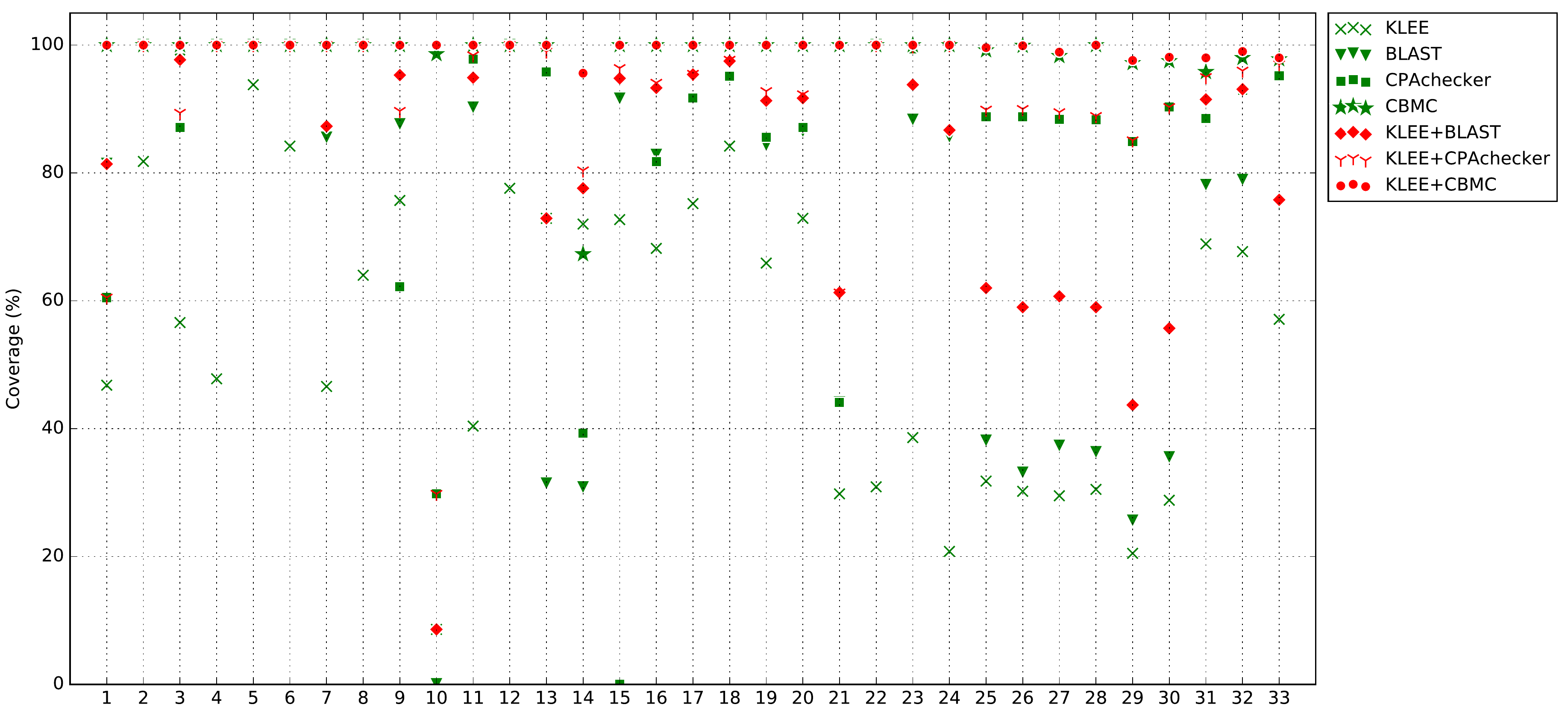}
	\end{center}
	\caption{Data-flow coverage achieved by KLEE, BLAST, CPAchecker, CBMC and their combinations (\ie, KLEE+BLAST, KLEE+CPAchecker, KLEE+CBMC) within the same time budget. Each number on the X axis denotes the set of 33 subjects in our study. Note that the results of CBMC and KLEE+CBMC are only valid modulo the given checking bounds.}
	\label{fig:klee_mc_combine_coverage}
\end{figure*}

\begin{figure*}[t]
	\begin{center}
		\includegraphics[width=\textwidth]{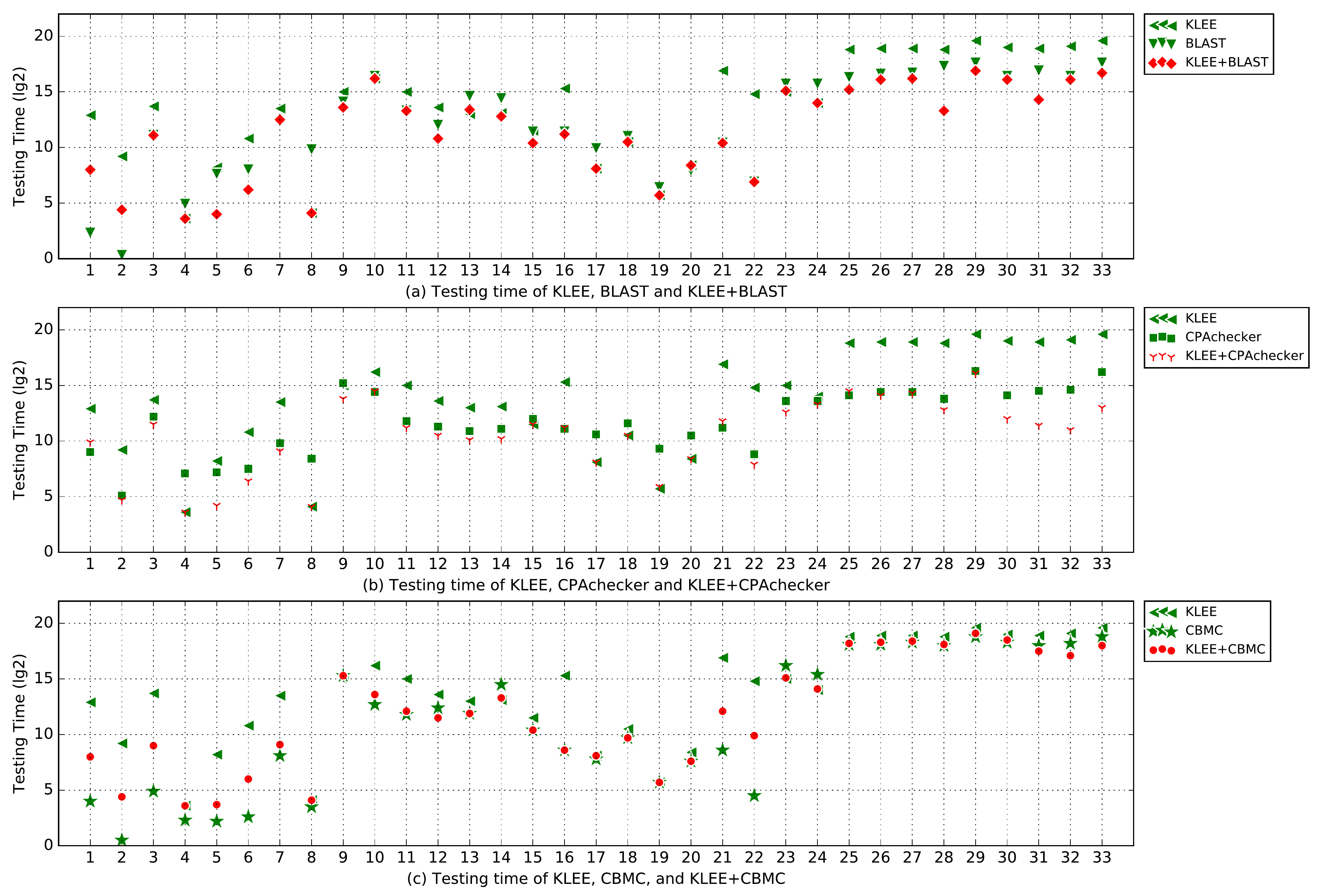}
	\end{center}
	\caption{Consumed time for data-flow testing by KLEE, BLAST, CPAchecker, CBMC and their combinations (\ie, KLEE+BLAST, KLEE+CPAchecker, KLEE+CBMC) for reaching their respective highest coverage. Each point on the X axis denotes the set of 33 subjects in our study. Note that the Y axis uses a logarithmic scale.}
	\label{fig:klee_mc_combination_testing_time}
\end{figure*} 

\subsection{Study 3}
\label{sec:study3}

To investigate the effectiveness of our combined approach, we complement the SE-based approach with the SMC-based approach to do data-flow testing. Specifically, we realize this combined approach by interleaving these two approaches (the setting is specified in Section~\ref{sec:rqs_and_studies}).
Fig.~\ref{fig:klee_mc_combine_coverage} shows the data-flow coverage achieved by KLEE, BLAST, CPAchecker, CBMC alone and their combinations (\eg, the combination of KLEE and CPAchecker, denoted as KLEE+CPAchecker for short) on the 33 subjects within the same testing budget. 
We can see the combined approach can greatly improve data-flow coverage. 
In detail, KLEE only achieves 54.3\% data-flow coverage on average for the 33 subjects, while KLEE+BLAST, KLEE+CPAchecker, and KLEE+CBMC, respectively, achieve 82.1\%, 90.8\%, and 99.5\% data-flow coverage.
Compared with KLEE, the combined approach instances, KLEE+BLAST, KLEE+CPAchecker, and KLEE+CBMC, respectively, improve the coverage by 27.8\%, 36.5\% and 45.2\% on average.
On the other hand, KLEE+BLAST improves coverage by 10\% against BLAST alone, and KLEE+CPAchecker improves coverage by 7\% against CPAchecker alone, respectively.

Fig.~\ref{fig:klee_mc_combination_testing_time} further shows the total testing time consumed by KLEE, BLAST, CPAchecker, CBMC and their combinations when achieving their peak coverage in Fig.~\ref{fig:klee_mc_combine_coverage}.
We can see that the combined approach can almost consistently reduce the total testing time on each subject.
Specifically, compared with KLEE, the combined approach instances, KLEE+BLAST, KLEE+CPAchecker, and KLEE+CBMC, respectively, achieve faster data-flow testing in 30/33, 29/33, and 28/33 subjects, and reduce the total testing time by 78.8\%, 93.6\% and 20.1\% on average in those subjects.
Among the three instances of combined approach, KLEE+CPAchecker achieves the best performance, which reduces testing time by 93.6\% for all the 33 subjects, and at the same time improves data-flow coverage by 36.5\%.
On the other hand, the combined approach instances,  KLEE+BLAST and KLEE+CPAchecker, also reduce the total testing time of BLAST and CPAchecker by 23.8\% and 19.9\%, respectively.

Based on the results of Fig.~\ref{fig:klee_mc_combine_coverage} and Fig.~\ref{fig:klee_mc_combination_testing_time}, we can conclude that the combined approach is more efficient. The reasons can be explained as follows.
Fig.~\ref{fig:klee_mc_feasible_pairs} shows the differences of the SE-based approach (KLEE) and the SMC-based approach (BLAST, CPAchecker and CBMC) in detecting feasible pairs. 
We can observe that the SE-based approach can cover a large portion of feasible pairs detected by the SMC-based approach.
Further, by comparing the testing time spent on feasible pairs between
the SE-based and the SMC-based approach (see Fig.~\ref{fig:mc_testing_time_feasible_infeasible}), we can see that the SE-based approach is in general very effective in covering feasible pairs.
A reasonable explanation is that the SE-based approach is a dynamic explicit path-based testing method (enhanced by the guided search strategy), which can quickly find an path to cover a target pair.
Meanwhile, the CEGAR-based approach is a static model checking-based testing method, which requires more analysis time for feasible pairs. CEGAR needs to first construct the abstract program model and then iteratively refine the model during each CEGAR analysis until it can give definite answers. 
On the other hand, it is easier for the CEGAR-based approach to identify infeasible pairs while the SE-based approach has to check all possible paths before confirming which pairs are infeasible (cost too much time).
This can be confirmed by Fig.~\ref{fig:klee_mc_infeasible_pairs}, where a large number of unknown pairs by KLEE can be concluded as infeasible by the SMC-based approach.
Specifically, BLAST and CPAchecker can weed out the infeasible pairs that KLEE cannot infer by 71.9\% and 89.7\%, respectively.
As for the BMC-based approach, CBMC* and CBMC** can respectively weed out 97.2\% and 88.9\% infeasible pairs. Note that here CBMC* denotes the valid results modulo the given checking bounds, while CBMC** denotes the results of actual infeasible pairs \wrt the results from CPAchecker. 
Therefore, it is beneficial to combine the strengths of symbolic execution and software model checking to achieve more efficient and practical data-flow testing.

\vspace*{2pt}
\noindent\emph{\textbf{Discussion}} \quad 
Since the CEGAR-based approach (\ie, BLAST and CPAchecker) is an unbounded software model checking technique, it can conclude infeasible pairs as certain. As for the BMC-based approach (\ie, CBMC), it in general cannot conclude infeasible pairs as certain. Thus, for the BMC-based approach, the results in Fig.~\ref{fig:klee_mc_combine_coverage} and Fig.~\ref{fig:klee_mc_combination_testing_time} are only valid modulo the checking bounds (this explains why the coverage of CBMC and KLEE+CBMC is even higher than the other approach instances). However, as Fig.~\ref{fig:venn_diagram_mc_tools} shows, compared with the results of CPAchecker, the large portion of infeasible pairs (91.8\%) identified by CBMC are actually indeed infeasible when appropriate checking bounds are given. Therefore, we believe the BMC-based approach can still be used as a heuristic-criterion to identify hard-to-cover (probably infeasible) pairs, especially considering testing budgets are usually limited in practice. In this case, testers can prioritize their efforts when our combined approach is realized on the SE-BMC-based approach.

\begin{figure}
	\centering
	\subfloat{\includegraphics[scale=.4]{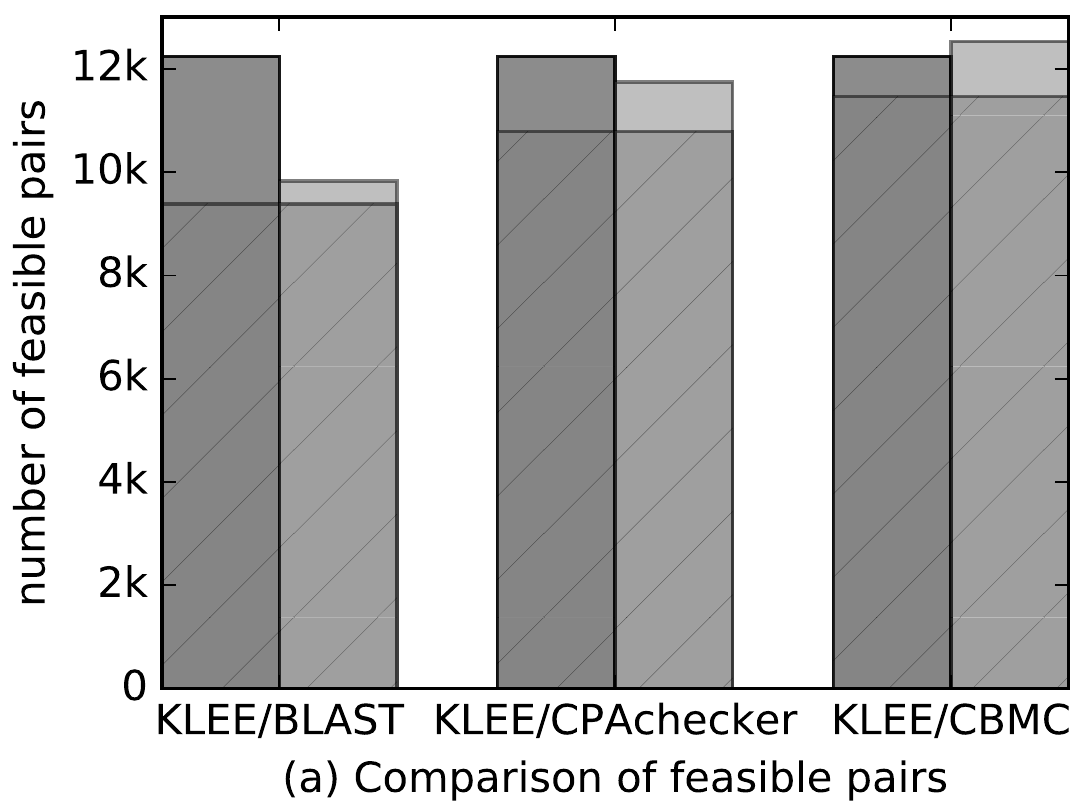}\label{fig:klee_mc_feasible_pairs}}
	\subfloat{\includegraphics[scale=.4]{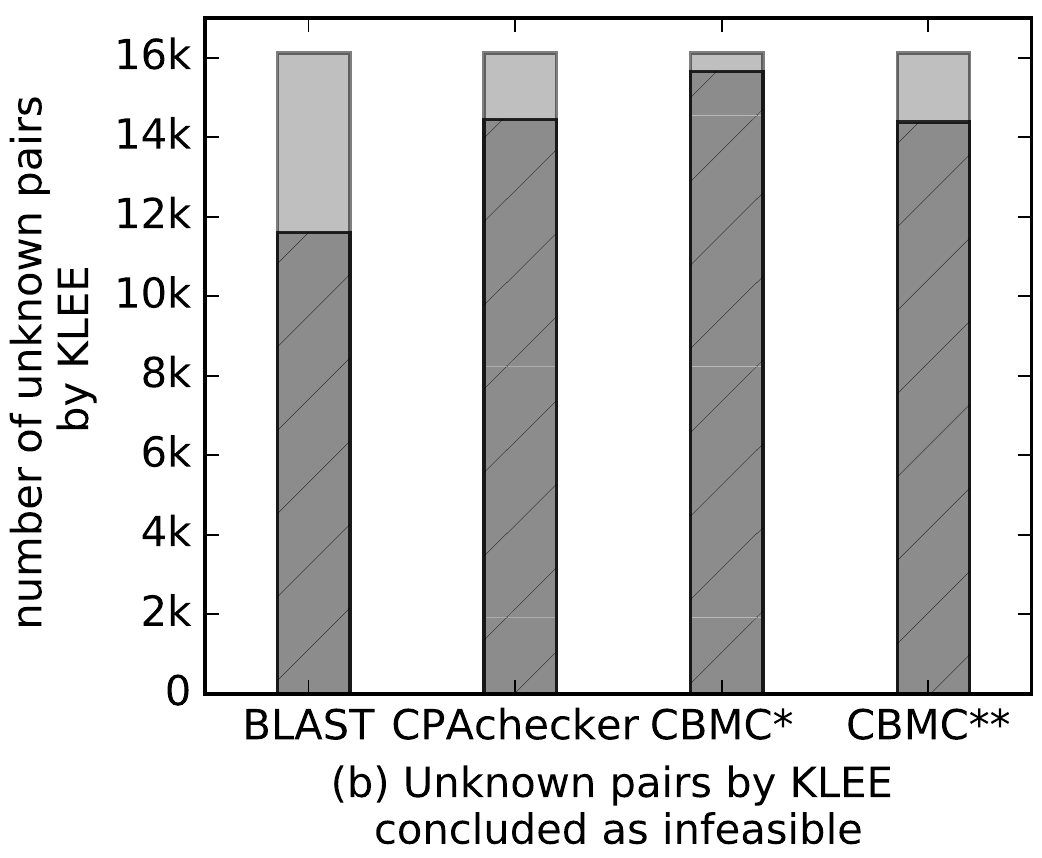}\label{fig:klee_mc_infeasible_pairs}}
	\caption{(a) Comparison of feasible pairs detected by KLEE, BLAST, CPAchecker and CBMC. The shadow parts represent those pairs that are covered by both KLEE and the corresponding model checker. (b) Unknown pairs of KLEE with the shadow parts representing those pairs that are concluded as infeasible by BLAST, CPAchecker and CBMC. CBMC* denotes the valid results modulo the given checking bounds, while CBMC** denotes the results of actual infeasible pairs \wrt the results from CPAchecker.}
	\label{fig:klee_mc_pairs}
\end{figure}

\vspace{2pt}
\noindent\fbox{
	\parbox{0.95\linewidth}{
		\textbf{Answer to RQ3:} \textit{In summary, the combined approach, which combines symbolic execution and software model checking, achieves more efficient data-flow testing. The model checking approach can weed out infeasible pairs that the symbolic execution approach cannot infer by 71.9\%$\sim$97.2\%. Compared with the SE-based approach alone, the combined approach can improve data-flow coverage by 27.8$\sim$45.2\%. In particular, the instance KLEE+CPAchecker performs best, which reduces total testing time by 93.6\% for all 33 subjects, and at the same time improves data-flow coverage by 36.5\%. Compared with the CEGAR-based or BMC-based approach alone, the combined approach can also reduce testing time by 19.9$\sim$23.8\%, and improve data-flow coverage by 7$\sim$10\%.} 
	}
}

\section{Discussion}
\label{sec:discussion}

This section gives the detailed discussion about (1) limitations of our approach; (2) our experience of data-flow testing and the applications for other testing scenarios, and (3) threats to validity.

\subsection{Limitations}
\label{sec:limitation}

\noindent\textbf{\emph{Undecidability of data-flow based test generation}}. \quad
The problem of data-flow based test generation can be formulated as a reachability checking problem~\cite{MaKFH11,GodefroidLM08}, \ie, whether we can find a particular execution path (or a particular program state) of the program under test to cover a given def-use pair. This problem itself is undecidable, and in practice our testing technique cannot always answer the feasibility of a target pair within a given time budget. Specifically, the symbolic execution technique may fail to generate a test case for an actual feasible pair, since it cannot exhaustively explore the whole (probably infinite) program state space (no matter which search strategies are adopted). As for software model checking, the CEGAR-based approach may answer \emph{unknown} when it cannot terminate within a given time bound, and the infeasibility answers of BMC-based approach are valid modulo the given checking bounds. But combining symbolic execution and model checking can mitigate this issue to some extent. As demonstrated by Fig.~\ref{fig:klee_mc_pairs} in the evaluation, if one technique cannot cover a given pair, the other one may be able to cover it.

\vspace*{3pt}
\noindent\textbf{\emph{Technical Limitations}}. \quad
Our hybrid data-flow testing framework is built on top of KLEE symbolic execution engine and different C software model checkers. Therefore, our framewok inherited all technical limitations or implementation issues of these tools. These limitations or issues may lead to inconsistent testing results. For example, if KLEE concludes a given pair is feasible with a test input, while a software model checker concludes this pair is infeasible, an inconsistent case occurs. 

To investigate this issue, we cross-checked the testing results between the symbolic execution component (KLEE) and the software model checking component (CPAchecker and CBMC) on all 28,354 def-use pairs from the 33 program subjects. 
We find our testing framework in general is highly reliable and robust --- it gives consistent testing results for almost all def-use pairs (99.76\% def-use pairs). Only 68 cases out of 28,354 pairs are inconsistent. We carefully inspected these inconsistent cases, and found all of them were caused by some technical limitations of KLEE or the model checkers. Our testing algorithm itself is correct. 

For example, KLEE may output a test input for an actual infeasible def-use pair if the interface \texttt{klee\_make\_symbolic} for symbolizing input variables is used in program loops.
This usage disturbs the predecessor-successor relations between \texttt{ProcessTree} nodes (one node corresponds to a block of sequential instructions in KLEE), which are used to decide the feasibility of a pair. CPAchecker or CBMC may give incorrect testing results if the program under test uses some not well-supported C standard library functions.
Also, the applicability of our technique is constrained by the language features or data structures supported by KLEE and software model checkers. 
But we believe these limitations will be overcome by the continuous advances in both symbolic execution and software model checking community. For example, 
the ability and reliability of the implementations of these techniques are improving~\cite{HVC17-testing,kapus2017automatic,differential_model_checker}, and software model checking is now driven by active research (\eg, the annual SV-COMP competition~\cite{SV-COMP}) and industrial adoption~\cite{BallR02,PavlinovicLS16,KhoroshilovMPZ09}.

\subsection{Experience of data-flow testing and Other Applications}

\vspace*{3pt}
\noindent\textbf{\emph{Experience with concolic testing}}. \quad 
We have implemented our symbolic execution-based data-flow testing approach on two variant techniques~\cite{CadarS13}, \ie, execution-generated testing (adopted by KLEE~\cite{KLEE}) and concolic testing (adopted by DART~\cite{DART}, CUTE~\cite{CUTEC}, CREST~\cite{CREST} and CAUT~\cite{SuPFHYJZ14}). 
In addition to KLEE, we also applied the widely-used concolic testing tool CREST developed by Burnim \etal~\cite{CREST} and our own tool CAUT on these benchmarks.
Different from execution-generated testing, concolic testing needs initial concrete values for input variables, and maintains the entire concrete state of program along each execution path.
Due to this feature, we find CREST and CAUT can achieve faster data-flow testing than KLEE on some subjects (\eg, some SIR programs). 
However, they cannot work well on complicated subjects due to their technical limitations. For example, they do not well-support complicate data structures (\eg, \texttt{pointer}, \texttt{struct}, \texttt{list}), and thus lead to low data-flow coverage or even tool crashes. 
This also confirms the benefit of implementing our approach on the execution-generated testing tool KLEE.

\vspace*{3pt}
\noindent\textbf{\emph{Other Applications}}. \quad 
Currently, our testing technique is designed for classic data-flow testing. Given a def-use pair, it finds a test input for a given pair at one time. This technique can also benefit directed testing or debugging scenarios~\cite{XieTHS09,ZamfirC10,MaKFH11,MarinescuC13}. If targeting all def-use pairs at once, our technique, especially the SE-based approach, can be further enhanced by those generic search strategies (\eg, RSS-MD2U) and collateral coverage-based techniques~\cite{Marre96,Marre03}. 
Our testing framework could also be used to achieve other coverage criteria~\cite{SuPMHS16}, and investigate the effectiveness of different criteria~\cite{HutchinsFGO94,HassanA13,Inozemtseva14}.

\subsection{Threats to Validity}

\vspace*{3pt}
\noindent\textbf{\emph{Internal Validity}}. \quad 
Our results may be affected by the comprehensiveness and correctness of the evaluation.
For example, the testing results may be affected by the different characteristics and configurations of the tools (\eg, implementation languages, constraint solvers, search heuristics, abstraction algorithms, boundedness, the time for setting up program models before actual analysis) on which our technique was implemented.
To mitigate this issue, we (1) used the default configurations of these tools without any particular tuning from the perspective of end users, and (2) extensively evaluated our technique on collectively 28,354 pairs by four different tools (KLEE, BLAST, CPAchecker, CBMC) and five different search strategies to gain more overall understanding.
Specifically, these four tools cover same technique with different implementations (BLAST \vs CPAchecker), same approach category achieved by different techniques (CPAchecker \vs CBMC), and different approaches (KLEE \vs CPAchecker). 
All evaluations were repeated 30 times to mitigate algorithm randomness, and the results were examined by statistical significant test.
Additionally, in our current implementation, we do not identify def-use pairs induced by pointer aliases in the risk of missing some test objectives. More sophisticated data-flow analysis techniques (\eg, dynamic data-flow analysis~\cite{DenaroPV14}) or tools (\eg, Frama-C~\cite{KirchnerKPSY15}, SVF~\cite{SuiX16}) can be used to mitigate this problem. However, we believe that this is an independent issue and not the focus of this work. The effectiveness of our SE-based and MC-based approach should remain.

\vspace*{3pt}
\noindent\textbf{\emph{External Validity}}. \quad 
Our results may be affected by the representativeness of the subjects. To mitigate this issue, we dedicatedly constructed a benchmark repository of 33 subjects. Although some benchmarks are relatively small-sized, we chose them with these considerations: (1) they were from different sources, \ie, prior data-flow testing research work, standard benchmarks for program analysis and testing, standard benchmarks for program verification, and real-world industrial projects; (2) they contain diverse data-flow scenarios, including mathematical computations, standard algorithms, utility programs, OS device drivers, and embedded control software; and (3) they were carefully checked to reduce evaluation bias and ensure all tools can correctly reason them (\eg, by adding necessary function stubs).
Therefore, we believe they contain the typical characteristics of real-world programs, and are also fair to all tools. 
From these benchmarks, the effectiveness of our approach is evident. Although it is interesting to consider more program subjects, due to our novel, general methodologies, we believe that the results should be consistent.

\vspace*{3pt}
\noindent\textbf{\emph{Construct Validity}}. \quad 
As demonstrated by the cross-checking results in Section~\ref{sec:limitation}, the implementation of our testing framework is highly reliable and robust. It gives consistent testing results for almost all def-use pairs. With the advances in symbolic execution and software model checking, our testing framework will be further enhanced.
\section{Related Work}
\label{sec:related}
This section discusses three strands of closely related work: (1) data-flow based test generation, (2) directed symbolic execution, and (3) infeasible test objective detection.

\subsection{Data-flow based test generation}

Data-flow testing has been continuously investigated in the past four decades~\cite{Frankl93,ForemanZ93,Weyuker93,HutchinsFGO94,FranklI98}. Existing work can be categorized into five main categories according to the testing techniques. We only discuss typical literature work here. Readers can refer to a recent survey~\cite{susurvey2015} for details. 

The most widely used approach to is \emph{search-based testing}, which utilizes meta-heuristic search techniques to identify test inputs for target def-use pairs.
Girgis~\cite{Girgis05} first uses Genetic Algorithm (GA) for Fortran programs, and Ghiduk \etal~\cite{GhidukHG07} use GA for C++ programs.
Later, Vivanti \etal~\cite{VivantiMGF13} and Denaro \etal~\cite{DenaroMPV15} apply GA to \texttt{Java} programs by the tool EvoSuite.
Some optimization-based search algorithms~\cite{NayakM10,Singla11,Singla11a,Ghiduk10b} are also used, but they have only evaluated on small programs without available tools. \emph{Random testing} is a baseline approach for data-flow testing~\cite{Girgis05,GhidukHG07,AlexanderOS10,DenaroMPV15,GirgisS14}. 
Some researchers use \emph{collateral coverage-based testing}~\cite{HarmanKLMY10}, which exploits the observation that the test case that satisfies one target test objective can also ``accidentally" cover the others.
Malevris~\etal~\cite{Malevris05} use branch coverage to achieve data-flow coverage. 
Merlo~\etal~\cite{MerloA99} exploit the coverage implication between data-flow coverage and statement coverage to achieve intra-procedural data-flow testing. Other efforts include~\cite{Marre96,Marre03,SantelicesSH06,Santelices07}.
Some researchers use \emph{traditional symbolic execution}. For example, Girgis~\cite{Girgis93} develops a simple symbolic execution system for DFT, which statically generates program paths \wrt a certain control-flow criterion (\eg, branch coverage), and then selects those executable ones that can cover the def-use pairs of interest.
Buy~\etal~\cite{BuyOP00} adopts three techniques, \ie, data-flow analysis, symbolic execution and 
automated deduction to perform data-flow testing.
However, they have provided little evidence of practice.
Hong~\etal~\cite{HongCLSU03} adopt \emph{classic CTL-based model checking} to generate data-flow test data. Specifically, the program is modeled as 
a Kripke structure and the requirements of data-flow coverage are characterized as a set of CTL property formulas. 
However, this approach requires manual intervention, and its scalability is also unclear.

Despite the plenty of work on data-flow based testing, they are either inefficient or imprecise. Our work is the first one to leverage the modern symbolic execution and software model checking techniques to achieve DFT efficiently and precisely.

\subsection{Directed Symbolic Execution} 

Much research~\cite{MaKFH11,MarinescuC13,XieTHS09,ZamfirC10,DoFP12} has been done to guide path search toward a specified program location via symbolic execution.
Do \etal~\cite{DoFP12} leverage data dependency analysis to guide the search to reach a particular program location, while we use dominator analysis.  Ma~\etal~\cite{MaKFH11} suggest a call chain backward search heuristic to find a feasible path, backward from the target program location to the entry.  However, it is difficult to adapt this approach on data-flow testing, because it requires that a function can be decomposed into logical parts when the target locations (\eg the \emph{def} and the \emph{use}) are located in the same function. But  decomposing a function itself is a nontrivial task.  Zamfir~\etal~\cite{ZamfirC10} narrow the path search space by following a \emph{limited} set of critical edges and a statically-necessary
combination of intermediate goals.  On the other hand, our approach finds a set of cut points from the program entry to the target locations, which makes path exploration more efficient.  Xie
\etal~\cite{XieTHS09} integrate fitness-guided path search strategy with other heuristics to reach a program point.  The proposed strategy is only efficient for those problems amenable to its fitness functions.  Marinescu \etal~\cite{MarinescuC13} use a shortest distance-based guided search method (like the adapted SDGS heuristic in our evaluation) with other heuristics to quickly reach the line of interest in patch testing. In contrast, we combine several search
heuristics to guide the path exploration to traverse two specified program locations (\ie the \emph{def} and \emph{use}) sequentially for data flow testing.

\subsection{Detecting Infeasible Test Objectives}

As for detecting infeasible test objectives, early work uses constraint-based technique~\cite{GoldbergWZ94,OffuttP97}. Offutt and Pan~\etal~\cite{OffuttP97} extract a set of path constraints that encode the test objectives from the program under test. Infeasible test objectives can be identified if the constraints do not have solutions.
Recent work by Beckman~\etal~\cite{BeckmanNRSTT10}, Baluda~\etal~\cite{BaludaBDP10,BaludaBDP11,BaludaDP16}, Bardin~\etal~\cite{BardinDDKPTM15} use weakeast precondition to identify infeasible statements and branches. For example, Baluda~\etal use model refinement with weakest precondition to exclude infeasible branches; Bardin~\etal applies weakest precondition with abstract interpretation to eliminate infeasible objectives.
Marcozzi~\etal~\cite{marcozzi2018time} also use weakest precondition to identify polluting test objectives (including infeasible, duplicate and subsumed) for condition, MC/DC and weak mutation coverage.
In contrast, our testing framework mainly use the CEGAR-based model checking technique to identify infeasible def-use pairs for data-flow testing.
One close work is from Daca~\etal~\cite{DacaGH16}, who combine concolic testing (CREST) and model checking (CPAchecker) to find a test suite \wrt branch coverage. Our work has some distinct differences with theirs. First, they target at branch coverage, while we enforce data-flow testing. Second, they directly modify the existing generic path search strategies of CREST, and backtrack the search if the explored direction has been proved as infeasible by CPAchecker. As a result, the performance of their approach (\ie, avoid unnecessary path explorations) may vary across different search strategies due to the paths are selected in different orders. 
In contrast, we implement a designated search strategy to guide symbolic execution, and realize the reduction approach directly on model checkers.
Although our approach is simple, it can treat model checkers as black-box tools without any modification and seamlessly integrate with KLEE.
Model checking techniques have recently been adapted to aid software testing~\cite{FraserWA09,HVC17-testing}. For example, FShell~\cite{HolzerTVS10} is another model checking based test generator, which uses CBMC as the basis and uses FQL for specification of coverage criteria. ESBMC-INCR~\cite{MorseRCN014} and ESBMC-KIND~\cite{GadelhaIC17} are the two enhanced variants of CBMC, which can also be used for test generation. ESBMC-INCR uses an iterative strategy to increase loop bounds and ESBMC-KIND computes loop invariants to improve performance. In our context, we use CBMC as a heuristic-criterion to detect hard-to-cover (probably infeasible) pairs.

\section{Conclusion}
\label{sec:conclusion}
This paper introduces an efficient, combined data-flow testing approach. We designed a cut point guided search strategy to make symbolic execution practical; and devised a simple encoding of data-flow testing via software model checking. 
The two approaches offer complementary strengths: SE is more effective at covering feasible def-use pairs, while SMC is more effective at rejecting infeasible pairs.
Specifically, the CEGAR-based approach is used to eliminate infeasible pairs as certain, while the BMC-based approach can be used as a heuristic-criterion to identify hard-to-cover (probably infeasible) pairs when given appropriate checking bounds (especially suitable for small/medium-sized programs without complicated loops).
The empirical evaluation results have demonstrated that our combined testing approach can reduce testing time by 20.1$\sim$93.6\% and improve data-flow coverage by 27.8\%$\sim$45.2\% than the enhanced SE-based approach alone; also reduce testing time by 19.9$\sim$23.8\%, and improve data-flow coverage by 7\%$\sim$10\% than the CEGAR-based/BMC-based approach alone.
This work not only provides novel techniques for data-flow testing, but also suggests a new perspective on this problem to benefit from advances in symbolic execution and model checking.

%

%
\bibliographystyle{ACM-Reference-Format}
\bibliography{sample-base}


\begin{thebibliography}{112}


\ifx \showCODEN    \undefined \def \showCODEN     #1{\unskip}     \fi
\ifx \showDOI      \undefined \def \showDOI       #1{#1}\fi
\ifx \showISBNx    \undefined \def \showISBNx     #1{\unskip}     \fi
\ifx \showISBNxiii \undefined \def \showISBNxiii  #1{\unskip}     \fi
\ifx \showISSN     \undefined \def \showISSN      #1{\unskip}     \fi
\ifx \showLCCN     \undefined \def \showLCCN      #1{\unskip}     \fi
\ifx \shownote     \undefined \def \shownote      #1{#1}          \fi
\ifx \showarticletitle \undefined \def \showarticletitle #1{#1}   \fi
\ifx \showURL      \undefined \def \showURL       {\relax}        \fi
\providecommand\bibfield[2]{#2}
\providecommand\bibinfo[2]{#2}
\providecommand\natexlab[1]{#1}
\providecommand\showeprint[2][]{arXiv:#2}

\bibitem[\protect\citeauthoryear{Alexander, Offutt, and Stefik}{Alexander
  et~al\mbox{.}}{2010}]%
        {AlexanderOS10}
\bibfield{author}{\bibinfo{person}{Roger~T. Alexander}, \bibinfo{person}{Jeff
  Offutt}, {and} \bibinfo{person}{Andreas Stefik}.}
  \bibinfo{year}{2010}\natexlab{}.
\newblock \showarticletitle{Testing coupling relationships in object-oriented
  programs}.
\newblock \bibinfo{journal}{\emph{Softw. Test., Verif. Reliab.}}
  \bibinfo{volume}{20}, \bibinfo{number}{4} (\bibinfo{year}{2010}),
  \bibinfo{pages}{291--327}.
\newblock


\bibitem[\protect\citeauthoryear{Arcuri and Briand}{Arcuri and Briand}{2014}]%
        {ArcuriB14}
\bibfield{author}{\bibinfo{person}{Andrea Arcuri} {and}
  \bibinfo{person}{Lionel~C. Briand}.} \bibinfo{year}{2014}\natexlab{}.
\newblock \showarticletitle{A Hitchhiker's guide to statistical tests for
  assessing randomized algorithms in software engineering}.
\newblock \bibinfo{journal}{\emph{Softw. Test., Verif. Reliab.}}
  \bibinfo{volume}{24}, \bibinfo{number}{3} (\bibinfo{year}{2014}),
  \bibinfo{pages}{219--250}.
\newblock


\bibitem[\protect\citeauthoryear{Ball and Rajamani}{Ball and Rajamani}{2002}]%
        {BallR02}
\bibfield{author}{\bibinfo{person}{Thomas Ball} {and}
  \bibinfo{person}{Sriram~K. Rajamani}.} \bibinfo{year}{2002}\natexlab{}.
\newblock \showarticletitle{The {SLAM} project: debugging system software via
  static analysis}. In \bibinfo{booktitle}{\emph{Conference Record of {POPL}
  2002: The 29th {SIGPLAN-SIGACT} Symposium on Principles of Programming
  Languages, Portland, OR, USA, January 16-18, 2002}}. \bibinfo{pages}{1--3}.
\newblock


\bibitem[\protect\citeauthoryear{Baluda, Braione, Denaro, and
  Pezz{\`{e}}}{Baluda et~al\mbox{.}}{2010}]%
        {BaludaBDP10}
\bibfield{author}{\bibinfo{person}{Mauro Baluda}, \bibinfo{person}{Pietro
  Braione}, \bibinfo{person}{Giovanni Denaro}, {and} \bibinfo{person}{Mauro
  Pezz{\`{e}}}.} \bibinfo{year}{2010}\natexlab{}.
\newblock \showarticletitle{Structural coverage of feasible code}. In
  \bibinfo{booktitle}{\emph{The 5th Workshop on Automation of Software Test,
  {AST} 2010, May 3-4, 2010, Cape Town, South Africa}}.
  \bibinfo{pages}{59--66}.
\newblock


\bibitem[\protect\citeauthoryear{Baluda, Braione, Denaro, and
  Pezz{\`{e}}}{Baluda et~al\mbox{.}}{2011}]%
        {BaludaBDP11}
\bibfield{author}{\bibinfo{person}{Mauro Baluda}, \bibinfo{person}{Pietro
  Braione}, \bibinfo{person}{Giovanni Denaro}, {and} \bibinfo{person}{Mauro
  Pezz{\`{e}}}.} \bibinfo{year}{2011}\natexlab{}.
\newblock \showarticletitle{Enhancing structural software coverage by
  incrementally computing branch executability}.
\newblock \bibinfo{journal}{\emph{Software Quality Journal}}
  \bibinfo{volume}{19}, \bibinfo{number}{4} (\bibinfo{year}{2011}),
  \bibinfo{pages}{725--751}.
\newblock


\bibitem[\protect\citeauthoryear{Baluda, Denaro, and Pezz{\`{e}}}{Baluda
  et~al\mbox{.}}{2016}]%
        {BaludaDP16}
\bibfield{author}{\bibinfo{person}{Mauro Baluda}, \bibinfo{person}{Giovanni
  Denaro}, {and} \bibinfo{person}{Mauro Pezz{\`{e}}}.}
  \bibinfo{year}{2016}\natexlab{}.
\newblock \showarticletitle{Bidirectional Symbolic Analysis for Effective
  Branch Testing}.
\newblock \bibinfo{journal}{\emph{{IEEE} Trans. Software Eng.}}
  \bibinfo{volume}{42}, \bibinfo{number}{5} (\bibinfo{year}{2016}),
  \bibinfo{pages}{403--426}.
\newblock


\bibitem[\protect\citeauthoryear{Bardin, Delahaye, David, Kosmatov, Papadakis,
  Traon, and Marion}{Bardin et~al\mbox{.}}{2015}]%
        {BardinDDKPTM15}
\bibfield{author}{\bibinfo{person}{S{\'{e}}bastien Bardin},
  \bibinfo{person}{Micka{\"{e}}l Delahaye}, \bibinfo{person}{Robin David},
  \bibinfo{person}{Nikolai Kosmatov}, \bibinfo{person}{Mike Papadakis},
  \bibinfo{person}{Yves~Le Traon}, {and} \bibinfo{person}{Jean{-}Yves Marion}.}
  \bibinfo{year}{2015}\natexlab{}.
\newblock \showarticletitle{Sound and Quasi-Complete Detection of Infeasible
  Test Requirements}. In \bibinfo{booktitle}{\emph{8th {IEEE} International
  Conference on Software Testing, Verification and Validation, {ICST} 2015,
  Graz, Austria, April 13-17, 2015}}. \bibinfo{pages}{1--10}.
\newblock


\bibitem[\protect\citeauthoryear{Beckman, Nori, Rajamani, Simmons, Tetali, and
  Thakur}{Beckman et~al\mbox{.}}{2010}]%
        {BeckmanNRSTT10}
\bibfield{author}{\bibinfo{person}{Nels~E. Beckman}, \bibinfo{person}{Aditya~V.
  Nori}, \bibinfo{person}{Sriram~K. Rajamani}, \bibinfo{person}{Robert~J.
  Simmons}, \bibinfo{person}{SaiDeep Tetali}, {and} \bibinfo{person}{Aditya~V.
  Thakur}.} \bibinfo{year}{2010}\natexlab{}.
\newblock \showarticletitle{Proofs from Tests}.
\newblock \bibinfo{journal}{\emph{IEEE Trans. Software Eng.}}
  \bibinfo{volume}{36}, \bibinfo{number}{4} (\bibinfo{year}{2010}),
  \bibinfo{pages}{495--508}.
\newblock


\bibitem[\protect\citeauthoryear{Beyer}{Beyer}{2012}]%
        {TACAS12-SVCOMP}
\bibfield{author}{\bibinfo{person}{Dirk Beyer}.}
  \bibinfo{year}{2012}\natexlab{}.
\newblock \showarticletitle{Competition on Software Verification ({SV-COMP})}.
  In \bibinfo{booktitle}{\emph{TACAS}}. \bibinfo{pages}{504--524}.
\newblock


\bibitem[\protect\citeauthoryear{Beyer, Chlipala, Henzinger, Jhala, and
  Majumdar}{Beyer et~al\mbox{.}}{2004}]%
        {Beyer04}
\bibfield{author}{\bibinfo{person}{Dirk Beyer}, \bibinfo{person}{Adam~J.
  Chlipala}, \bibinfo{person}{Thomas~A. Henzinger}, \bibinfo{person}{Ranjit
  Jhala}, {and} \bibinfo{person}{Rupak Majumdar}.}
  \bibinfo{year}{2004}\natexlab{}.
\newblock \showarticletitle{Generating Tests from Counterexamples}. In
  \bibinfo{booktitle}{\emph{Proceedings of the 26th International Conference on
  Software Engineering}} \emph{(\bibinfo{series}{ICSE '04})}.
  \bibinfo{publisher}{IEEE Computer Society}, \bibinfo{address}{Washington, DC,
  USA}, \bibinfo{pages}{326--335}.
\newblock


\bibitem[\protect\citeauthoryear{Beyer, Henzinger, Jhala, and Majumdar}{Beyer
  et~al\mbox{.}}{2007}]%
        {Beyer07}
\bibfield{author}{\bibinfo{person}{Dirk Beyer}, \bibinfo{person}{Thomas~A.
  Henzinger}, \bibinfo{person}{Ranjit Jhala}, {and} \bibinfo{person}{Rupak
  Majumdar}.} \bibinfo{year}{2007}\natexlab{}.
\newblock \showarticletitle{The Software Model Checker {BLAST}: Applications to
  Software Engineering}.
\newblock \bibinfo{journal}{\emph{Int. J. Softw. Tools Technol. Transf.}}
  \bibinfo{volume}{9}, \bibinfo{number}{5} (\bibinfo{date}{Oct.}
  \bibinfo{year}{2007}), \bibinfo{pages}{505--525}.
\newblock
\showISSN{1433-2779}


\bibitem[\protect\citeauthoryear{Beyer and Keremoglu}{Beyer and
  Keremoglu}{2011}]%
        {BeyerK11}
\bibfield{author}{\bibinfo{person}{Dirk Beyer} {and} \bibinfo{person}{M.~Erkan
  Keremoglu}.} \bibinfo{year}{2011}\natexlab{}.
\newblock \showarticletitle{{CPAchecker}: {A} Tool for Configurable Software
  Verification}. In \bibinfo{booktitle}{\emph{Computer Aided Verification -
  23rd International Conference, {CAV} 2011, Snowbird, UT, USA, July 14-20,
  2011. Proceedings}}. \bibinfo{pages}{184--190}.
\newblock


\bibitem[\protect\citeauthoryear{Beyer and Lemberger}{Beyer and
  Lemberger}{2017}]%
        {HVC17-testing}
\bibfield{author}{\bibinfo{person}{Dirk Beyer} {and} \bibinfo{person}{Thomas
  Lemberger}.} \bibinfo{year}{2017}\natexlab{}.
\newblock \showarticletitle{Software Verification: Testing vs. Model Checking}.
  In \bibinfo{booktitle}{\emph{HVC}}. \bibinfo{pages}{99--114}.
\newblock


\bibitem[\protect\citeauthoryear{Burnim and Sen}{Burnim and Sen}{2008}]%
        {CREST}
\bibfield{author}{\bibinfo{person}{Jacob Burnim} {and} \bibinfo{person}{Koushik
  Sen}.} \bibinfo{year}{2008}\natexlab{}.
\newblock \showarticletitle{Heuristics for Scalable Dynamic Test Generation}.
  In \bibinfo{booktitle}{\emph{23rd {IEEE/ACM} International Conference on
  Automated Software Engineering {(ASE} 2008), 15-19 September 2008, L'Aquila,
  Italy}}. \bibinfo{pages}{443--446}.
\newblock


\bibitem[\protect\citeauthoryear{Buy, Orso, and Pezz{\`e}}{Buy
  et~al\mbox{.}}{2000}]%
        {BuyOP00}
\bibfield{author}{\bibinfo{person}{Ugo~A. Buy}, \bibinfo{person}{Alessandro
  Orso}, {and} \bibinfo{person}{Mauro Pezz{\`e}}.}
  \bibinfo{year}{2000}\natexlab{}.
\newblock \showarticletitle{Automated Testing of Classes.}. In
  \bibinfo{booktitle}{\emph{ISSTA}}. \bibinfo{pages}{39--48}.
\newblock


\bibitem[\protect\citeauthoryear{Cadar, Dunbar, and Engler}{Cadar
  et~al\mbox{.}}{2008}]%
        {KLEE}
\bibfield{author}{\bibinfo{person}{Cristian Cadar}, \bibinfo{person}{Daniel
  Dunbar}, {and} \bibinfo{person}{Dawson~R. Engler}.}
  \bibinfo{year}{2008}\natexlab{}.
\newblock \showarticletitle{{KLEE}: Unassisted and Automatic Generation of
  High-Coverage Tests for Complex Systems Programs}. In
  \bibinfo{booktitle}{\emph{USENIX Symposium on Operating Systems Design and
  Implementation}}. \bibinfo{pages}{209--224}.
\newblock


\bibitem[\protect\citeauthoryear{Cadar, Ganesh, Pawlowski, Dill, and
  Engler}{Cadar et~al\mbox{.}}{2006}]%
        {CadarGPDE06}
\bibfield{author}{\bibinfo{person}{Cristian Cadar}, \bibinfo{person}{Vijay
  Ganesh}, \bibinfo{person}{Peter~M. Pawlowski}, \bibinfo{person}{David~L.
  Dill}, {and} \bibinfo{person}{Dawson~R. Engler}.}
  \bibinfo{year}{2006}\natexlab{}.
\newblock \showarticletitle{{EXE:} automatically generating inputs of death}.
  In \bibinfo{booktitle}{\emph{Proceedings of the 13th {ACM} Conference on
  Computer and Communications Security, {CCS} 2006, Alexandria, VA, USA,
  Ioctober 30 - November 3, 2006}}. \bibinfo{pages}{322--335}.
\newblock


\bibitem[\protect\citeauthoryear{Cadar and Sen}{Cadar and Sen}{2013}]%
        {CadarS13}
\bibfield{author}{\bibinfo{person}{Cristian Cadar} {and}
  \bibinfo{person}{Koushik Sen}.} \bibinfo{year}{2013}\natexlab{}.
\newblock \showarticletitle{Symbolic execution for software testing: three
  decades later}.
\newblock \bibinfo{journal}{\emph{Commun. ACM}} \bibinfo{volume}{56},
  \bibinfo{number}{2} (\bibinfo{year}{2013}), \bibinfo{pages}{82--90}.
\newblock


\bibitem[\protect\citeauthoryear{Chaki, Clarke, Groce, Jha, and Veith}{Chaki
  et~al\mbox{.}}{2004}]%
        {ChakiCGJV04}
\bibfield{author}{\bibinfo{person}{Sagar Chaki}, \bibinfo{person}{Edmund~M.
  Clarke}, \bibinfo{person}{Alex Groce}, \bibinfo{person}{Somesh Jha}, {and}
  \bibinfo{person}{Helmut Veith}.} \bibinfo{year}{2004}\natexlab{}.
\newblock \showarticletitle{Modular Verification of Software Components in
  {C}}.
\newblock \bibinfo{journal}{\emph{{IEEE} Trans. Software Eng.}}
  \bibinfo{volume}{30}, \bibinfo{number}{6} (\bibinfo{year}{2004}),
  \bibinfo{pages}{388--402}.
\newblock


\bibitem[\protect\citeauthoryear{Chatterjee and Ryder}{Chatterjee and
  Ryder}{1999}]%
        {Chatterjee99}
\bibfield{author}{\bibinfo{person}{BRamkrishna Chatterjee} {and}
  \bibinfo{person}{Barbara~G. Ryder}.} \bibinfo{year}{1999}\natexlab{}.
\newblock \bibinfo{booktitle}{\emph{Data-flow-based testing of object-oriented
  libraries}}.
\newblock \bibinfo{type}{{T}echnical {R}eport} DCS-TR-382.
  \bibinfo{institution}{Rutgers University}.
\newblock


\bibitem[\protect\citeauthoryear{Clarke, Kroening, and Lerda}{Clarke
  et~al\mbox{.}}{2004a}]%
        {ckl2004}
\bibfield{author}{\bibinfo{person}{Edmund Clarke}, \bibinfo{person}{Daniel
  Kroening}, {and} \bibinfo{person}{Flavio Lerda}.}
  \bibinfo{year}{2004}\natexlab{a}.
\newblock \showarticletitle{A Tool for Checking {ANSI-C} Programs}. In
  \bibinfo{booktitle}{\emph{Tools and Algorithms for the Construction and
  Analysis of Systems (TACAS 2004)}} \emph{(\bibinfo{series}{Lecture Notes in
  Computer Science})}, \bibfield{editor}{\bibinfo{person}{Kurt Jensen} {and}
  \bibinfo{person}{Andreas Podelski}} (Eds.), Vol.~\bibinfo{volume}{2988}.
  \bibinfo{publisher}{Springer}, \bibinfo{pages}{168--176}.
\newblock
\showISBNx{3-540-21299-X}


\bibitem[\protect\citeauthoryear{Clarke, Kroening, and Lerda}{Clarke
  et~al\mbox{.}}{2004b}]%
        {ClarkeKL04}
\bibfield{author}{\bibinfo{person}{Edmund~M. Clarke}, \bibinfo{person}{Daniel
  Kroening}, {and} \bibinfo{person}{Flavio Lerda}.}
  \bibinfo{year}{2004}\natexlab{b}.
\newblock \showarticletitle{A Tool for Checking {ANSI-C} Programs}. In
  \bibinfo{booktitle}{\emph{Tools and Algorithms for the Construction and
  Analysis of Systems, 10th International Conference, {TACAS} 2004, Held as
  Part of the Joint European Conferences on Theory and Practice of Software,
  {ETAPS} 2004, Barcelona, Spain, March 29 - April 2, 2004, Proceedings}}.
  \bibinfo{pages}{168--176}.
\newblock


\bibitem[\protect\citeauthoryear{Clarke}{Clarke}{1976}]%
        {Clarke76}
\bibfield{author}{\bibinfo{person}{Lori~A. Clarke}.}
  \bibinfo{year}{1976}\natexlab{}.
\newblock \showarticletitle{A program testing system}. In
  \bibinfo{booktitle}{\emph{Proceedings of the 1976 Annual Conference, Houston,
  Texas, USA, October 20-22, 1976}}. \bibinfo{pages}{488--491}.
\newblock


\bibitem[\protect\citeauthoryear{Clarke, Podgurski, Richardson, and
  Zeil}{Clarke et~al\mbox{.}}{1989}]%
        {ClarkePRZ89}
\bibfield{author}{\bibinfo{person}{Lori~A. Clarke}, \bibinfo{person}{Andy
  Podgurski}, \bibinfo{person}{Debra~J. Richardson}, {and}
  \bibinfo{person}{Steven~J. Zeil}.} \bibinfo{year}{1989}\natexlab{}.
\newblock \showarticletitle{A Formal Evaluation of Data Flow Path Selection
  Criteria}.
\newblock \bibinfo{journal}{\emph{IEEE Trans. Software Eng.}}
  \bibinfo{volume}{15}, \bibinfo{number}{11} (\bibinfo{year}{1989}),
  \bibinfo{pages}{1318--1332}.
\newblock


\bibitem[\protect\citeauthoryear{cloc team}{cloc team}{2019}]%
        {cloc}
\bibfield{author}{\bibinfo{person}{The cloc team}.}
  \bibinfo{year}{2019}\natexlab{}.
\newblock \bibinfo{title}{{cloc}}.
\newblock
\newblock
\urldef\tempurl%
\url{https://github.com/AlDanial/cloc}
\showURL{%
Retrieved March 2019 from \tempurl}


\bibitem[\protect\citeauthoryear{Daca, Gupta, and Henzinger}{Daca
  et~al\mbox{.}}{2016}]%
        {DacaGH16}
\bibfield{author}{\bibinfo{person}{Przemyslaw Daca}, \bibinfo{person}{Ashutosh
  Gupta}, {and} \bibinfo{person}{Thomas~A. Henzinger}.}
  \bibinfo{year}{2016}\natexlab{}.
\newblock \showarticletitle{Abstraction-driven Concolic Testing}. In
  \bibinfo{booktitle}{\emph{Verification, Model Checking, and Abstract
  Interpretation - 17th International Conference, {VMCAI} 2016, St. Petersburg,
  FL, USA, January 17-19, 2016. Proceedings}}. \bibinfo{pages}{328--347}.
\newblock


\bibitem[\protect\citeauthoryear{Denaro, Margara, Pezz{\`{e}}, and
  Vivanti}{Denaro et~al\mbox{.}}{2015}]%
        {DenaroMPV15}
\bibfield{author}{\bibinfo{person}{Giovanni Denaro},
  \bibinfo{person}{Alessandro Margara}, \bibinfo{person}{Mauro Pezz{\`{e}}},
  {and} \bibinfo{person}{Mattia Vivanti}.} \bibinfo{year}{2015}\natexlab{}.
\newblock \showarticletitle{Dynamic Data Flow Testing of Object Oriented
  Systems}. In \bibinfo{booktitle}{\emph{37th {IEEE/ACM} International
  Conference on Software Engineering, {ICSE} 2015, Florence, Italy, May 16-24,
  2015, Volume 1}}. \bibinfo{pages}{947--958}.
\newblock


\bibitem[\protect\citeauthoryear{Denaro, Pezz{\`{e}}, and Vivanti}{Denaro
  et~al\mbox{.}}{2014}]%
        {DenaroPV14}
\bibfield{author}{\bibinfo{person}{Giovanni Denaro}, \bibinfo{person}{Mauro
  Pezz{\`{e}}}, {and} \bibinfo{person}{Mattia Vivanti}.}
  \bibinfo{year}{2014}\natexlab{}.
\newblock \showarticletitle{On the Right Objectives of Data Flow Testing}. In
  \bibinfo{booktitle}{\emph{{IEEE} Seventh International Conference on Software
  Testing, Verification and Validation, {ICST} 2014, March 31 2014-April 4,
  2014, Cleveland, Ohio, {USA}}}. \bibinfo{pages}{71--80}.
\newblock


\bibitem[\protect\citeauthoryear{Do, Fong, and Pears}{Do et~al\mbox{.}}{2012}]%
        {DoFP12}
\bibfield{author}{\bibinfo{person}{TheAnh Do}, \bibinfo{person}{Alvis Cheuk~M.
  Fong}, {and} \bibinfo{person}{Russel Pears}.}
  \bibinfo{year}{2012}\natexlab{}.
\newblock \showarticletitle{Precise Guidance to Dynamic Test Generation}. In
  \bibinfo{booktitle}{\emph{{ENASE} 2012 - Proceedings of the 7th International
  Conference on Evaluation of Novel Approaches to Software Engineering,
  Wroclaw, Poland, 29-30 June, 2012.}} \bibinfo{pages}{5--12}.
\newblock


\bibitem[\protect\citeauthoryear{Eler, Endo, Durelli, and Proc{\'o}pio-PR}{Eler
  et~al\mbox{.}}{2014}]%
        {eler2014covering}
\bibfield{author}{\bibinfo{person}{Marcelo~Medeiros Eler},
  \bibinfo{person}{Andr{\'e}~Takeshi Endo}, \bibinfo{person}{Vin{\i}cius
  Durelli}, {and} \bibinfo{person}{Corn{\'e}lio Proc{\'o}pio-PR}.}
  \bibinfo{year}{2014}\natexlab{}.
\newblock \showarticletitle{Covering User-Defined Data-flow Test Requirements
  Using Symbolic Execution}. In \bibinfo{booktitle}{\emph{Proceedings of the
  Thirteenth Brazilian Symposium On Software Quality (SBQS)}}.
  \bibinfo{pages}{16--30}.
\newblock


\bibitem[\protect\citeauthoryear{et~al.}{et~al.}{2019a}]%
        {dft-artifact}
\bibfield{author}{\bibinfo{person}{Ting~Su et al.}}
  \bibinfo{year}{2019}\natexlab{a}.
\newblock \bibinfo{title}{{Artifacts of Towards Efficient Data-flow Test Data
  Generation}}.
\newblock
\newblock
\urldef\tempurl%
\url{https://tingsu.github.io/files/hybrid_dft.html}
\showURL{%
Retrieved March 2019 from \tempurl}


\bibitem[\protect\citeauthoryear{et~al.}{et~al.}{2019b}]%
        {dft_repo}
\bibfield{author}{\bibinfo{person}{Ting~Su et al.}}
  \bibinfo{year}{2019}\natexlab{b}.
\newblock \bibinfo{title}{A bibliography of papers and tools on data flow
  testing}.
\newblock
\newblock
\urldef\tempurl%
\url{https://tingsu.github.io/files/dftbib.html}
\showURL{%
Retrieved March 2019 from \tempurl}


\bibitem[\protect\citeauthoryear{Foreman and Zweben}{Foreman and
  Zweben}{1993}]%
        {ForemanZ93}
\bibfield{author}{\bibinfo{person}{Lynn~M. Foreman} {and}
  \bibinfo{person}{Stuart~H. Zweben}.} \bibinfo{year}{1993}\natexlab{}.
\newblock \showarticletitle{A study of the effectiveness of control and data
  flow testing strategies}.
\newblock \bibinfo{journal}{\emph{Journal of Systems and Software}}
  \bibinfo{volume}{21}, \bibinfo{number}{3} (\bibinfo{year}{1993}),
  \bibinfo{pages}{215--228}.
\newblock


\bibitem[\protect\citeauthoryear{Frankl and Iakounenko}{Frankl and
  Iakounenko}{1998}]%
        {FranklI98}
\bibfield{author}{\bibinfo{person}{Phyllis~G. Frankl} {and}
  \bibinfo{person}{Oleg Iakounenko}.} \bibinfo{year}{1998}\natexlab{}.
\newblock \showarticletitle{Further Empirical Studies of Test Effectiveness}.
  In \bibinfo{booktitle}{\emph{{SIGSOFT} '98, Proceedings of the {ACM}
  {SIGSOFT} International Symposium on Foundations of Software Engineering,
  Lake Buena Vista, Florida, USA, November 3-5, 1998}}.
  \bibinfo{pages}{153--162}.
\newblock


\bibitem[\protect\citeauthoryear{Frankl and Weiss}{Frankl and Weiss}{1993}]%
        {Frankl93}
\bibfield{author}{\bibinfo{person}{P.~G. Frankl} {and} \bibinfo{person}{S.~N.
  Weiss}.} \bibinfo{year}{1993}\natexlab{}.
\newblock \showarticletitle{An Experimental Comparison of the Effectiveness of
  Branch Testing and Data Flow Testing}.
\newblock \bibinfo{journal}{\emph{IEEE Trans. Softw. Eng.}}
  \bibinfo{volume}{19}, \bibinfo{number}{8} (\bibinfo{date}{Aug.}
  \bibinfo{year}{1993}), \bibinfo{pages}{774--787}.
\newblock
\showISSN{0098-5589}


\bibitem[\protect\citeauthoryear{Fraser, Wotawa, and Ammann}{Fraser
  et~al\mbox{.}}{2009}]%
        {FraserWA09}
\bibfield{author}{\bibinfo{person}{Gordon Fraser}, \bibinfo{person}{Franz
  Wotawa}, {and} \bibinfo{person}{Paul Ammann}.}
  \bibinfo{year}{2009}\natexlab{}.
\newblock \showarticletitle{Testing with model checkers: a survey}.
\newblock \bibinfo{journal}{\emph{STVR}} \bibinfo{volume}{19},
  \bibinfo{number}{3} (\bibinfo{year}{2009}), \bibinfo{pages}{215--261}.
\newblock


\bibitem[\protect\citeauthoryear{Gadelha, Ismail, and Cordeiro}{Gadelha
  et~al\mbox{.}}{2017}]%
        {GadelhaIC17}
\bibfield{author}{\bibinfo{person}{Mikhail Y.~R. Gadelha},
  \bibinfo{person}{Hussama~Ibrahim Ismail}, {and} \bibinfo{person}{Lucas~C.
  Cordeiro}.} \bibinfo{year}{2017}\natexlab{}.
\newblock \showarticletitle{Handling loops in bounded model checking of {C}
  programs via k-induction}.
\newblock \bibinfo{journal}{\emph{{STTT}}} \bibinfo{volume}{19},
  \bibinfo{number}{1} (\bibinfo{year}{2017}), \bibinfo{pages}{97--114}.
\newblock


\bibitem[\protect\citeauthoryear{Ghiduk}{Ghiduk}{2010}]%
        {Ghiduk10b}
\bibfield{author}{\bibinfo{person}{Ahmed~S. Ghiduk}.}
  \bibinfo{year}{2010}\natexlab{}.
\newblock \showarticletitle{A New Software Data-Flow Testing Approach via Ant
  Colony Algorithms}.
\newblock \bibinfo{journal}{\emph{Universal Journal of Computer Science and
  Engineering Technology}} \bibinfo{volume}{1}, \bibinfo{number}{1}
  (\bibinfo{date}{October} \bibinfo{year}{2010}), \bibinfo{pages}{64--72}.
\newblock


\bibitem[\protect\citeauthoryear{Ghiduk, Harrold, and Girgis}{Ghiduk
  et~al\mbox{.}}{2007}]%
        {GhidukHG07}
\bibfield{author}{\bibinfo{person}{Ahmed~S. Ghiduk}, \bibinfo{person}{Mary~Jean
  Harrold}, {and} \bibinfo{person}{Moheb~R. Girgis}.}
  \bibinfo{year}{2007}\natexlab{}.
\newblock \showarticletitle{Using Genetic Algorithms to Aid Test-Data
  Generation for Data-Flow Coverage}. In \bibinfo{booktitle}{\emph{APSEC}}.
  \bibinfo{pages}{41--48}.
\newblock


\bibitem[\protect\citeauthoryear{Girgis}{Girgis}{1993}]%
        {Girgis93}
\bibfield{author}{\bibinfo{person}{Moheb~R. Girgis}.}
  \bibinfo{year}{1993}\natexlab{}.
\newblock \showarticletitle{Using Symbolic Execution and Data Flow Criteria to
  Aid Test Data Selection}.
\newblock \bibinfo{journal}{\emph{Softw. Test., Verif. Reliab.}}
  \bibinfo{volume}{3}, \bibinfo{number}{2} (\bibinfo{year}{1993}),
  \bibinfo{pages}{101--112}.
\newblock


\bibitem[\protect\citeauthoryear{Girgis}{Girgis}{2005}]%
        {Girgis05}
\bibfield{author}{\bibinfo{person}{Moheb~R. Girgis}.}
  \bibinfo{year}{2005}\natexlab{}.
\newblock \showarticletitle{Automatic Test Data Generation for Data Flow
  Testing Using a Genetic Algorithm}.
\newblock \bibinfo{journal}{\emph{J. UCS}} \bibinfo{volume}{11},
  \bibinfo{number}{6} (\bibinfo{year}{2005}), \bibinfo{pages}{898--915}.
\newblock


\bibitem[\protect\citeauthoryear{Girgis, Ghiduk, and Abd-elkawy}{Girgis
  et~al\mbox{.}}{2014}]%
        {GirgisS14}
\bibfield{author}{\bibinfo{person}{Moheb~R. Girgis}, \bibinfo{person}{Ahmed~S.
  Ghiduk}, {and} \bibinfo{person}{Eman~H. Abd-elkawy}.}
  \bibinfo{year}{2014}\natexlab{}.
\newblock \showarticletitle{Automatic Generation of Data Flow Test Paths using
  a Genetic Algorithm}.
\newblock \bibinfo{journal}{\emph{International Journal of Computer
  Applications}} \bibinfo{volume}{89}, \bibinfo{number}{12}
  (\bibinfo{date}{March} \bibinfo{year}{2014}), \bibinfo{pages}{29--36}.
\newblock


\bibitem[\protect\citeauthoryear{Godefroid, Klarlund, and Sen}{Godefroid
  et~al\mbox{.}}{2005}]%
        {DART}
\bibfield{author}{\bibinfo{person}{Patrice Godefroid}, \bibinfo{person}{Nils
  Klarlund}, {and} \bibinfo{person}{Koushik Sen}.}
  \bibinfo{year}{2005}\natexlab{}.
\newblock \showarticletitle{{DART}: Directed automated random testing}. In
  \bibinfo{booktitle}{\emph{Proceedings of the 2005 ACM SIGPLAN conference on
  Programming language design and implementation}}. \bibinfo{publisher}{ACM},
  \bibinfo{address}{New York, NY, USA}, \bibinfo{pages}{213--223}.
\newblock


\bibitem[\protect\citeauthoryear{Godefroid, Levin, and Molnar}{Godefroid
  et~al\mbox{.}}{2008}]%
        {GodefroidLM08}
\bibfield{author}{\bibinfo{person}{Patrice Godefroid},
  \bibinfo{person}{Michael~Y. Levin}, {and} \bibinfo{person}{David~A. Molnar}.}
  \bibinfo{year}{2008}\natexlab{}.
\newblock \showarticletitle{Active property checking}. In
  \bibinfo{booktitle}{\emph{Proceedings of the 8th {ACM} {\&} {IEEE}
  International conference on Embedded software, {EMSOFT} 2008, Atlanta, GA,
  USA, October 19-24, 2008}}. \bibinfo{pages}{207--216}.
\newblock


\bibitem[\protect\citeauthoryear{Goldberg, Wang, and Zimmerman}{Goldberg
  et~al\mbox{.}}{1994}]%
        {GoldbergWZ94}
\bibfield{author}{\bibinfo{person}{Allen Goldberg},
  \bibinfo{person}{Tie{-}Cheng Wang}, {and} \bibinfo{person}{David Zimmerman}.}
  \bibinfo{year}{1994}\natexlab{}.
\newblock \showarticletitle{Applications of Feasible Path Analysis to Program
  Testing}. In \bibinfo{booktitle}{\emph{Proceedings of the 1994 International
  Symposium on Software Testing and Analysis, {ISSTA} 1994, Seattle, WA, USA,
  August 17-19, 1994}}. \bibinfo{pages}{80--94}.
\newblock


\bibitem[\protect\citeauthoryear{Harman, Kim, Lakhotia, McMinn, and Yoo}{Harman
  et~al\mbox{.}}{2010}]%
        {HarmanKLMY10}
\bibfield{author}{\bibinfo{person}{Mark Harman}, \bibinfo{person}{Sung~Gon
  Kim}, \bibinfo{person}{Kiran Lakhotia}, \bibinfo{person}{Phil McMinn}, {and}
  \bibinfo{person}{Shin Yoo}.} \bibinfo{year}{2010}\natexlab{}.
\newblock \showarticletitle{Optimizing for the Number of Tests Generated in
  Search Based Test Data Generation with an Application to the Oracle Cost
  Problem}. In \bibinfo{booktitle}{\emph{Third International Conference on
  Software Testing, Verification and Validation, {ICST} 2010, Paris, France,
  April 7-9, 2010, Workshops Proceedings}}. \bibinfo{pages}{182--191}.
\newblock


\bibitem[\protect\citeauthoryear{Harrold and Rothermel}{Harrold and
  Rothermel}{1994}]%
        {HarroldR94}
\bibfield{author}{\bibinfo{person}{Mary~Jean Harrold} {and}
  \bibinfo{person}{Gregg Rothermel}.} \bibinfo{year}{1994}\natexlab{}.
\newblock \showarticletitle{Performing Data Flow Testing on Classes}. In
  \bibinfo{booktitle}{\emph{SIGSOFT FSE}}. \bibinfo{pages}{154--163}.
\newblock


\bibitem[\protect\citeauthoryear{Harrold and Soffa}{Harrold and Soffa}{1994}]%
        {Harrold94}
\bibfield{author}{\bibinfo{person}{Mary~Jean Harrold} {and}
  \bibinfo{person}{Mary~Lou Soffa}.} \bibinfo{year}{1994}\natexlab{}.
\newblock \showarticletitle{Efficient computation of interprocedural
  definition-use chains}.
\newblock \bibinfo{journal}{\emph{ACM Trans. Program. Lang. Syst.}}
  \bibinfo{volume}{16}, \bibinfo{number}{2} (\bibinfo{date}{March}
  \bibinfo{year}{1994}), \bibinfo{pages}{175--204}.
\newblock
\showISSN{0164-0925}


\bibitem[\protect\citeauthoryear{Hassan and Andrews}{Hassan and
  Andrews}{2013}]%
        {HassanA13}
\bibfield{author}{\bibinfo{person}{Mohammad~Mahdi Hassan} {and}
  \bibinfo{person}{James~H. Andrews}.} \bibinfo{year}{2013}\natexlab{}.
\newblock \showarticletitle{Comparing multi-point stride coverage and dataflow
  coverage}. In \bibinfo{booktitle}{\emph{35th International Conference on
  Software Engineering, {ICSE} '13, San Francisco, CA, USA, May 18-26, 2013}}.
  \bibinfo{pages}{172--181}.
\newblock


\bibitem[\protect\citeauthoryear{Henzinger, Jhala, Majumdar, and
  Sutre}{Henzinger et~al\mbox{.}}{2002}]%
        {HenzingerJMS02}
\bibfield{author}{\bibinfo{person}{Thomas~A. Henzinger},
  \bibinfo{person}{Ranjit Jhala}, \bibinfo{person}{Rupak Majumdar}, {and}
  \bibinfo{person}{Gr{\'{e}}goire Sutre}.} \bibinfo{year}{2002}\natexlab{}.
\newblock \showarticletitle{Lazy abstraction}. In
  \bibinfo{booktitle}{\emph{Conference Record of {POPL} 2002: The 29th
  {SIGPLAN-SIGACT} Symposium on Principles of Programming Languages, Portland,
  OR, USA, January 16-18, 2002}}. \bibinfo{pages}{58--70}.
\newblock


\bibitem[\protect\citeauthoryear{Holzer, Schallhart, Tautschnig, and
  Veith}{Holzer et~al\mbox{.}}{2010}]%
        {HolzerTVS10}
\bibfield{author}{\bibinfo{person}{Andreas Holzer}, \bibinfo{person}{Christian
  Schallhart}, \bibinfo{person}{Michael Tautschnig}, {and}
  \bibinfo{person}{Helmut Veith}.} \bibinfo{year}{2010}\natexlab{}.
\newblock \showarticletitle{How did you specify your test suite}. In
  \bibinfo{booktitle}{\emph{{ASE} 2010, 25th {IEEE/ACM} International
  Conference on Automated Software Engineering, Antwerp, Belgium, September
  20-24, 2010}}. \bibinfo{pages}{407--416}.
\newblock


\bibitem[\protect\citeauthoryear{Hong, Cha, Lee, Sokolsky, and Ural}{Hong
  et~al\mbox{.}}{2003}]%
        {HongCLSU03}
\bibfield{author}{\bibinfo{person}{Hyoung~Seok Hong},
  \bibinfo{person}{Sung~Deok Cha}, \bibinfo{person}{Insup Lee},
  \bibinfo{person}{Oleg Sokolsky}, {and} \bibinfo{person}{Hasan Ural}.}
  \bibinfo{year}{2003}\natexlab{}.
\newblock \showarticletitle{Data Flow Testing as Model Checking}. In
  \bibinfo{booktitle}{\emph{Proceedings of the 25th International Conference on
  Software Engineering, May 3-10, 2003, Portland, Oregon, {USA}}}.
  \bibinfo{pages}{232--243}.
\newblock


\bibitem[\protect\citeauthoryear{Horgan and London}{Horgan and London}{1992}]%
        {Horgan92}
\bibfield{author}{\bibinfo{person}{J.~R. Horgan} {and}
  \bibinfo{person}{London}.} \bibinfo{year}{1992}\natexlab{}.
\newblock \showarticletitle{{ATAC: A data flow coverage testing tool for C}}.
  In \bibinfo{booktitle}{\emph{Proceedings of Symposium on Assessment of
  Quality Software Development Tools}}. \bibinfo{pages}{2--10}.
\newblock


\bibitem[\protect\citeauthoryear{Horgan and London}{Horgan and London}{1991}]%
        {Horgan91}
\bibfield{author}{\bibinfo{person}{J.~R. Horgan} {and} \bibinfo{person}{S.
  London}.} \bibinfo{year}{1991}\natexlab{}.
\newblock \showarticletitle{Data flow coverage and the C language}. In
  \bibinfo{booktitle}{\emph{Proceedings of the symposium on Testing, analysis,
  and verification}} \emph{(\bibinfo{series}{TAV4})}. \bibinfo{publisher}{ACM},
  \bibinfo{address}{New York, NY, USA}, \bibinfo{pages}{87--97}.
\newblock
\showISBNx{0-89791-449-X}


\bibitem[\protect\citeauthoryear{Hutchins, Foster, Goradia, and
  Ostrand}{Hutchins et~al\mbox{.}}{1994}]%
        {HutchinsFGO94}
\bibfield{author}{\bibinfo{person}{Monica Hutchins}, \bibinfo{person}{Herbert
  Foster}, \bibinfo{person}{Tarak Goradia}, {and} \bibinfo{person}{Thomas~J.
  Ostrand}.} \bibinfo{year}{1994}\natexlab{}.
\newblock \showarticletitle{Experiments of the Effectiveness of Dataflow- and
  Controlflow-Based Test Adequacy Criteria}. In
  \bibinfo{booktitle}{\emph{ICSE}}. \bibinfo{pages}{191--200}.
\newblock


\bibitem[\protect\citeauthoryear{Inozemtseva and Holmes}{Inozemtseva and
  Holmes}{2014}]%
        {Inozemtseva14}
\bibfield{author}{\bibinfo{person}{Laura Inozemtseva} {and}
  \bibinfo{person}{Reid Holmes}.} \bibinfo{year}{2014}\natexlab{}.
\newblock \showarticletitle{Coverage is Not Strongly Correlated with Test Suite
  Effectiveness}. In \bibinfo{booktitle}{\emph{Proceedings of the 36th
  International Conference on Software Engineering}}
  \emph{(\bibinfo{series}{ICSE 2014})}. \bibinfo{pages}{435--445}.
\newblock


\bibitem[\protect\citeauthoryear{Jamrozik, Fraser, Tillmann, and
  de~Halleux}{Jamrozik et~al\mbox{.}}{2012}]%
        {JamrozikFTH12}
\bibfield{author}{\bibinfo{person}{Konrad Jamrozik}, \bibinfo{person}{Gordon
  Fraser}, \bibinfo{person}{Nikolai Tillmann}, {and} \bibinfo{person}{Jonathan
  de Halleux}.} \bibinfo{year}{2012}\natexlab{}.
\newblock \showarticletitle{Augmented dynamic symbolic execution}. In
  \bibinfo{booktitle}{\emph{IEEE/ACM International Conference on Automated
  Software Engineering}}. \bibinfo{pages}{254--257}.
\newblock


\bibitem[\protect\citeauthoryear{Jhala and Majumdar}{Jhala and
  Majumdar}{2009}]%
        {jhala2009software}
\bibfield{author}{\bibinfo{person}{Ranjit Jhala} {and} \bibinfo{person}{Rupak
  Majumdar}.} \bibinfo{year}{2009}\natexlab{}.
\newblock \showarticletitle{Software model checking}.
\newblock \bibinfo{journal}{\emph{ACM Computing Surveys (CSUR)}}
  \bibinfo{volume}{41}, \bibinfo{number}{4} (\bibinfo{year}{2009}),
  \bibinfo{pages}{21}.
\newblock


\bibitem[\protect\citeauthoryear{Kapus and Cadar}{Kapus and Cadar}{2017}]%
        {kapus2017automatic}
\bibfield{author}{\bibinfo{person}{Timotej Kapus} {and}
  \bibinfo{person}{Cristian Cadar}.} \bibinfo{year}{2017}\natexlab{}.
\newblock \showarticletitle{Automatic testing of symbolic execution engines via
  program generation and differential testing}. In
  \bibinfo{booktitle}{\emph{ASE}}. \bibinfo{pages}{590--600}.
\newblock


\bibitem[\protect\citeauthoryear{Khamis, Bahgat, and Abdelaziz}{Khamis
  et~al\mbox{.}}{2011}]%
        {Khamis11}
\bibfield{author}{\bibinfo{person}{Abdelaziz Khamis}, \bibinfo{person}{Reem
  Bahgat}, {and} \bibinfo{person}{Rana Abdelaziz}.}
  \bibinfo{year}{2011}\natexlab{}.
\newblock \showarticletitle{Automatic Test Data Generation using Data Flow
  Information}.
\newblock \bibinfo{journal}{\emph{Dogus University Journal}}
  \bibinfo{volume}{2} (\bibinfo{year}{2011}), \bibinfo{pages}{140--153}.
\newblock


\bibitem[\protect\citeauthoryear{Khoroshilov, Mutilin, Petrenko, and
  Zakharov}{Khoroshilov et~al\mbox{.}}{2009}]%
        {KhoroshilovMPZ09}
\bibfield{author}{\bibinfo{person}{Alexey~V. Khoroshilov},
  \bibinfo{person}{Vadim~S. Mutilin}, \bibinfo{person}{Alexandre Petrenko},
  {and} \bibinfo{person}{Vladimir Zakharov}.} \bibinfo{year}{2009}\natexlab{}.
\newblock \showarticletitle{Establishing Linux Driver Verification Process}. In
  \bibinfo{booktitle}{\emph{PSI}}. \bibinfo{pages}{165--176}.
\newblock


\bibitem[\protect\citeauthoryear{King}{King}{1976}]%
        {symbolicexecution}
\bibfield{author}{\bibinfo{person}{James~C. King}.}
  \bibinfo{year}{1976}\natexlab{}.
\newblock \showarticletitle{Symbolic execution and program testing}.
\newblock \bibinfo{journal}{\emph{Commun. ACM}} \bibinfo{volume}{19},
  \bibinfo{number}{7} (\bibinfo{date}{July} \bibinfo{year}{1976}),
  \bibinfo{pages}{385--394}.
\newblock


\bibitem[\protect\citeauthoryear{Kirchner, Kosmatov, Prevosto, Signoles, and
  Yakobowski}{Kirchner et~al\mbox{.}}{2015}]%
        {KirchnerKPSY15}
\bibfield{author}{\bibinfo{person}{Florent Kirchner}, \bibinfo{person}{Nikolai
  Kosmatov}, \bibinfo{person}{Virgile Prevosto}, \bibinfo{person}{Julien
  Signoles}, {and} \bibinfo{person}{Boris Yakobowski}.}
  \bibinfo{year}{2015}\natexlab{}.
\newblock \showarticletitle{Frama-C: {A} software analysis perspective}.
\newblock \bibinfo{journal}{\emph{Formal Asp. Comput.}} \bibinfo{volume}{27},
  \bibinfo{number}{3} (\bibinfo{year}{2015}), \bibinfo{pages}{573--609}.
\newblock


\bibitem[\protect\citeauthoryear{Klinger, Christakis, and
  W{\"{u}}stholz}{Klinger et~al\mbox{.}}{2018}]%
        {differential_model_checker}
\bibfield{author}{\bibinfo{person}{Christian Klinger}, \bibinfo{person}{Maria
  Christakis}, {and} \bibinfo{person}{Valentin W{\"{u}}stholz}.}
  \bibinfo{year}{2018}\natexlab{}.
\newblock \showarticletitle{Differentially Testing Soundness and Precision of
  Program Analyzers}.
\newblock \bibinfo{journal}{\emph{CoRR}}  \bibinfo{volume}{abs/1812.05033}
  (\bibinfo{year}{2018}).
\newblock
\urldef\tempurl%
\url{http://arxiv.org/abs/1812.05033}
\showURL{%
\tempurl}


\bibitem[\protect\citeauthoryear{Lakhotia, McMinn, and Harman}{Lakhotia
  et~al\mbox{.}}{2009}]%
        {Lakhotia09}
\bibfield{author}{\bibinfo{person}{Kiran Lakhotia}, \bibinfo{person}{Phil
  McMinn}, {and} \bibinfo{person}{Mark Harman}.}
  \bibinfo{year}{2009}\natexlab{}.
\newblock \showarticletitle{Automated Test Data Generation for Coverage:
  Haven't We Solved This Problem Yet?}. In
  \bibinfo{booktitle}{\emph{Proceedings of the 2009 Testing: Academic and
  Industrial Conference - Practice and Research Techniques}}.
  \bibinfo{publisher}{IEEE Computer Society}, \bibinfo{address}{Washington, DC,
  USA}, \bibinfo{pages}{95--104}.
\newblock


\bibitem[\protect\citeauthoryear{Ma, Khoo, Foster, and Hicks}{Ma
  et~al\mbox{.}}{2011}]%
        {MaKFH11}
\bibfield{author}{\bibinfo{person}{Kin{-}Keung Ma}, \bibinfo{person}{Yit~Phang
  Khoo}, \bibinfo{person}{Jeffrey~S. Foster}, {and} \bibinfo{person}{Michael
  Hicks}.} \bibinfo{year}{2011}\natexlab{}.
\newblock \showarticletitle{Directed Symbolic Execution}. In
  \bibinfo{booktitle}{\emph{Static Analysis - 18th International Symposium,
  {SAS} 2011, Venice, Italy, September 14-16, 2011. Proceedings}}.
  \bibinfo{pages}{95--111}.
\newblock


\bibitem[\protect\citeauthoryear{Malevris and Yates}{Malevris and
  Yates}{2006}]%
        {Malevris05}
\bibfield{author}{\bibinfo{person}{N. Malevris} {and} \bibinfo{person}{D.F.
  Yates}.} \bibinfo{year}{2006}\natexlab{}.
\newblock \showarticletitle{The collateral coverage of data flow criteria when
  branch testing}.
\newblock \bibinfo{journal}{\emph{Information and Software Technology}}
  \bibinfo{volume}{48}, \bibinfo{number}{8} (\bibinfo{year}{2006}),
  \bibinfo{pages}{676 -- 686}.
\newblock
\showISSN{0950-5849}


\bibitem[\protect\citeauthoryear{Marcozzi, Bardin, Kosmatov, Papadakis,
  Prevosto, and Correnson}{Marcozzi et~al\mbox{.}}{2018}]%
        {marcozzi2018time}
\bibfield{author}{\bibinfo{person}{Michaël Marcozzi},
  \bibinfo{person}{Sébastien Bardin}, \bibinfo{person}{Nikolai Kosmatov},
  \bibinfo{person}{Mike Papadakis}, \bibinfo{person}{Virgile Prevosto}, {and}
  \bibinfo{person}{Loïc Correnson}.} \bibinfo{year}{2018}\natexlab{}.
\newblock \showarticletitle{Time to Clean Your Test Objectives}. In
  \bibinfo{booktitle}{\emph{40th International Conference on Software
  Engineering, May 27-3 June 2018, Gothenburg, Sweden}}.
\newblock


\bibitem[\protect\citeauthoryear{Marinescu and Cadar}{Marinescu and
  Cadar}{2013}]%
        {MarinescuC13}
\bibfield{author}{\bibinfo{person}{Paul~Dan Marinescu} {and}
  \bibinfo{person}{Cristian Cadar}.} \bibinfo{year}{2013}\natexlab{}.
\newblock \showarticletitle{{KATCH:} high-coverage testing of software
  patches}. In \bibinfo{booktitle}{\emph{Joint Meeting of the European Software
  Engineering Conference and the {ACM} {SIGSOFT} Symposium on the Foundations
  of Software Engineering, ESEC/FSE'13, Saint Petersburg, Russian Federation,
  August 18-26, 2013}}. \bibinfo{pages}{235--245}.
\newblock


\bibitem[\protect\citeauthoryear{Marr{\'e} and Bertolino}{Marr{\'e} and
  Bertolino}{1996}]%
        {Marre96}
\bibfield{author}{\bibinfo{person}{Martina Marr{\'e}} {and}
  \bibinfo{person}{Antonia Bertolino}.} \bibinfo{year}{1996}\natexlab{}.
\newblock \showarticletitle{Unconstrained Duas and Their Use in Achieving
  All-uses Coverage}. In \bibinfo{booktitle}{\emph{Proceedings of the 1996 ACM
  SIGSOFT International Symposium on Software Testing and Analysis}}
  \emph{(\bibinfo{series}{ISSTA '96})}. \bibinfo{publisher}{ACM},
  \bibinfo{address}{New York, NY, USA}, \bibinfo{pages}{147--157}.
\newblock
\showISBNx{0-89791-787-1}


\bibitem[\protect\citeauthoryear{Marr{\'e} and Bertolino}{Marr{\'e} and
  Bertolino}{2003}]%
        {Marre03}
\bibfield{author}{\bibinfo{person}{Martina Marr{\'e}} {and}
  \bibinfo{person}{Antonia Bertolino}.} \bibinfo{year}{2003}\natexlab{}.
\newblock \showarticletitle{Using Spanning Sets for Coverage Testing}.
\newblock \bibinfo{journal}{\emph{IEEE Trans. Softw. Eng.}}
  \bibinfo{volume}{29}, \bibinfo{number}{11} (\bibinfo{date}{Nov.}
  \bibinfo{year}{2003}), \bibinfo{pages}{974--984}.
\newblock
\showISSN{0098-5589}


\bibitem[\protect\citeauthoryear{Mathur and Wong}{Mathur and Wong}{1994}]%
        {MathurW94}
\bibfield{author}{\bibinfo{person}{Aditya~P. Mathur} {and}
  \bibinfo{person}{W.~Eric Wong}.} \bibinfo{year}{1994}\natexlab{}.
\newblock \showarticletitle{An Empirical Comparison of Data Flow and
  Mutation-Based Test Adequacy Criteria}.
\newblock \bibinfo{journal}{\emph{Softw. Test., Verif. Reliab.}}
  \bibinfo{volume}{4}, \bibinfo{number}{1} (\bibinfo{year}{1994}),
  \bibinfo{pages}{9--31}.
\newblock


\bibitem[\protect\citeauthoryear{Merlo and Antoniol}{Merlo and
  Antoniol}{1999}]%
        {MerloA99}
\bibfield{author}{\bibinfo{person}{Ettore Merlo} {and}
  \bibinfo{person}{Giuliano Antoniol}.} \bibinfo{year}{1999}\natexlab{}.
\newblock \showarticletitle{A static measure of a subset of intra-procedural
  data flow testing coverage based on node coverage.}. In
  \bibinfo{booktitle}{\emph{CASCON}}. \bibinfo{pages}{7}.
\newblock


\bibitem[\protect\citeauthoryear{Miao, Pu, Yao, Su, Bao, Liu, Chen, and
  Xiong}{Miao et~al\mbox{.}}{2016}]%
        {MiaoPYSB0CX16}
\bibfield{author}{\bibinfo{person}{Weikai Miao}, \bibinfo{person}{Geguang Pu},
  \bibinfo{person}{Yinbo Yao}, \bibinfo{person}{Ting Su},
  \bibinfo{person}{Danzhu Bao}, \bibinfo{person}{Yang Liu},
  \bibinfo{person}{Shuohao Chen}, {and} \bibinfo{person}{Kunpeng Xiong}.}
  \bibinfo{year}{2016}\natexlab{}.
\newblock \showarticletitle{Automated Requirements Validation for {ATP}
  Software via Specification Review and Testing}. In
  \bibinfo{booktitle}{\emph{Formal Methods and Software Engineering - 18th
  International Conference on Formal Engineering Methods, {ICFEM} 2016, Tokyo,
  Japan, November 14-18, 2016, Proceedings}}. \bibinfo{pages}{26--40}.
\newblock


\bibitem[\protect\citeauthoryear{Morse, Ramalho, Cordeiro, Nicole, and
  Fischer}{Morse et~al\mbox{.}}{2014}]%
        {MorseRCN014}
\bibfield{author}{\bibinfo{person}{Jeremy Morse}, \bibinfo{person}{Mikhail
  Ramalho}, \bibinfo{person}{Lucas~C. Cordeiro}, \bibinfo{person}{Denis~A.
  Nicole}, {and} \bibinfo{person}{Bernd Fischer}.}
  \bibinfo{year}{2014}\natexlab{}.
\newblock \showarticletitle{{ESBMC} 1.22 - (Competition Contribution)}. In
  \bibinfo{booktitle}{\emph{TACAS'14}}. \bibinfo{pages}{405--407}.
\newblock


\bibitem[\protect\citeauthoryear{Nayak and Mohapatra}{Nayak and
  Mohapatra}{2010}]%
        {NayakM10}
\bibfield{author}{\bibinfo{person}{Narmada Nayak} {and}
  \bibinfo{person}{Durga~Prasad Mohapatra}.} \bibinfo{year}{2010}\natexlab{}.
\newblock \showarticletitle{Automatic Test Data Generation for Data Flow
  Testing Using Particle Swarm Optimization}. In \bibinfo{booktitle}{\emph{IC3
  (2)}}. \bibinfo{pages}{1--12}.
\newblock


\bibitem[\protect\citeauthoryear{Necula, McPeak, Rahul, and Weimer}{Necula
  et~al\mbox{.}}{2002}]%
        {CIL}
\bibfield{author}{\bibinfo{person}{George~C. Necula}, \bibinfo{person}{Scott
  McPeak}, \bibinfo{person}{Shree~Prakash Rahul}, {and}
  \bibinfo{person}{Westley Weimer}.} \bibinfo{year}{2002}\natexlab{}.
\newblock \showarticletitle{CIL: Intermediate Language and Tools for Analysis
  and Transformation of C Programs}. In \bibinfo{booktitle}{\emph{Proceedings
  of the 11th International Conference on Compiler Construction}}
  \emph{(\bibinfo{series}{CC '02})}. \bibinfo{publisher}{Springer-Verlag},
  \bibinfo{address}{London, UK, UK}, \bibinfo{pages}{213--228}.
\newblock
\showISBNx{3-540-43369-4}


\bibitem[\protect\citeauthoryear{Offutt and Pan}{Offutt and Pan}{1997}]%
        {OffuttP97}
\bibfield{author}{\bibinfo{person}{A.~Jefferson Offutt} {and}
  \bibinfo{person}{Jie Pan}.} \bibinfo{year}{1997}\natexlab{}.
\newblock \showarticletitle{Automatically Detecting Equivalent Mutants and
  Infeasible Paths}.
\newblock \bibinfo{journal}{\emph{Softw. Test., Verif. Reliab.}}
  \bibinfo{volume}{7}, \bibinfo{number}{3} (\bibinfo{year}{1997}),
  \bibinfo{pages}{165--192}.
\newblock


\bibitem[\protect\citeauthoryear{Pande, Landi, and Ryder}{Pande
  et~al\mbox{.}}{1994}]%
        {Pande94}
\bibfield{author}{\bibinfo{person}{H.~D. Pande}, \bibinfo{person}{W.~A. Landi},
  {and} \bibinfo{person}{B.~G. Ryder}.} \bibinfo{year}{1994}\natexlab{}.
\newblock \showarticletitle{Interprocedural Def-Use Associations for {C}
  Systems with Single Level Pointers}.
\newblock \bibinfo{journal}{\emph{IEEE Trans. Softw. Eng.}}
  \bibinfo{volume}{20}, \bibinfo{number}{5} (\bibinfo{date}{May}
  \bibinfo{year}{1994}), \bibinfo{pages}{385--403}.
\newblock
\showISSN{0098-5589}


\bibitem[\protect\citeauthoryear{Pandita, Xie, Tillmann, and
  de~Halleux}{Pandita et~al\mbox{.}}{2010}]%
        {xietao2010}
\bibfield{author}{\bibinfo{person}{Rahul Pandita}, \bibinfo{person}{Tao Xie},
  \bibinfo{person}{Nikolai Tillmann}, {and} \bibinfo{person}{Jonathan de
  Halleux}.} \bibinfo{year}{2010}\natexlab{}.
\newblock \showarticletitle{Guided test generation for coverage criteria}. In
  \bibinfo{booktitle}{\emph{Proceedings of the 2010 IEEE International
  Conference on Software Maintenance}}. \bibinfo{publisher}{IEEE Computer
  Society}, \bibinfo{address}{Washington, DC, USA}, \bibinfo{pages}{1--10}.
\newblock


\bibitem[\protect\citeauthoryear{Pavlinovic, Lal, and Sharma}{Pavlinovic
  et~al\mbox{.}}{2016}]%
        {PavlinovicLS16}
\bibfield{author}{\bibinfo{person}{Zvonimir Pavlinovic}, \bibinfo{person}{Akash
  Lal}, {and} \bibinfo{person}{Rahul Sharma}.} \bibinfo{year}{2016}\natexlab{}.
\newblock \showarticletitle{Inferring annotations for device drivers from
  verification histories}. In \bibinfo{booktitle}{\emph{ASE}}.
  \bibinfo{pages}{450--460}.
\newblock


\bibitem[\protect\citeauthoryear{Peng, Huang, Su, and Guo}{Peng
  et~al\mbox{.}}{2013}]%
        {PengHSG13}
\bibfield{author}{\bibinfo{person}{Yunhui Peng}, \bibinfo{person}{Yanhong
  Huang}, \bibinfo{person}{Ting Su}, {and} \bibinfo{person}{Jian Guo}.}
  \bibinfo{year}{2013}\natexlab{}.
\newblock \showarticletitle{Modeling and Verification of {AUTOSAR} {OS} and
  {EMS} Application}. In \bibinfo{booktitle}{\emph{Seventh International
  Symposium on Theoretical Aspects of Software Engineering, {TASE} 2013, 1-3
  July 2013, Birmingham, {UK}}}. \bibinfo{pages}{37--44}.
\newblock


\bibitem[\protect\citeauthoryear{Pezz{\`{e}} and Young}{Pezz{\`{e}} and
  Young}{2007}]%
        {0017273}
\bibfield{author}{\bibinfo{person}{Mauro Pezz{\`{e}}} {and}
  \bibinfo{person}{Michal Young}.} \bibinfo{year}{2007}\natexlab{}.
\newblock \bibinfo{booktitle}{\emph{Software testing and analysis - process,
  principles and techniques}}.
\newblock \bibinfo{publisher}{Wiley}.
\newblock
\showISBNx{978-0-471-45593-6}


\bibitem[\protect\citeauthoryear{Rapps and Weyuker}{Rapps and Weyuker}{1982}]%
        {Rapps82}
\bibfield{author}{\bibinfo{person}{Sandra Rapps} {and}
  \bibinfo{person}{Elaine~J. Weyuker}.} \bibinfo{year}{1982}\natexlab{}.
\newblock \showarticletitle{Data Flow Analysis Techniques for Test Data
  Selection}. In \bibinfo{booktitle}{\emph{Proceedings of the 6th International
  Conference on Software Engineering}} \emph{(\bibinfo{series}{ICSE '82})}.
  \bibinfo{publisher}{IEEE Computer Society Press}, \bibinfo{address}{Los
  Alamitos, CA, USA}, \bibinfo{pages}{272--278}.
\newblock


\bibitem[\protect\citeauthoryear{Rapps and Weyuker}{Rapps and Weyuker}{1985}]%
        {RappsW85}
\bibfield{author}{\bibinfo{person}{Sandra Rapps} {and}
  \bibinfo{person}{Elaine~J. Weyuker}.} \bibinfo{year}{1985}\natexlab{}.
\newblock \showarticletitle{Selecting Software Test Data Using Data Flow
  Information}.
\newblock \bibinfo{journal}{\emph{IEEE Trans. Software Eng.}}
  \bibinfo{volume}{11}, \bibinfo{number}{4} (\bibinfo{year}{1985}),
  \bibinfo{pages}{367--375}.
\newblock


\bibitem[\protect\citeauthoryear{Santelices and Harrold}{Santelices and
  Harrold}{2007}]%
        {Santelices07}
\bibfield{author}{\bibinfo{person}{Raul Santelices} {and}
  \bibinfo{person}{Mary~Jean Harrold}.} \bibinfo{year}{2007}\natexlab{}.
\newblock \showarticletitle{Efficiently monitoring data-flow test coverage}. In
  \bibinfo{booktitle}{\emph{Proceedings of the twenty-second IEEE/ACM
  international conference on Automated software engineering}}
  \emph{(\bibinfo{series}{ASE '07})}. \bibinfo{publisher}{ACM},
  \bibinfo{address}{New York, NY, USA}, \bibinfo{pages}{343--352}.
\newblock
\showISBNx{978-1-59593-882-4}


\bibitem[\protect\citeauthoryear{Santelices, Sinha, and Harrold}{Santelices
  et~al\mbox{.}}{2006}]%
        {SantelicesSH06}
\bibfield{author}{\bibinfo{person}{Ra{\'{u}}l~A. Santelices},
  \bibinfo{person}{Saurabh Sinha}, {and} \bibinfo{person}{Mary~Jean Harrold}.}
  \bibinfo{year}{2006}\natexlab{}.
\newblock \showarticletitle{Subsumption of program entities for efficient
  coverage and monitoring}. In \bibinfo{booktitle}{\emph{Third International
  Workshop on Software Quality Assurance, {SOQUA} 2006, Portland, Oregon, USA,
  November 6, 2006}}. \bibinfo{pages}{2--5}.
\newblock


\bibitem[\protect\citeauthoryear{Sen, Marinov, and Agha}{Sen
  et~al\mbox{.}}{2005}]%
        {CUTEC}
\bibfield{author}{\bibinfo{person}{Koushik Sen}, \bibinfo{person}{Darko
  Marinov}, {and} \bibinfo{person}{Gul Agha}.} \bibinfo{year}{2005}\natexlab{}.
\newblock \showarticletitle{{CUTE}: A concolic unit testing engine for {C}}. In
  \bibinfo{booktitle}{\emph{Proceedings of the 10th European software
  engineering conference held jointly with 13th ACM SIGSOFT international
  symposium on Foundations of software engineering}}. \bibinfo{publisher}{ACM},
  \bibinfo{address}{New York, NY, USA}, \bibinfo{pages}{263--272}.
\newblock


\bibitem[\protect\citeauthoryear{Singla, Kumar, Rai, and Singla}{Singla
  et~al\mbox{.}}{2011a}]%
        {Singla11}
\bibfield{author}{\bibinfo{person}{Sanjay Singla}, \bibinfo{person}{Dharminder
  Kumar}, \bibinfo{person}{H~M Rai}, {and} \bibinfo{person}{Priti Singla}.}
  \bibinfo{year}{2011}\natexlab{a}.
\newblock \showarticletitle{A Hybrid PSO Approach to Automate Test Data
  Generation for Data Flow Coverage with Dominance Concepts}.
\newblock \bibinfo{journal}{\emph{Journal of Advanced Science and Technology}}
  \bibinfo{volume}{37} (\bibinfo{year}{2011}), \bibinfo{pages}{15--26}.
\newblock


\bibitem[\protect\citeauthoryear{Singla, Singla, and Rai}{Singla
  et~al\mbox{.}}{2011b}]%
        {Singla11a}
\bibfield{author}{\bibinfo{person}{Sanjay Singla}, \bibinfo{person}{Priti
  Singla}, {and} \bibinfo{person}{H~M Rai}.} \bibinfo{year}{2011}\natexlab{b}.
\newblock \showarticletitle{An Automatic Test Data Generation for Data Flow
  Coverage Using Soft Computing Approach}.
\newblock \bibinfo{journal}{\emph{IJRRCS}} \bibinfo{volume}{2},
  \bibinfo{number}{2} (\bibinfo{year}{2011}), \bibinfo{pages}{265--270}.
\newblock


\bibitem[\protect\citeauthoryear{Su, Fu, Pu, He, and Su}{Su
  et~al\mbox{.}}{2015}]%
        {SuFPHS15}
\bibfield{author}{\bibinfo{person}{Ting Su}, \bibinfo{person}{Zhoulai Fu},
  \bibinfo{person}{Geguang Pu}, \bibinfo{person}{Jifeng He}, {and}
  \bibinfo{person}{Zhendong Su}.} \bibinfo{year}{2015}\natexlab{}.
\newblock \showarticletitle{Combining Symbolic Execution and Model Checking for
  Data Flow Testing}. In \bibinfo{booktitle}{\emph{37th {IEEE/ACM}
  International Conference on Software Engineering, {ICSE} 2015, Florence,
  Italy, May 16-24, 2015, Volume 1}}. \bibinfo{pages}{654--665}.
\newblock


\bibitem[\protect\citeauthoryear{Su, Pu, Fang, He, Yan, Jiang, and Zhao}{Su
  et~al\mbox{.}}{2014}]%
        {SuPFHYJZ14}
\bibfield{author}{\bibinfo{person}{Ting Su}, \bibinfo{person}{Geguang Pu},
  \bibinfo{person}{Bin Fang}, \bibinfo{person}{Jifeng He}, \bibinfo{person}{Jun
  Yan}, \bibinfo{person}{Siyuan Jiang}, {and} \bibinfo{person}{Jianjun Zhao}.}
  \bibinfo{year}{2014}\natexlab{}.
\newblock \showarticletitle{Automated Coverage-Driven Test Data Generation
  Using Dynamic Symbolic Execution}. In \bibinfo{booktitle}{\emph{Eighth
  International Conference on Software Security and Reliability, {SERE} 2014,
  San Francisco, California, USA, June 30 - July 2, 2014}}.
  \bibinfo{pages}{98--107}.
\newblock


\bibitem[\protect\citeauthoryear{Su, Pu, Miao, He, and Su}{Su
  et~al\mbox{.}}{2016}]%
        {SuPMHS16}
\bibfield{author}{\bibinfo{person}{Ting Su}, \bibinfo{person}{Geguang Pu},
  \bibinfo{person}{Weikai Miao}, \bibinfo{person}{Jifeng He}, {and}
  \bibinfo{person}{Zhendong Su}.} \bibinfo{year}{2016}\natexlab{}.
\newblock \showarticletitle{Automated coverage-driven testing: combining
  symbolic execution and model checking}.
\newblock \bibinfo{journal}{\emph{{SCIENCE} {CHINA} Information Sciences}}
  \bibinfo{volume}{59}, \bibinfo{number}{9} (\bibinfo{year}{2016}),
  \bibinfo{pages}{98101}.
\newblock


\bibitem[\protect\citeauthoryear{Su, Wu, Miao, Pu, He, Chen, and Su}{Su
  et~al\mbox{.}}{2017}]%
        {SuWMPHCS17}
\bibfield{author}{\bibinfo{person}{Ting Su}, \bibinfo{person}{Ke Wu},
  \bibinfo{person}{Weikai Miao}, \bibinfo{person}{Geguang Pu},
  \bibinfo{person}{Jifeng He}, \bibinfo{person}{Yuting Chen}, {and}
  \bibinfo{person}{Zhendong Su}.} \bibinfo{year}{2017}\natexlab{}.
\newblock \showarticletitle{A Survey on Data-Flow Testing}.
\newblock \bibinfo{journal}{\emph{{ACM} Comput. Surv.}} \bibinfo{volume}{50},
  \bibinfo{number}{1} (\bibinfo{year}{2017}), \bibinfo{pages}{5:1--5:35}.
\newblock


\bibitem[\protect\citeauthoryear{Sui and Xue}{Sui and Xue}{2016}]%
        {SuiX16}
\bibfield{author}{\bibinfo{person}{Yulei Sui} {and} \bibinfo{person}{Jingling
  Xue}.} \bibinfo{year}{2016}\natexlab{}.
\newblock \showarticletitle{{SVF:} interprocedural static value-flow analysis
  in {LLVM}}. In \bibinfo{booktitle}{\emph{Proceedings of the 25th
  International Conference on Compiler Construction, {CC} 2016, Barcelona,
  Spain, March 12-18, 2016}}. \bibinfo{pages}{265--266}.
\newblock


\bibitem[\protect\citeauthoryear{SV-COMP}{SV-COMP}{2019}]%
        {SV-COMP}
\bibfield{author}{\bibinfo{person}{SV-COMP}.} \bibinfo{year}{2019}\natexlab{}.
\newblock \bibinfo{title}{Competition on Software Verification (SV-COMP)}.
\newblock
\newblock
\urldef\tempurl%
\url{https://sv-comp.sosy-lab.org/2019/}
\showURL{%
Retrieved March 2019 from \tempurl}


\bibitem[\protect\citeauthoryear{team}{team}{2019a}]%
        {cyclo}
\bibfield{author}{\bibinfo{person}{The~Cyclo team}.}
  \bibinfo{year}{2019}\natexlab{a}.
\newblock \bibinfo{title}{{Cyclo}}.
\newblock
\newblock
\urldef\tempurl%
\url{http://www.gentoogeek.org/cyclo.html}
\showURL{%
Retrieved March 2019 from \tempurl}


\bibitem[\protect\citeauthoryear{team}{team}{2019b}]%
        {SIR}
\bibfield{author}{\bibinfo{person}{The~SIR team}.}
  \bibinfo{year}{2019}\natexlab{b}.
\newblock \bibinfo{title}{Software-artifact Infrastructure Repository}.
\newblock
\newblock
\urldef\tempurl%
\url{http://sir.unl.edu/php/previewfiles.php}
\showURL{%
Retrieved March 2019 from \tempurl}


\bibitem[\protect\citeauthoryear{Tillmann and de~Halleux}{Tillmann and
  de~Halleux}{2008}]%
        {PEX}
\bibfield{author}{\bibinfo{person}{Nikolai Tillmann} {and}
  \bibinfo{person}{Jonathan de Halleux}.} \bibinfo{year}{2008}\natexlab{}.
\newblock \showarticletitle{Pex-White Box Test Generation for .NET}. In
  \bibinfo{booktitle}{\emph{TAP}}. \bibinfo{pages}{134--153}.
\newblock


\bibitem[\protect\citeauthoryear{Vivanti, Mis, Gorla, and Fraser}{Vivanti
  et~al\mbox{.}}{2013}]%
        {VivantiMGF13}
\bibfield{author}{\bibinfo{person}{Mattia Vivanti}, \bibinfo{person}{Andre
  Mis}, \bibinfo{person}{Alessandra Gorla}, {and} \bibinfo{person}{Gordon
  Fraser}.} \bibinfo{year}{2013}\natexlab{}.
\newblock \showarticletitle{Search-based data-flow test generation}. In
  \bibinfo{booktitle}{\emph{{IEEE} 24th International Symposium on Software
  Reliability Engineering, {ISSRE} 2013, Pasadena, CA, USA, November 4-7,
  2013}}. \bibinfo{pages}{370--379}.
\newblock


\bibitem[\protect\citeauthoryear{Wang, Liu, Guan, Shen, Zheng, and Yang}{Wang
  et~al\mbox{.}}{2017}]%
        {WangLGSZY17}
\bibfield{author}{\bibinfo{person}{Haijun Wang}, \bibinfo{person}{Ting Liu},
  \bibinfo{person}{Xiaohong Guan}, \bibinfo{person}{Chao Shen},
  \bibinfo{person}{Qinghua Zheng}, {and} \bibinfo{person}{Zijiang Yang}.}
  \bibinfo{year}{2017}\natexlab{}.
\newblock \showarticletitle{Dependence Guided Symbolic Execution}.
\newblock \bibinfo{journal}{\emph{{IEEE} Trans. Software Eng.}}
  \bibinfo{volume}{43}, \bibinfo{number}{3} (\bibinfo{year}{2017}),
  \bibinfo{pages}{252--271}.
\newblock


\bibitem[\protect\citeauthoryear{Wang, Yu, Sun, Pu, Ding, and Hu}{Wang
  et~al\mbox{.}}{2009a}]%
        {WangYSPDH09}
\bibfield{author}{\bibinfo{person}{Zheng Wang}, \bibinfo{person}{Xiao Yu},
  \bibinfo{person}{Tao Sun}, \bibinfo{person}{Geguang Pu},
  \bibinfo{person}{Zuohua Ding}, {and} \bibinfo{person}{Jueliang Hu}.}
  \bibinfo{year}{2009}\natexlab{a}.
\newblock \showarticletitle{Test Data Generation for Derived Types in {C}
  Program}. In \bibinfo{booktitle}{\emph{{TASE} 2009, Third {IEEE}
  International Symposium on Theoretical Aspects of Software Engineering, 29-31
  July 2009, Tianjin, China}}. \bibinfo{pages}{155--162}.
\newblock


\bibitem[\protect\citeauthoryear{Wang, Yu, Sun, Pu, Ding, and Hu}{Wang
  et~al\mbox{.}}{2009b}]%
        {CAUT}
\bibfield{author}{\bibinfo{person}{Zheng Wang}, \bibinfo{person}{Xiao Yu},
  \bibinfo{person}{Tao Sun}, \bibinfo{person}{Geguang Pu},
  \bibinfo{person}{Zuohua Ding}, {and} \bibinfo{person}{Jueliang Hu}.}
  \bibinfo{year}{2009}\natexlab{b}.
\newblock \showarticletitle{Test Data Generation for Derived Types in {C}
  Program}. In \bibinfo{booktitle}{\emph{{TASE} 2009, Third {IEEE}
  International Symposium on Theoretical Aspects of Software Engineering, 29-31
  July 2009, Tianjin, China}}. \bibinfo{pages}{155--162}.
\newblock


\bibitem[\protect\citeauthoryear{Weyuker}{Weyuker}{1984}]%
        {Weyuker84}
\bibfield{author}{\bibinfo{person}{Elaine~J. Weyuker}.}
  \bibinfo{year}{1984}\natexlab{}.
\newblock \showarticletitle{The Complexity of Data Flow Criteria for Test Data
  Selection}.
\newblock \bibinfo{journal}{\emph{Inf. Process. Lett.}} \bibinfo{volume}{19},
  \bibinfo{number}{2} (\bibinfo{year}{1984}), \bibinfo{pages}{103--109}.
\newblock


\bibitem[\protect\citeauthoryear{Weyuker}{Weyuker}{1993}]%
        {Weyuker93}
\bibfield{author}{\bibinfo{person}{Elaine~J. Weyuker}.}
  \bibinfo{year}{1993}\natexlab{}.
\newblock \showarticletitle{More Experience with Data Flow Testing}.
\newblock \bibinfo{journal}{\emph{{IEEE} Trans. Software Eng.}}
  \bibinfo{volume}{19}, \bibinfo{number}{9} (\bibinfo{year}{1993}),
  \bibinfo{pages}{912--919}.
\newblock


\bibitem[\protect\citeauthoryear{Wong and Mathur}{Wong and Mathur}{1995}]%
        {WongM95}
\bibfield{author}{\bibinfo{person}{W.~Eric Wong} {and}
  \bibinfo{person}{Aditya~P. Mathur}.} \bibinfo{year}{1995}\natexlab{}.
\newblock \showarticletitle{Fault detection effectiveness of mutation and data
  flow testing}.
\newblock \bibinfo{journal}{\emph{Software Quality Journal}}
  \bibinfo{volume}{4}, \bibinfo{number}{1} (\bibinfo{year}{1995}),
  \bibinfo{pages}{69--83}.
\newblock


\bibitem[\protect\citeauthoryear{Xie, Tillmann, de~Halleux, and Schulte}{Xie
  et~al\mbox{.}}{2009}]%
        {XieTHS09}
\bibfield{author}{\bibinfo{person}{Tao Xie}, \bibinfo{person}{Nikolai
  Tillmann}, \bibinfo{person}{Jonathan de Halleux}, {and}
  \bibinfo{person}{Wolfram Schulte}.} \bibinfo{year}{2009}\natexlab{}.
\newblock \showarticletitle{Fitness-guided path exploration in dynamic symbolic
  execution}. In \bibinfo{booktitle}{\emph{Proceedings of the 2009 {IEEE/IFIP}
  International Conference on Dependable Systems and Networks, {DSN} 2009,
  Estoril, Lisbon, Portugal, June 29 - July 2, 2009}}.
  \bibinfo{pages}{359--368}.
\newblock


\bibitem[\protect\citeauthoryear{Yang, Li, and Weiss}{Yang
  et~al\mbox{.}}{2009}]%
        {yang2009survey}
\bibfield{author}{\bibinfo{person}{Qian Yang}, \bibinfo{person}{J~Jenny Li},
  {and} \bibinfo{person}{David~M Weiss}.} \bibinfo{year}{2009}\natexlab{}.
\newblock \showarticletitle{A survey of coverage-based testing tools}.
\newblock \bibinfo{journal}{\emph{Comput. J.}} \bibinfo{volume}{52},
  \bibinfo{number}{5} (\bibinfo{year}{2009}), \bibinfo{pages}{589--597}.
\newblock


\bibitem[\protect\citeauthoryear{Zamfir and Candea}{Zamfir and Candea}{2010}]%
        {ZamfirC10}
\bibfield{author}{\bibinfo{person}{Cristian Zamfir} {and}
  \bibinfo{person}{George Candea}.} \bibinfo{year}{2010}\natexlab{}.
\newblock \showarticletitle{Execution synthesis: a technique for automated
  software debugging}. In \bibinfo{booktitle}{\emph{European Conference on
  Computer Systems, Proceedings of the 5th European conference on Computer
  systems, EuroSys 2010, Paris, France, April 13-16, 2010}}.
  \bibinfo{pages}{321--334}.
\newblock


\bibitem[\protect\citeauthoryear{Zhang, Yan, Zhou, Yao, Wu, Su, Miao, and
  Pu}{Zhang et~al\mbox{.}}{2018}]%
        {Zhang2018}
\bibfield{author}{\bibinfo{person}{Chengyu Zhang}, \bibinfo{person}{Yichen
  Yan}, \bibinfo{person}{Hanru Zhou}, \bibinfo{person}{Yinbo Yao},
  \bibinfo{person}{Ke Wu}, \bibinfo{person}{Ting Su}, \bibinfo{person}{Weikai
  Miao}, {and} \bibinfo{person}{Geguang Pu}.} \bibinfo{year}{2018}\natexlab{}.
\newblock \showarticletitle{SmartUnit: Empirical Evaluations for Automated Unit
  Testing of Embedded Software in Industry}. In \bibinfo{booktitle}{\emph{40th
  {IEEE/ACM} International Conference on Software Engineering, Sofware
  Engineering in Practice Track, {ICSE} 2018, Gothenburg, Sweden, May 27-June
  3}}.
\newblock


\bibitem[\protect\citeauthoryear{Zhang, Xie, Zhang, Tillmann, de~Halleux, and
  Mei}{Zhang et~al\mbox{.}}{2010}]%
        {ZhangXZTHM10}
\bibfield{author}{\bibinfo{person}{Lingming Zhang}, \bibinfo{person}{Tao Xie},
  \bibinfo{person}{Lu Zhang}, \bibinfo{person}{Nikolai Tillmann},
  \bibinfo{person}{Jonathan de Halleux}, {and} \bibinfo{person}{Hong Mei}.}
  \bibinfo{year}{2010}\natexlab{}.
\newblock \showarticletitle{Test generation via Dynamic Symbolic Execution for
  mutation testing}. In \bibinfo{booktitle}{\emph{26th {IEEE} International
  Conference on Software Maintenance {(ICSM} 2010), September 12-18, 2010,
  Timisoara, Romania}}. \bibinfo{pages}{1--10}.
\newblock


\bibitem[\protect\citeauthoryear{Zhu, Hall, and May}{Zhu et~al\mbox{.}}{1997}]%
        {Zhu97}
\bibfield{author}{\bibinfo{person}{Hong Zhu}, \bibinfo{person}{Patrick A.~V.
  Hall}, {and} \bibinfo{person}{John H.~R. May}.}
  \bibinfo{year}{1997}\natexlab{}.
\newblock \showarticletitle{Software Unit Test Coverage and Adequacy}.
\newblock \bibinfo{journal}{\emph{ACM Comput. Surv.}} \bibinfo{volume}{29},
  \bibinfo{number}{4} (\bibinfo{date}{Dec.} \bibinfo{year}{1997}),
  \bibinfo{pages}{366--427}.
\newblock
\showISSN{0360-0300}


\end{thebibliography}

\end{document}